\definecolor{Red}{rgb}{1,0,0}
\definecolor{Blue}{rgb}{0,0,1}
\definecolor{Olive}{rgb}{0.41,0.55,0.13}
\definecolor{Green}{rgb}{0,1,0}
\definecolor{MGreen}{rgb}{0,0.8,0}
\definecolor{DGreen}{rgb}{0,0.55,0}
\definecolor{Xellow}{rgb}{1,1,0}
\definecolor{Cyan}{rgb}{0,1,1}
\definecolor{Magenta}{rgb}{1,0,1}
\definecolor{Orange}{rgb}{1,.5,0}
\definecolor{Violet}{rgb}{.5,0,.5}
\definecolor{Purple}{rgb}{.75,0,.25}
\definecolor{Brown}{rgb}{.75,.5,.25}
\definecolor{Grey}{rgb}{.5,.5,.5}
\theoremstyle{plain}
\newtheorem{theorem}{Theorem}
\newtheorem{corollary}{Corollary}[section]
\newtheorem{claim}{Claim}[section]
\newtheorem{lemma}{Lemma}[section]
\newtheorem{theorem*}{Theorem}   % no section numbers
\newtheorem{lemma*}{Lemma} % no section numbers
\newtheorem{corollary*}{Corollary} % no section numbers
\newtheorem*{remark*}{Remark}
\newtheorem{remark}{Remark}[section]
\theoremstyle{definition}
\newtheorem{definition}{Definition}
\def\cF{{\cal F}}
\def\cK{{\cal K}}
\def\cN{{\cal N}}
\def\cP{{\cal P}}
\def\cU{{\cal U}}
\newcommand{\real}{\ensuremath{\mathbb{R}}}
\newcommand{\defn}{\ensuremath{:  =}}
\newcommand{\E}{\ensuremath{\mathbb{E}}}
\newcommand{\ind}{\ensuremath{\perp \!\!\! \perp}}
\newcommand{\rt}{\ensuremath{\rightarrow}}
\newcommand{\norm}[1]{\left\lVert#1\right\rVert}
\newcommand{\abs}[1]{\left\lvert#1\right\rvert}
\newcommand{\tilb}{\tilde{b}}
\newcommand{\tilSigma}{\tilde{\Sigma}}
\newcommand{\hatSigma}{\hat{\Sigma}}
\newcommand{\tilV}{\tilde{V}}
\newcommand{\hatV}{\hat{V}}
\newcommand{\mbbR}{\mathbb{R}}
\newcommand{\mbfr}{\mathbf{r}}
\newcommand{\mcN}{\mathcal{N}}
\newcommand{\mdim}{\mbox{dim}}
\newcommand{\mspan}{\mbox{span}}
\newif\ifvacomments
\title{Unifying the Brascamp-Lieb Inequality and the Entropy Power Inequality }
\author[ \hspace{-1ex}]{Venkat Anantharam\thanks{Department of Electrical Engineering and Computer Sciences, UC Berkeley. Email: ananth@eecs.berkeley.edu}}
\author[ \hspace{-1ex}]{Varun Jog \thanks{Department of Pure Mathematics and Mathematical Statistics, University of Cambridge. Email: vj270@cam.ac.uk}}
\author[ \hspace{-1ex}]{Chandra Nair\thanks{Department of Information Engineering Engineering, CUHK. Email: chandra@ie.cuhk.edu.hk}}
\affil[ ]{}
\date{}
\begin{document}
\maketitle
\vspace{-1.5cm}

\begin{abstract}
The entropy power inequality (EPI) and the Brascamp-Lieb inequality (BLI) are fundamental inequalities concerning the differential entropies of linear transformations of random vectors.
The EPI provides lower bounds for the differential entropy of linear transformations of random vectors with independent components. The BLI, on the other hand, provides upper bounds on the differential entropy of a random vector in terms of the differential entropies of some of its linear transformations. In this paper, we define a 
family of entropy functionals, which we show are subadditive. We then establish that Gaussians are extremal for these functionals by mimicking the idea in Geng and Nair (2014). As a consequence, we obtain a new entropy inequality that generalizes both the BLI and EPI. By considering a variety of independence relations among the components of the random vectors appearing in these functionals, we also obtain families of inequalities that lie between the EPI and the BLI.\footnote{A version of this paper appeared in the Proceedings of the IEEE International Symposium on Information Theory, 2019.}
\end{abstract}

\section{Introduction}
Information inequalities provide some of the most powerful mathematical tools in an information theorist's toolbox and are therefore a vital part of information theory. Inequalities such as the non-negativity of mutual information and the data processing inequality are so fundamental to information theory that they are inseparable from information-theoretic notation. These basic inequalities, combined with Fano's inequality, are powerful enough to yield the converse of Shannon's channel coding theorem. For harder problems in network information theory, it is necessary to develop more nuanced information inequalities. Not surprisingly, it is often the case that discovering new inequalities leads to  breakthroughs in network information theory problems. Some examples of information inequalities that spurred such breakthroughs include the entropy power inequality \cite{Sha48, Bla65}, numerous strengthened forms of the entropy power inequality \cite{Cos85, ZamFed93, Cou18}, strong data processing inequalities \cite{PolWu17}, and inequalities that established certain continuity properties of entropy \cite{PolWu16}. 

On a related note, ``single-letter characterizations" of a capacity region or outer bounds to a capacity region in network information theory are induced by subadditive functionals that reduce the characterization of the region to one governed by a single channel use. In this paper, we identify a new functional that is sub-additive and for which Gaussian distributions are extremal. Consequently, we obtain a new class of information inequalities that unifies two fundamental inequalities: the entropy power inequality (EPI) and the Brascamp-Lieb inequality (BLI). In what follows, we provide a brief introduction to the EPI and the BLI and state our main results.

As notational conventions in what follows, $:=$ and $=:$ denote equality by definition depending on whether the expression being defined is on the left or on the right respectively, while, for an integer $n > 0$, $[n]$ denotes $\{1, \ldots, n\}$ and
$I_{n \times n}$ denotes the $n \times n$ identity matrix.
We use the notation $|A|$ for the determinant of a square matrix $A$. We use the term ``entropy'' as synonymous with ``differential entropy'' in this document. All vectors are assumed to be column vectors, and we will adopt the convention that if $X$ is 
an
$\real^k$-valued vector and $Y$ is 
an
$\real^l$-valued vector, then 
$(X,Y)$ denotes the $\real^{k+l}$-valued vector that would normally be written as $(X^T, Y^T)^T$. Given a random vector $(Z_1, \ldots, Z_n)$, we use the notation $Z_{a:b}$ to denote the random vector $(Z_a, Z_{a+1}, \dots, Z_b)$, where $1 \le a \le b \le n$.
The notation $X \rt U \rt Y$ for random vectors $X$, $U$, and $Y$ indicates that $X$ and $Y$ are conditionally independent given $U$.

\paragraph{Entropy power inequality:} 
The EPI states that for any independent $\real^n$-valued random variables $X$ and $Y$, the following inequality holds:
\begin{equation}\label{eq: vanillaEPI}
e^{\frac{2h(X+Y)}{n}} \geq e^{\frac{2h(X)}{n}} + e^{\frac{2h(Y)}{n}}.
\end{equation}
Here, $h(\cdot)$ refers to the differential entropy function and all the differential entropies in equation \eqref{eq: vanillaEPI} are assumed to exist. Equality holds if and only if $X$ and $Y$ are Gaussian random variables with proportional covariance matrices. The EPI was proposed by Shannon \cite{Sha48} and was first proved by Stam \cite{Sta59}. This proof was later simplified by Blachman \cite{Bla65}. A variety of simple and ingenious proofs have been discovered since; see Rioul~\cite{Rio11} for a discussion. 

The EPI has an equivalent formulation due to Lieb~\cite{Lie02b} which is that for $\lambda \in (0,1)$ we have:

\begin{equation}\label{eq: liebform}
h(\sqrt \lambda X + \sqrt{1-\lambda} Y) \geq \lambda h(X) + (1-\lambda) h(Y).
\end{equation}
Equality holds in the above inequality if and only if $X$ and $Y$ are Gaussian random variables with identical covariance matrices. Note that $\sqrt \lambda X + \sqrt{1-\lambda}Y$ may be interpreted as a linear transformation of an $\real^{2n}$-valued random variable $Z \defn (X, Y)$ with some independence constraints on the components of $Z$, namely $X \ind Y$. Another result along such lines is Zamir and Feder's EPI~\cite{ZamFed93} for linear transformations of random vectors with independent components. This EPI has an equivalent formulation, discovered in \cite{Rio11, ZamFed93b}, that is analogous to the one in equation \eqref{eq: liebform}: For an $\real^n$-valued random vector $X \defn (X_1, \dots, X_n)$ with independent scalar components and any $k \times n$ matrix $A$ satisfying $AA^T = I_k$, we have
\begin{equation}\label{eq: zamfed}
h(AX) \geq \sum_{j=1}^n \alpha_j^2 h(X_j),
\end{equation}
where $\alpha_j^2$ is the squared-norm of the $j$-th column of $A$; i.e., $\alpha_j^2 := \sum_{i=1}^k a_{ij}^2$. 

\paragraph{Brascamp-Lieb inequality:} 

The BLI \cite{BraLie76} is actually a family of functional inequalities that lies, in some sense, at the intersection of information and functional inequalities. Many well-known and commonly used inequalities are special cases of the BLI, including H\"{o}lder's inequality, the Loomis-Whitney inequality, the Pr{\'e}kopa-Leindler inequality, and sharp forms of Young's convolution inequalities \cite{BenEtal08}. In Gardner's extensive survey \cite{Gar02}, the author describes relationships between popular functional and information inequalities using a pyramid-like sketch, where inequalities at the top imply those below. The BLI and its reverse lie at the very apex of this inequality pyramid. A simple statement of the BLI is as follows:
\begin{theorem}[Functional form of the BLI]\label{thm: BL1}
For $j \in [m]$, let $E$, $E_j$ be Euclidean spaces, $A_j: E \to E_j$ be linear maps, $c_j$ be positive real numbers, and $f_j$ be nonnegative integrable functions on $E_j$. Define the function $\cF$ via 
\begin{equation*}
\cF(f_1, \dots, f_m) := \frac{\int_{E} \prod_{j=1}^m f_j^{c_j}(A_jx)dx}{\prod_{j=1}^m \left( \int_{E_j} f_j(x_j) dx_j \right)^{c_j}}.
\end{equation*}
Then the supremum of $\cF$ over all  nonnegative and integrable $f_j$ is equal to the supremum of $\cF$ when $f_j$ are centered Gaussian functions; i.e., for all $j \in [m]$, we have $f_j(x_j) \propto e^{-x_j^T B_j x_j}$ for some positive semidefinite $B_j$.
\end{theorem}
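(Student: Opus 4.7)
The plan is to follow Lieb's original doubling-trick strategy, which is also the technique the paper identifies as its main technical device. By homogeneity of $\mathcal{F}$ in each $f_j$, I would first normalize so that $\int_{\mathbb{R}} f_j(t)\, dt = 1$, reducing the problem to showing that the supremum of the numerator of $\mathcal{F}$ over probability densities is achieved in the limit by centered Gaussians. A preliminary scaling analysis in the $x$ variable rules out degenerate cases: unless a certain balance condition on the data $(c_j, a_j)$ holds on every subspace of $\mathbb{R}^n$, both sides of the claimed equality equal the same degenerate value ($0$ or $+\infty$), so I may reduce to the regime in which the general-function supremum is finite.

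For the doubling step, given any tuple $(f_j)$, the square of the numerator of $\mathcal{F}$ factors as
\begin{equation*}
\Bigl( \int_{\mathbb{R}^n} \prod_{j=1}^m f_j^{c_j}(a_j \cdot x)\, dx \Bigr)^{2} \;=\; \int_{\mathbb{R}^{2n}} \prod_{j=1}^m \bigl[f_j(a_j \cdot x)\, f_j(a_j \cdot y)\bigr]^{c_j}\, dx\, dy .
\end{equation*}
I would then apply the measure-preserving rotation $(x,y) \mapsto ((u+v)/\sqrt{2}, (u-v)/\sqrt{2})$, under which $(a_j\cdot x, a_j\cdot y)$ becomes $((a_j\cdot u + a_j\cdot v)/\sqrt{2},\, (a_j\cdot u - a_j\cdot v)/\sqrt{2})$, and bound the rotated integrand pointwise in $u$ by a product of one-dimensional symmetrized functions $\tilde f_j$ evaluated at $a_j \cdot u$, via a Cauchy-Schwarz (or Young-type) estimate in $v$. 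Here $\tilde f_j$ is, up to rescaling, the self-convolution of $f_j$ restricted to a line. The outcome is an inequality of the form $\mathcal{F}(f) \le \mathcal{F}(\tilde f)$, and crucially centered Gaussians are precisely the fixed points of the symmetrization map.

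Iterating the symmetrization produces a sequence $(f_j^{(k)})_{k \ge 0}$ with $\mathcal{F}(f^{(k)})$ non-decreasing in $k$. Since each step is essentially a central-limit-theorem convolution acting on 1D densities, the iterates (rescaled to unit variance) converge weakly to centered Gaussians, and passing to the limit yields $\mathcal{F}(f) \le \sup_{\text{Gaussian}} \mathcal{F}$. The reverse inequality is trivial since Gaussians are admissible $f_j$'s.

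The main obstacle I expect is the limiting step: one must show that the iterated symmetrization drives each $f_j^{(k)}$ to a Gaussian in a topology strong enough to pass to the limit inside $\mathcal{F}$, which requires uniform moment and tightness control and is typically handled by a Fourier/characteristic-function argument mirroring the proof of the CLT. A secondary subtlety is executing the Cauchy-Schwarz step in the doubling inequality with the exponents $c_j$ aligned correctly, so that the symmetrized product genuinely reconstructs an instance of the same functional $\mathcal{F}$ rather than some unrelated variant. If these analytic issues prove too delicate, an alternative would be the heat-flow monotonicity approach of Carlen-Lieb-Loss, in which one evolves each $f_j$ under the 1D heat semigroup and verifies directly that $\mathcal{F}$ is non-increasing along the flow, with the $t\to\infty$ limit again reducing extremality to the Gaussian subclass.
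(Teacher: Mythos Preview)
The paper does not give a standalone proof of Theorem~\ref{thm: BL1}; it is quoted as a classical result of Brascamp and Lieb. To the extent the paper recovers it, this is only indirectly: Theorem~\ref{thm: EPI+BL} is proved in full, specialized to $k=1$, $d_1=1$ to yield the entropic form (Theorem~\ref{thm: BL2}), and then the cited equivalence of \cite{CarEtal09} between the functional and entropic forms gives Theorem~\ref{thm: BL1}. That route is entirely on the entropy side: one introduces the Gaussian-perturbed functional $s_{\epsilon,\delta}$ and its concave envelope $S_{\epsilon,\delta}$, proves subadditivity via the two chain-rule expansions (the ``(A)/(B)'' argument of Section~\ref{section: tensorization}), and then invokes the Ghurye--Olkin Gaussian characterization. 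No step touches $\mathcal F$ directly or uses a Cauchy--Schwarz/H\"older estimate on the functional.

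Your proposed functional doubling has a genuine gap, and it sits precisely where you label it a ``secondary subtlety''; it is in fact the central obstruction. After the rotation, for each fixed $u$ you must control
\[
\int_{\mathbb{R}^n} \prod_{j=1}^m \Bigl[\,f_j\Bigl(\tfrac{a_j\cdot u + a_j\cdot v}{\sqrt 2}\Bigr)\, f_j\Bigl(\tfrac{a_j\cdot u - a_j\cdot v}{\sqrt 2}\Bigr)\Bigr]^{c_j} dv .
\]
The factors are coupled through the common variable $v$ via the linear forms $a_j\cdot v$, so no Cauchy--Schwarz or Young inequality separates them into a product $\prod_j \tilde f_j(a_j\cdot u)^{c_j}$. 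Indeed, bounding this $v$-integral by $C\prod_j(\int \phi_j)^{c_j}$ is itself an instance of the Brascamp--Lieb inequality with the same datum $(a_j,c_j)$, so the step is circular. Lieb's doubling for Gaussian kernels and Ball's induction-on-dimension argument both exploit additional structure that is absent in the general rank-one BLI. The heat-flow alternative you mention (Carlen--Lieb--Loss, Bennett--Carbery--Christ--Tao) does give a valid proof, but it is neither the doubling trick nor what the paper does; the paper's doubling operates on entropies of random vectors and concave envelopes over auxiliary variables, where the chain rule replaces the missing functional inequality.
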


Surprisingly, a direct connection exists between the functional form of the BLI and a generalized subadditivity result for entropy. This link was first discovered in Carlen, Lieb, and Loss~\cite{CarEtal04}, and has since led to newer proofs and generalizations of the original BLI \cite{CarEtal09, BarEtal06, CorLed10, Nai14, LiuEtal18}. The information-theoretic form of the BLI is the following:
\begin{theorem}[Information-theoretic form of the BLI,  Theorem 2.1 in Carlen and Cordero-Erausqin~\cite{CarEtal09}]\label{thm: BL2}
For $i \in [m]$, let $E$, $E_i$, $A_i$, and $c_i$ be as in Theorem \ref{thm: BL1}. For a random variable $X$ on $E$ with a well-defined differential entropy (see Definition \ref{def: h}) and
satisfying $E[\| X\|_2]^2 < \infty$, define $f(X)$ as
\begin{equation}\label{eq: BL2}
f(X) := h(X) - \sum_{j=1}^m c_j h(A_j X). 
\end{equation}
Then the supremum of $f$ over all such random variables $X$ is equal to the supremum of $f$ over all Gaussian random variables.
\end{theorem}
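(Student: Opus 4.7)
The plan is to derive Theorem \ref{thm: BL2} from the functional form, Theorem \ref{thm: BL1}, via a single application of the nonnegativity of relative entropy. Given an arbitrary $\real^n$-valued random vector $X$ with density $p$ and $\E\norm{X}^2 < \infty$, let $p_j$ denote the density of $a_j \cdot X$ on $\real$; each $p_j$ is a probability density. I would then form the tilted density
\begin{equation*}
q(x) \defn \frac{1}{Z}\prod_{j=1}^m p_j(a_j \cdot x)^{c_j}, \qquad Z \defn \int_{\real^n} \prod_{j=1}^m p_j(a_j \cdot x)^{c_j}\, dx.
\end{equation*}
Applying Theorem \ref{thm: BL1} with $f_j = p_j$ (each of which integrates to $1$) yields $Z \leq C_G$, where $C_G$ is the Gaussian supremum of $\cF$, so $Z$ is finite whenever $C_G$ is, and $q$ is then a bona fide density.

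Next, I would unpack $D(p\|q) \geq 0$. Using the identity $\int p(x) \log p_j(a_j \cdot x)\, dx = -h(a_j \cdot X)$, obtained by integrating $p$ first over the fibers of $x \mapsto a_j \cdot x$, the KL inequality rearranges to
\begin{equation*}
f(X) = h(X) - \sum_{j=1}^m c_j\, h(a_j \cdot X) \leq \log Z \leq \log C_G.
\end{equation*}
Since the reverse direction $\sup_{X \text{ Gaussian}} f(X) \leq \sup_X f(X)$ is trivial, the remaining task is to identify $\log C_G$ with $\sup_{X \text{ Gaussian}} f(X)$.

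For this identification I would exhibit a centered Gaussian $X^*$ for which both inequalities above are simultaneously tight. The KL step is tight exactly when $p = q$, which for a centered Gaussian $X^*$ with covariance $\Sigma$ reduces to the fixed-point equation
\begin{equation*}
\Sigma^{-1} = \sum_{j=1}^m c_j\, \frac{a_j a_j^T}{a_j^T \Sigma a_j}.
\end{equation*}
Whenever Theorem \ref{thm: BL1} admits a Gaussian extremizer, a positive definite $\Sigma$ solving this equation exists, and a direct Gaussian-integral computation then shows $f(X^*) = \log Z^* = \log C_G$.

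The KL/functional-BLI chain itself is short; the hardest part will be this last identification step, which requires matching the two notions of ``Gaussian extremizer'' (marginal-by-marginal on $\real$ versus joint on $\real^n$), handling the case in which the Gaussian supremum is attained only in a limit, and handling the scaling-homogeneity degeneracy (when $\sum_j c_j \neq n$, both suprema equal $+\infty$ and the conclusion is automatic).
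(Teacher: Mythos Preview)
Your approach is the classical duality argument of Carlen--Cordero-Erausquin (indeed, the paper simply \emph{cites} Theorem~\ref{thm: BL2} from \cite{CarEtal09}, where essentially this proof appears). The chain $f(X)\le \log Z\le \log C_G$ via $D(p\|q)\ge 0$ and the functional BLI is correct, and your assessment that the real content lies in the identification $\sup_{X~\text{Gaussian}} f(X)=\log C_G$ is accurate. One remark on that step: you phrase it as requiring a fixed-point $\Sigma$, i.e.\ an extremizer, but in fact the identification can be done without assuming extremizers exist, by showing directly that the two finite-dimensional Gaussian optimizations are in correspondence (once the scaling condition $\sum_j c_j=n$ is in force, both reduce to the same log-determinant problem). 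Your fixed-point equation is the common first-order condition.

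This is, however, a genuinely different route from the one the paper takes. The paper does not deduce Theorem~\ref{thm: BL2} from Theorem~\ref{thm: BL1} at all. Instead, it proves the more general Theorem~\ref{thm: EPI+BL} directly on the entropic side via the doubling trick (the subadditivity Lemma~\ref{lemma: EPI+BL+sub} together with the Geng--Nair machinery), and then recovers Theorem~\ref{thm: BL2} as the special case $k=1$, $r_1=n$, $d_1=1$. So your argument imports the functional Brascamp--Lieb inequality as a black box and gets the entropic version in a few lines, whereas the paper's argument is self-contained on the entropic side, never touches the functional formulation, and in return yields the unified BLI--EPI statement (with block-independence constraints on $X$) that is the whole point of the paper. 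Your approach is shorter for this particular statement; theirs buys the generalization.
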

This information-theoretic form is completely equivalent to the functional form: For a fixed choice of the $a_j$ and the $c_j$, the supremums in both problems have a direct relationship and the cases of equality are also in correspondence \cite[Theorem 2.1]{CarEtal09}. A defining feature of the BLI is that it reduces an infinite-dimensional optimization problem to a finite-dimensional optimization problem over a set of positive definite matrices. When the supremum in Theorem \ref{thm: BL2} is finite, random variables that achieve the supremum are called \emph{extremizers}, and Gaussian random variables that achieve the supremum are called \emph{Gaussian extremizers}. 
\footnote{
In \cite{BenEtal08}
a Gaussian extremizer is defined as 
a distribution that extremizes among the class of Gaussian distributions, but it turns out that this definition is identical to the one used here.
}
The existence of extremizers or Gaussian extremizers and the finiteness of $D$ are not addressed by Theorem \ref{thm: BL2}, as stated above. However, this is well-understood in the literature \cite{Bar98, CarEtal09, BenEtal08}. 

\paragraph{Our contributions:}
The classical EPI and the EPI of Zamir and Feder are valid only under certain independence assumptions. To be precise, for an $\real^{2n}$-valued random vector $Z$, the EPI requires independence of $Z_{1:n}$ and $Z_{n+1:n}$ and considers the sum of these two vectors, whereas Zamir and Feder's EPI requires all the components to be independent and considers linear transformations of $Z$. It is natural to consider more general ``mixed" independence constraints, for instance, independence  of $Z_{1:k_1}, Z_{k_1+1:k_2}, \dots, Z_{k_r+1:n}$ for suitable choices of $k_i$, and establish lower bounds on $h(AZ)$ for a matrix $A$. This is indeed a special case of the setting considered in our work.

Consider an $\real^n$-valued random vector $X := (X_1, \dots, X_k)$, where $k \leq n$ and $X_i$ are mutually independent $\real^{r_i}$-valued random variables. Note that $\sum_{i=1}^k r_i = n$. We consider the following function:
\begin{equation}\label{eq: f_unified}
f(X) \defn \sum_{i=1}^k d_i h(X_i) - \sum_{j=1}^m c_j h(A_j X),
\end{equation}
for positive constants $d_i$ and $c_j$ where $i \in [k]$ and $j \in [m]$ for some $m \geq 1$, and surjective linear transformations $A_j$ from $\real^n$ to $\real^{n_j}$. Just as in Theorem \ref{thm: BL2}, our main result in Theorem \ref{thm: EPI+BL} states that the supremum of $f(\cdot)$ over all random variables $X$ satisfying the stated independence constraints is the same as the supremum evaluated over Gaussian random variables. In Theorem \ref{thm: finite}, we identify necessary and sufficient conditions on $n$, $k$, $m$
and the $r_i$, $d_i$, $c_j$, $n_j$ and $A_j$, such that this supremum is finite. We show that the EPI, BLI, and Zamir and Feder's EPI easily follow from Theorem \ref{thm: EPI+BL}. Theorem \ref{thm: EPI+BL} also provides a generalization of Zamir and Feder's result for certain kinds of dependent random variables.

Our main technical contribution is identifying new entropic functionals and proving that they satisfy a certain subadditivity property. The work of Geng and one of the authors~\cite{GengNair14} highlighted the critical role played by subadditivity in information inequalities. How subadditivity of information theoretic functionals---which is established using the chain rule and data processing relations---can be used to determine the capacity of the Gaussian vector broadcast channel was demonstrated in that work. Once subadditivity is ascertained, a technique from functional analysis called the ``doubling trick'' may be used to establish Gaussian optimality. The doubling trick, attributed to Ball \cite{Bar98b},  appeared in Lieb~\cite{Lie02} to prove that Gaussian kernels have Gaussian optimizers, and in Carlen~\cite{Car91} to show Gaussian optimality in the log-Sobolev inequality. Subadditivity followed by the doubling trick has been used to prove numerous information inequalities in recent years~\cite{CouJia14, KimEtal16, Gol16,  YangEtal17, ZhaEtal18, Cou18}.

\paragraph{Related work:} 

The EPI may be thought of as a limiting special case of the BLI. Gardner~\cite{Gar02} showed that the EPI follows from the sharp form of Young's inequality, which in turn is a special case of the BLI. This proof strategy is further clarified using a more geometric approach by Cordero-Erausquin and Ledoux \cite{CorLed10}. The authors of~\cite{CorLed10} establish the EPI directly from Theorem \ref{thm: BL2} by carefully choosing the $a_j$ and $c_j$ as a function of a parameter $\epsilon$ that tends to 0 and yields the EPI in the limit. While these are intriguing connections, they do not suggest concrete approaches for developing information inequalities for random vectors under more general independence constraints.

Various information-theoretic analogues of hypercontractive inequalities and reverse Brascamp-Lieb inequalities in finite alphabet spaces have been studied in \cite{AhlGac76, Nai14,BeiNai16}. A closely related work is that of Liu et al. \cite{LiuEtal18}, where a novel functional inequality called the forward-reverse Brascamp-Lieb inequality
is formulated, and it is shown that there exists an analogous information-theoretic version of this inequality. Most relevant to us is the forward-reverse Brascamp-Lieb inequality with linear maps that was introduced in Liu et al.~\cite{LiuEtal18}. Define a function $F$ of the marginal densities of an $\real^n$-valued random variable $X$:
\begin{equation}\label{eq: liu}
F(X_1, \dots, X_n) := \inf_{\{Y \mid Y_i \stackrel{d}= X_i,~ i \in [n]\}} \sum_{i=1}^n d_i h(Y_i) - \sum_{j=1}^m c_j h(A_j Y).
\end{equation}
Here, by $Y_i \stackrel{d}= X_i$ we mean that the distribution of $Y_i$ is identical to that of $X_i$. Theorem 8 in \cite{LiuEtal18} states that the supremum of $F$ is obtained when each $X_i$ is a centered Gaussian random variable, in which case the infimum in the definition in equation \eqref{eq: liu} is attained when the optimal coupling $Y$ is a jointly Gaussian random vector. The expressions in equations \eqref{eq: f_unified} and \eqref{eq: liu} look very similar. The main difference is that equation \eqref{eq: liu}  has an infimum over all possible couplings $Y$, whereas our definition in equation \eqref{eq: f_unified} enforces the unique coupling where the components $Y_i$ are mutually independent.\\

\paragraph{Structure of the paper: }

In Section \ref{section: prelim}, we introduce some preliminaries and set up the notation to be used in the rest of the paper. In Section \ref{section: main} we state our main result in Theorem \ref{thm: EPI+BL} and show that the EPI, BLI, and Zamir and Feder's EPI may be proved as special cases of this result.
In Section \ref{section: tp}, we prove Theorem \ref{thm: EPI+BL}. In Section \ref{section: finite}, we establish necessary and sufficient conditions for the supremum of $f$ in the expression in equation \eqref{eq: f_unified}  to be finite. In Section \ref{section: special}, we provide a concrete example that demonstrates the utility of Theorem \ref{thm: EPI+BL} in obtaining EPI-like results for dependent random variables. Finally, in Section \ref{section: end} we conclude the paper and describe some open problems.

\section{Preliminaries and notation}\label{section: prelim}
\begin{definition}\label{def: h}
For $n > 0$, let $X$ be an $\real^n$-valued random variable with density $f_X$ that lies in the convex set of probability densities
\begin{align}\label{eq: CarEtal_condition}
\left\{f ~\Big|~ \int_{\real^n} f(x) \log (1+f(x)) dx < \infty \right\}.
\end{align}
Then we define the entropy of $X$ as
\begin{equation}\label{eq: defh}
h(X) := -\int_{\real^n} f_X(x)\log f_X(x) dx.
\end{equation}
The entropy of a $0$-dimensional random variable is defined to be 0. 

\begin{remark}  \label{rem:inherit}
The integral in equation \eqref{eq: CarEtal_condition} is well-defined since the integrand is non-negative. The condition in equation \eqref{eq: CarEtal_condition} implies that the differential entropy integral 
in equation \eqref{eq: defh} is well-defined and lower-bounded away from $-\infty$. Also note that the condition in equation \eqref{eq: CarEtal_condition} is inherited by marginalization, i.e. if $f$ satisfies the condition and $g$ is a (multidimensional) marginal of $f$, then $g$ also satisfies the condition.
\end{remark}
\end{definition}
\begin{definition}
[BL datum]
For an integer $m > 0$, define an $m$-transformation as a triple 
$$\mathbf A \defn (n, \{n_j\}_{j \in [m]},  \{A_j\}_{j \in [m]}),$$
where for each $j \in [m]$, $A_j: \real^{n} \to \real^{n_j}$ is a surjective linear transformation, and $n_j \ge 0$.
An $m$-exponent is defined as an $m$-tuple $\mathbf c = \{c_j\}_{j \in [m]}$, such that $c_j \ge 0$ for $j \in [m]$. A Brascamp-Lieb datum (BL datum) is defined as a pair $(\mathbf A, \mathbf c)$ where $\mathbf A$ is an $m$-transformation and $\mathbf c$ is an $m$-exponent, for an integer $m > 0$. 
\end{definition}

\begin{definition}[EPI datum]
For an integer $k > 0$, define a $k$-partition of $n$ as $\mathbf r = \{r_i\}_{i \in [k]},$
such that $r_i > 0$ are integers and $\sum_{i \in [k]} r_i = n$. 
Let $\mathbf d = \{d_i\}_{i \in [k]}$ such that $d_i \geq 0$ for all $i$
be a $k$-exponent. An EPI datum is a pair $(\mathbf r, \mathbf d)$ where $\mathbf r$ is a $k$-partition and $\mathbf d$ is a $k$-exponent, for an integer $k > 0$.
\end{definition}

\begin{definition}[BL-EPI datum]
For an integer $n > 0$, a BL-EPI datum is defined as $(\mathbf A, \mathbf c, \mathbf r, \mathbf d)$ where $(\mathbf A, \mathbf c)$ is a BL datum for an integer $m > 0$, and $(\mathbf r, \mathbf d)$ is an EPI datum for an integer $k > 0$. 
\end{definition}

\begin{definition}\label{def: pg}
Let $(\mathbf A, \mathbf c, \mathbf r, \mathbf d)$ be a BL-EPI datum where $\mathbf r$ is a $k$-partition of $n$. Define $\cP(\mathbf r)$ to be the set of all $\real^n$-valued random vectors $X \defn (X_1, X_2, \dots, X_k)$ such that:
\begin{enumerate}
\item
For $i \in [k]$, the random vectors $X_i$ take values in $\real^{r_i}$ and their densities satisfy the condition in equation \eqref{eq: CarEtal_condition};
\item
$X_1, X_2, \dots, X_k$ are independent;
\item
$\E X = 0$ and $\E \norm{X}_2^2 < \infty$.
\end{enumerate}
Since entropy expressions are not affected by adding constants, the 0-mean assumption in Definition \ref{def: pg} may be made without loss of generality. Define $\cP_g(\mathbf r) \subseteq \cP(\mathbf r)$ as the set of random variables $X$ that satisfy the properties above, while, in addition, each $X_i$, $i \in [k]$ is Gaussian.
\end{definition} 

\begin{remark}
Whether an $\real^{n}$-valued random vector $X$ lies in $\cP(\mathbf r)$ or not is a property of its distribution. The finite variance assumption on random variables in $\cP(\mathbf r)$ implies that the entropies $h(X_i)$ for $i \in [k]$ and $h(A_jX)$ for $j \in [m]$ are bounded away from $\infty$. However, with only the  variance assumption in place, it may happen that some of these entropies equal $-\infty$, which happens, for instance, when $X$ is a constant. In this paper, we shall be dealing with differences of entropies of the form
\begin{align}\label{eq: diff_h}
    \sum_{i=1}^k d_i h(X_i) - \sum_{j=1}^m c_j h(A_j X).
\end{align}
The condition in equation
\eqref{eq: CarEtal_condition} together with the finite variance assumption has the effect of ensuring that the absolute values of the differential entropies are finite, which ensures that the above difference is well-defined for $X \in \cP(\mathbf r)$. This is a technical assumption made for ease of presentation. In cases where the expression in equation \eqref{eq: diff_h} is not well-defined, we may redefine it to equal the limit
\begin{equation}\label{eq: fM1}
     \limsup_{\delta \to 0_+} \sum_{i=1}^k d_i h(\tilde X_i) - \sum_{j=1}^m c_j h(A_j \tilde X + \sqrt \delta Z_j),  
\end{equation}
where  $\tilde X := X + \sqrt \delta W$ for a standard normal $W$ independent of $X$ and the $Z_j$ are standard normal random vectors independent of $(X,W)$.
With this modification, our results continue to hold for random variables that satisfy 
all the conditions in 
Definition \ref{def: pg} except 
the condition in equation
\eqref{eq: CarEtal_condition}.

\end{remark}

The following two concepts are required for Theorem \ref{thm: finite}.

\begin{definition}\label{def: r_product}
Let $(\mathbf A, \mathbf c, \mathbf r, \mathbf d)$ be a BL-EPI datum. Define a subspace $V \subseteq \real^n$ as being of $\mathbf r$-product form if $V$ may be written as  $V = V_1 \times V_2 \times \dots \times V_k$ for subspaces $V_i \subseteq \real^{r_i}$, for $i \in [k]$.
\end{definition}

\begin{definition}\label{def: critical_V}
Let $(\mathbf A, \mathbf c, \mathbf r, \mathbf d)$ be a BL-EPI datum. An $\mathbf r$-product form subspace $V \subseteq \real^n$ is called a \emph{critical subspace} if 
\begin{equation*}
\sum_{i=1}^k d_i \dim(V_i)  = \sum_{j=1}^m c_j \dim(A_j V).
\end{equation*}
\end{definition}

\begin{definition} \label{def:M_notation}
For a BL-EPI datum $(\mathbf A, \mathbf c, \mathbf r, \mathbf d)$, define $M(\mathbf A, \mathbf c, \mathbf r, \mathbf d)$ as
\begin{align*}
M(\mathbf A, \mathbf c, \mathbf r, \mathbf d) := \sup_{X \in \cP(\mathbf r)} \sum_{i=1}^k d_i h(X_i) - \sum_{j-1}^m c_j h(A_j X).
\end{align*}
Similarly, define $M_g(\mathbf A, \mathbf c, \mathbf r, \mathbf d)$ as the above supremum taken over Gaussian inputs $X \in \cP_g(\mathbf r)$. When the BL-EPI datum is fixed, we shall omit the $(\mathbf A, \mathbf c, \mathbf r, \mathbf d)$ argument and use the simplified notation $M$ and $M_g$.
\end{definition}

\section{Main results}\label{section: main}
We are now in a position to state our main result:
\begin{theorem}[Unified EPI and BLI]\label{thm: EPI+BL}
Let $(\mathbf A, \mathbf c, \mathbf r, \mathbf d)$ be a BL-EPI datum. Recall the definition
\begin{equation}
M_g := \sup_{Z \in \cP_g(\mathbf r)}\sum_{i=1}^k d_i h(Z_{i}) - \sum_{j=1}^m c_j h(A_j Z).
\end{equation}
Then for any $X \in \cP(\mathbf r)$, the following inequality holds:
\begin{equation}\label{eq: mg}
\sum_{i=1}^k d_i h(X_i) - \sum_{j=1}^m c_j h(A_j X) \leq M_g.
\end{equation}
\end{theorem}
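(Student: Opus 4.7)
The plan is to combine a block-wise doubling trick with iteration and the central limit theorem. Let $X \in \cP(\mathbf r)$ and let $X'$ be an independent copy of $X$. Perform the block-wise orthogonal rotation
$$U_i := \tfrac{1}{\sqrt 2}(X_i + X_i'), \qquad V_i := \tfrac{1}{\sqrt 2}(X_i - X_i'), \qquad i \in [k],$$
and set $U := (U_1,\ldots,U_k)$ and $V := (V_1,\ldots,V_k)$. Because each block is rotated by an orthogonal map, both $U$ and $V$ lie in $\cP(\mathbf r)$, and the pairs $\{(U_i,V_i)\}_{i \in [k]}$ are mutually independent across $i$. The rotation preserves joint entropies, so $h(U_i, V_i) = 2 h(X_i)$ and $h(A_j U, A_j V) = 2 h(A_j X)$. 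Expanding $h(U_i,V_i) = h(U_i) + h(V_i) - I(U_i; V_i)$, and similarly for the output entropies, yields the identity
$$2 f(X) - f(U) - f(V) = \sum_{j=1}^m c_j \, I(A_j U; A_j V) - \sum_{i=1}^k d_i \, I(U_i; V_i).$$

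The heart of the argument is to show the right-hand side is non-positive, i.e., to establish the doubling inequality
$$\sum_{j=1}^m c_j \, I(A_j U; A_j V) \leq \sum_{i=1}^k d_i \, I(U_i; V_i),$$
which is equivalent to $2f(X) \leq f(U) + f(V)$. I expect this to be the main obstacle. The naive data-processing bound $I(A_j U; A_j V) \leq I(U; V) = \sum_i I(U_i; V_i)$ (the equality using block-independence of the pairs $(U_i,V_i)$) is too coarse when $\sum_j c_j$ is large. My plan is to express each mutual information as a KL divergence against the Gaussian with the matched doubled covariance, then combine (i) convexity of KL divergence, (ii) independence of the pairs $(U_i,V_i)$ across $i$, and (iii) the surjectivity of each $A_j$ to reduce the inequality to its Gaussian instance, which can be verified by direct computation.

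Assuming the doubling inequality, the proof is completed by iteration and the CLT. Define $X^{(0)} := X$; given $X^{(n)}_s$ indexed by $s \in \{0,1\}^n$, let $\widetilde X^{(n)}_s$ be an independent copy of $X^{(n)}_s$ and set $X^{(n+1)}_{s0} := (X^{(n)}_s + \widetilde X^{(n)}_s)/\sqrt 2$ and $X^{(n+1)}_{s1} := (X^{(n)}_s - \widetilde X^{(n)}_s)/\sqrt 2$. Recursively applying the doubling inequality yields
$$f(X) \leq \frac{1}{2^n} \sum_{s \in \{0,1\}^n} f\bigl( X^{(n)}_s \bigr).$$
Each block of each $X^{(n)}_s$ is a signed, normalized sum of $2^n$ i.i.d.\ copies of the corresponding block of $X$, so by the multivariate CLT it converges in distribution to a centered Gaussian with the same covariance as that block of $X$; by block-independence, the joint $X^{(n)}_s$ converges to the Gaussian $G \in \cP_g(\mathbf r)$ with the same block-covariances as $X$. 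After a preliminary regularization (replacing $X$ by $X + \sqrt\epsilon W$ with $W$ a standard Gaussian having independent blocks), entropies are continuous along this weak convergence, so the right-hand side tends to $f(G) \leq M_g$; letting $\epsilon \to 0$ completes the proof.
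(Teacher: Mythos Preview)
Your central step, the doubling inequality
\[
\sum_{j=1}^m c_j\,I(A_jU;A_jV)\;\le\;\sum_{i=1}^k d_i\,I(U_i;V_i),
\]
is false in general, so the plan cannot be carried out. Take the simplest nontrivial BL datum: $k=1$, $d_1=1$, $n=2$, $m=2$, $c_1=c_2=1$, with $A_1,A_2$ the two coordinate projections. Then $f(X)=-I(X^{(1)};X^{(2)})$ and your inequality reads
\[
I(U^{(1)};V^{(1)})+I(U^{(2)};V^{(2)})\;\le\;I(U;V).
\]
Now choose $X^{(1)}=A$ with $A$ any non-Gaussian, finite-variance random variable and $X^{(2)}=A+N$ with $N\sim\cN(0,\tau^2)$ independent of $A$. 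Writing $\tilde N_{\pm}=(N\pm N')/\sqrt 2$ (independent $\cN(0,\tau^2)$'s, independent of $(A,A')$) one has $U^{(2)}=U^{(1)}+\tilde N_+$ and $V^{(2)}=V^{(1)}+\tilde N_-$, and an easy computation gives $I(U;V)=I(U^{(1)};V^{(1)})$. Hence your inequality would force $I(U^{(2)};V^{(2)})=0$, i.e.\ $U^{(1)}+\tilde N_+\ \ind\ V^{(1)}+\tilde N_-$, which (since the $\tilde N_\pm$ are independent Gaussians) is equivalent to $U^{(1)}\ind V^{(1)}$, i.e.\ to $A$ being Gaussian. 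So for any non-Gaussian $A$ the doubling inequality fails strictly. The proposed ``reduce to the Gaussian case via KL convexity'' cannot rescue this: at Gaussians both sides vanish, so any reduction would have to prove that the gap is minimized at Gaussians, which is essentially the theorem itself.

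What the paper does instead is precisely designed to avoid this obstruction. It does not try to show $2f(X)\le f(U)+f(V)$ for an arbitrary $X$; it passes to the \emph{concave envelope} $S_{\epsilon,\delta}$ (over auxiliary variables $U$), and proves the subadditivity $S_{\epsilon,\delta}(X_1,X_2)\le S_{\epsilon,\delta}(X_1)+S_{\epsilon,\delta}(X_2)$ by combining \emph{two} chain-rule expansions of $s_{\epsilon,\delta}(X_1,X_2\mid U)$. The first expansion produces an error term $T_3(U)$ of indeterminate sign; the second expansion, after making the conditioning uniform and invoking the concave-envelope definition to bound conditional versions of $s$, produces $-T_3(U)$ minus nonnegative mutual-information corrections. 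Taking the better of the two eliminates $T_3(U)$. The concave envelope is essential here: without it the second expansion yields conditional $f$-values, not $f(U)$ and $f(V)$. Finally, the doubling is applied to the (covariance-constrained) \emph{maximizer}, where the equality conditions force Gaussianity, rather than iterated from an arbitrary $X$.
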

Recall that in Definition \ref{def:M_notation} we introduced the quantity (with a simplified notation):
\begin{equation}\label{eq: m}
M := \sup_{X \in \cP(\mathbf r)} \sum_{i=1}^k d_i h(X_{i}) - \sum_{j=1}^m c_j h(A_j X).
\end{equation}
Naturally, we have $M \geq M_g$. Thus, if $M_g$ is $+\infty$, then so is $M$. If $M_g < \infty$, then the above result implies $M \leq M_g$, and thus $M = M_g$.  An equivalent way of stating the above result is asserting $M = M_g$. Theorem~\ref{thm: EPI+BL} does not address the following points, which are worth investigating:

\begin{enumerate}
\item
\textbf{Finiteness:}  When is $M_g$ (and therefore $M$) finite?
\item
\textbf{Extremizability and Gaussian extremizability:} Assuming $M$ is finite, when do extremizers exist for the supremum in equation \eqref{eq: m}, and when do Gaussian extremizers exist for the supremum in equation \eqref{eq: mg}? In particular, does extremizability imply Gaussian extremizability? (Clearly, the reverse implication is true because of Theorem
\ref{thm: EPI+BL}.)
\item
\textbf{Uniqueness of extremizers:} Assuming extremizers exists, are they unique in some appropriate sense? 
\end{enumerate}
The answers to all these questions will depend on the BL-EPI datum $(\mathbf A, \mathbf c, \mathbf r, \mathbf d)$. In this paper, we resolve the first question by identifying necessary and sufficient conditions on  $(\mathbf A, \mathbf c, \mathbf r, \mathbf d)$ that ensure finiteness of $M$ and $M_g$. We do not address the latter two questions here. We show the following result:

\begin{theorem}\label{thm: finite}
 For a BL-EPI datum $(\mathbf A, \mathbf c, \mathbf r, \mathbf d)$, we have $M(\mathbf A, \mathbf c, \mathbf r, \mathbf d) < \infty$ if and only if the following conditions are satisfied:

\begin{align}
&\sum_{i=1}^k d_i \dim(V_i) \leq \sum_{j=1}^m c_j \dim(A_j V) \quad \text{ for all $\mathbf r$-product form}~ V,  \quad \text{ and }  \label{eq: finite1} \\
&\sum_{i=1}^k d_i r_i = \sum_{j=1}^{m} c_j n_j. \label{eq: finite2}
\end{align}
\end{theorem}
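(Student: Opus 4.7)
The plan is to first invoke Theorem \ref{thm: EPI+BL} to reduce the finiteness question for $M$ to that for $M_g$, and then to characterize finiteness of $M_g$ via an explicit analysis of Gaussian covariances. For $X \in \cP_g(\mathbf r)$ with $X_i \sim \cN(0, K_i)$ independent and $K \defn \mathrm{blkdiag}(K_1,\dots,K_k)$,
\begin{equation*}
\sum_i d_i h(X_i) - \sum_j c_j h(A_j X) = \tfrac{1}{2}\Bigl(\sum_i d_i r_i - \sum_j c_j n_j\Bigr)\log(2\pi e) + \tfrac{1}{2}F(K),
\end{equation*}
where $F(K) \defn \sum_i d_i \log|K_i| - \sum_j c_j \log|A_j K A_j^T|$. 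Thus $M_g < \infty$ iff the homogeneity condition \eqref{eq: finite2} holds and $F$ is bounded above on positive definite block-diagonal $K$.

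For necessity, the rescaling $K \mapsto tK$ gives $F(tK) - F(K) = \bigl(\sum_i d_i r_i - \sum_j c_j n_j\bigr)\log t$, so \eqref{eq: finite2} must hold, else sending $t \to 0$ or $t \to \infty$ makes $F$ unbounded. Next, for any $\mathbf r$-product form $V = V_1 \times \cdots \times V_k$, set $K_i(t) = t P_{V_i} + P_{V_i^\perp}$, with $P_{V_i}, P_{V_i^\perp}$ the orthogonal projections. Then $\log|K_i(t)| = \dim(V_i)\log t$, and since $A_j$ is surjective and $V + V^\perp = \real^n$, the matrix $A_j K(t) A_j^T = t A_j P_V A_j^T + A_j P_{V^\perp} A_j^T$ (where $P_V = \mathrm{blkdiag}(P_{V_i})$) has, by a Schur-complement block expansion, determinant $t^{\dim A_j V}$ times a positive constant, to leading order. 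Hence $F(K(t)) = \bigl(\sum_i d_i \dim V_i - \sum_j c_j \dim A_j V\bigr)\log t + O(1)$, and $t \to \infty$ forces \eqref{eq: finite1}.

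For sufficiency, assume \eqref{eq: finite1} and \eqref{eq: finite2} hold and suppose, for contradiction, there is a sequence $K^{(m)}$ with $F(K^{(m)}) \to \infty$. Since $F$ is continuous on the compact set of $K$ with bounded eigenvalue ratios (after normalization using \eqref{eq: finite2}), the eigenvalues of $K^{(m)}$ must spread across unbounded logarithmic scales; set $T_m \defn \max_{i,l}\lvert\log\lambda_{i,l}^{(m)}\rvert \to \infty$. Passing to a subsequence, the spectral decompositions $K_i^{(m)} = \sum_l \lambda_{i,l}^{(m)} e_{i,l}^{(m)}(e_{i,l}^{(m)})^T$ satisfy $e_{i,l}^{(m)} \to e_{i,l}$ and $\log\lambda_{i,l}^{(m)}/T_m \to \alpha_{i,l}$. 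Listing the distinct limit rates as $\alpha^{(1)} > \cdots > \alpha^{(L)}$ and setting $V_i^{(\ell)} \defn \mathrm{span}\{e_{i,l} : \alpha_{i,l} \geq \alpha^{(\ell)}\}$, one obtains an increasing flag of $\mathbf r$-product form subspaces $V^{(\ell)} = V_1^{(\ell)}\times\cdots\times V_k^{(\ell)}$ with $V^{(L)} = \real^n$. To leading order,
\begin{align*}
\log|K_i^{(m)}| &= T_m \sum_\ell \alpha^{(\ell)}\bigl(\dim V_i^{(\ell)} - \dim V_i^{(\ell-1)}\bigr) + o(T_m), \\
\log|A_j K^{(m)} A_j^T| &= T_m \sum_\ell \alpha^{(\ell)}\bigl(\dim A_j V^{(\ell)} - \dim A_j V^{(\ell-1)}\bigr) + o(T_m),
\end{align*}
where the second identity reflects that the eigenvalues of $A_j K^{(m)} A_j^T$ stratify by the same rate hierarchy, with $\dim A_j V^{(\ell)} - \dim A_j V^{(\ell-1)}$ eigenvalues of magnitude $e^{T_m \alpha^{(\ell)}}$. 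Setting $a_\ell \defn \sum_i d_i \dim V_i^{(\ell)} - \sum_j c_j \dim A_j V^{(\ell)}$ and noting $a_0 = 0$, $a_L = 0$ by \eqref{eq: finite2}, Abel summation yields
\begin{equation*}
F(K^{(m)}) = T_m \sum_{\ell=1}^{L-1} (\alpha^{(\ell)} - \alpha^{(\ell+1)}) a_\ell + o(T_m).
\end{equation*}
Each factor $\alpha^{(\ell)} - \alpha^{(\ell+1)} > 0$ and \eqref{eq: finite1} gives $a_\ell \leq 0$, so the leading coefficient of $T_m$ is non-positive, contradicting $F(K^{(m)}) \to \infty$.

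The main technical obstacle lies in justifying the asymptotic expansion for $\log|A_j K^{(m)} A_j^T|$: the subspaces $A_j V^{(\ell)} \subseteq \real^{n_j}$ need not be mutually orthogonal and can interact non-trivially with the kernel of $A_j$, so extracting the correct leading power of $T_m$ demands a careful block decomposition of $A_j$ adapted to the flag $V^{(\ell)}$ together with a Cauchy-Binet-type expansion of the determinant, plus uniform control of the $o(T_m)$ remainder along the subsequence. A secondary subtlety is the critical case in which every $a_\ell = 0$: here the leading term vanishes and one must iterate the argument on the BL-EPI data inherited by the quotients $V^{(\ell)}/V^{(\ell-1)}$, reducing to an induction on $n$ along critical $\mathbf r$-product form subspaces.
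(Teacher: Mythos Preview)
Your necessity argument matches the paper's. For sufficiency, however, your route is genuinely different from the paper's, and one of your stated intermediate claims is actually false.

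The asymptotic you assert, $\log|A_jK^{(m)}A_j^T| = T_m\sum_\ell \alpha^{(\ell)}\bigl(\dim A_jV^{(\ell)}-\dim A_jV^{(\ell-1)}\bigr)+o(T_m)$, does \emph{not} hold as an equality in general. A counterexample with $n=2$, $k=1$: take $A=(0,1)$, eigenvectors $e_1^{(m)}=(\cos\theta_m,\sin\theta_m)$, $e_2^{(m)}=(-\sin\theta_m,\cos\theta_m)$ with $\theta_m=e^{-T_m/4}\to 0$, and eigenvalues $e^{T_m},1$. Then $V^{(1)}=\operatorname{span}(1,0)$, so $\dim AV^{(1)}=0$ and the formula predicts $o(T_m)$, but $AK^{(m)}A^T=K_{22}^{(m)}=e^{T_m}\sin^2\theta_m+\cos^2\theta_m\sim e^{T_m/2}$. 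The point is that the speed at which the eigenvectors converge can interact with the eigenvalue growth. What \emph{does} hold --- and is all you need --- is the lower bound: the Cauchy--Binet expansion $|A_jKA_j^T|=\sum_{|S|=n_j}\bigl(\prod_{l\in S}\lambda_l\bigr)\lvert\det(A_je_l)_{l\in S}\rvert^2$ has a single term, chosen greedily along the flag, with nonvanishing determinant in the limit. This gives $\log|A_jK^{(m)}A_j^T|\geq T_m\sum_\ell\alpha^{(\ell)}(\dim A_jV^{(\ell)}-\dim A_jV^{(\ell-1)})+o(T_m)$, hence $F(K^{(m)})\leq T_m\sum_\ell(\alpha^{(\ell)}-\alpha^{(\ell+1)})a_\ell+o(T_m)$, and your Abel-summation contradiction survives. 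You should also normalize (say $\operatorname{tr}\log K^{(m)}=0$) so that the rates cannot all coincide and $L\geq 2$ is guaranteed.

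By contrast, the paper never analyzes eigenvalue asymptotics at all. It fixes $(\mathbf A,\mathbf r,\mathbf d)$ and views $M$ as a convex function of $\mathbf c\in\real_+^m$; since $M=+\infty$ off the compact polytope $\cK$ cut out by \eqref{eq: finite1}--\eqref{eq: finite2}, it suffices to check finiteness on $\partial\cK$, where either some $c_j=0$ (reducing $m$) or some proper $\mathbf r$-product subspace is critical (reducing $n$). The reduction in $n$ uses a splitting lemma showing $M(\mathbf A,\mathbf c,\mathbf r,\mathbf d)\leq M(\tilde{\mathbf A},\mathbf c,\tilde{\mathbf r},\mathbf d)+M(\tilde{\tilde{\mathbf A}},\mathbf c,\tilde{\mathbf r}^c,\mathbf d)$ for the induced data on $U$ and $U^\perp$, together with the observation that both inherited data again satisfy \eqref{eq: finite1}--\eqref{eq: finite2} when $U$ is critical. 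Your ``critical case'' ($a_\ell=0$ for all $\ell$) needs exactly this splitting machinery, so the two proofs ultimately converge there; the paper's convexity-in-$\mathbf c$ trick simply bypasses the spectral analysis that, in your approach, serves only to \emph{locate} a critical subspace.
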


As we show below, Theorem \ref{thm: EPI+BL} readily implies the EPI, BLI, and Zamir and Feder's EPI. For this reason, we choose to interpret the inequality in Theorem \ref{thm: EPI+BL} as a unified version of the Brascamp-Lieb inequality and the entropy power inequality. 

\paragraph{Entropy Power Inequality: }

We will prove the EPI  in Lieb's form \eqref{eq: liebform} using Theorem \ref{thm: EPI+BL}. Let $X$ and $Y$ be independent $\real^d$-valued random variables with zero means and bounded variances, and let $\lambda \in (0,1)$. The expression $\lambda h(X) + (1-\lambda) h(Y) - h(\sqrt \lambda X+ \sqrt{1-\lambda} Y)$ corresponds to $n = 2d$, $k=2$, $r_1=r_2=d$, $d_1 = \lambda$, $d_2 = 1-\lambda$, $c_1 = 1$, and $A_1 = [\sqrt \lambda I_d, \sqrt{1-\lambda} I_d]$. Note that it is enough to prove $M_g = 0$ by explicit calculation. Consider Gaussian random variables $Z_1 \sim \cN(0, \Sigma_1)$ and $Z_2 \sim \cN(0, \Sigma_2)$. Plugging in the entropies of these Gaussian random variables and simplifying, we see that we need to evaluate the supremum
\begin{align*}
    M_g = \sup_{\Sigma_1, \Sigma_2 \succeq 0} \lambda \log \det(\Sigma_1) + (1-\lambda) \log \det \Sigma_2 - \log \det(\lambda \Sigma_1 + (1-\lambda)\Sigma_2).
\end{align*}
This supremum is seen to be 0 via the concavity of the $\log\det$ function.

\paragraph{Brascamp-Lieb Inequality: }

When $k=1$, $r_1 = n$, and $d_1 = 1$,  we recover the setting of the Brascamp-Lieb inequality in its equivalent form of subadditivity of entropy:
\begin{equation}
h(X) \leq \sum_{j=1}^m c_j h(A_j X) + M_g,
\end{equation}
for all $\real^{n}$-valued random variables $X$ with $\E X = 0$
and $\E ||X||_2^2 < \infty$.

\paragraph{Zamir and Feder's Inequality: }
Let $A$ be a $k \times n$ matrix satisfying $AA^T = I_{k \times k}$.  For $1 \leq j \leq n$, let the squared norm of the $j$-th column of $A$ be denoted by $\alpha_j^2$; i.e.,
$$\alpha_j^2 := 
\sum_{i=1}^k a_{ij}^2.$$ 
Just as we did for the EPI, it is enough to show that $M_g \le 0$ by explicitly computing the supremum of $ \sum_{j=1}^n \alpha_j^2 h(X_j) - h(AX)$ over Gaussian $X$. Let $\Lambda = \operatorname{Diag}(\lambda_1, \lambda_2, \dotsc, \lambda_n)$ be a positive definite matrix. Define a function $F$ from the space of positive definite diagonal matrices to $\mathbb R$ as follows:
$$ F(\Lambda) = \log \lvert A \Lambda A^T\rvert-\sum_{j=1}^n \alpha_j^2 \log \lambda_j.$$

If we show that $F(\Lambda) \geq 0$, then Theorem \ref{thm: EPI+BL} will immediately imply Zamir and Feder's EPI for random vectors with independent components. Let $B := A \Lambda^{1/2}$, so that $A\Lambda A^T = BB^T$. Using the Cauchy-Binet formula for the determinant of $BB^T$, we obtain
$$|BB^T| = \sum_{1 \leq i_1 < \dots < i_k \leq n} |B_{i_1 i_2 \dots i_k}|| B_{i_1 i_2 \dots i_k}^T|, $$ 
where $B_{i_1 i_2 \dots i_k}$ consists of the $k$ columns of $B$ corresponding to the indices $i_1, \dots, i_k$. The right hand side of the above equality may be written explicitly as
$$ \sum_{1 \leq i_1 < \dots < i_k \leq n} \left({\prod_{j=1}^k \lambda_{i_j}}\right) |A_{i_1i_2\dots i_k}|^2.$$
Noting that $\sum_{1 \leq i_1 < \dots < i_k \leq n}|A_{i_1i_2\dots i_k}|^2 = |AA^T| = |I_k|= 1 $ (again via the Cauchy-Binet formula), we may take logarithms and use Jensen's inequality to obtain
$$ \log |A \Lambda A^T| \geq \sum_{1 \leq i_1 < \dots < i_k \leq n} |A_{i_1i_2\dots i_k}|^2 \log \left({\prod_{j=1}^k \lambda_{i_j}}\right). $$
We now gather  the coefficients of $\log {\lambda_j}$ for a fixed $j$. 
The coefficient of $\log {\lambda_1}$ is given by

$$ \sum_{1 = i_1 < \dots < i_k \leq n} |A_{i_1i_2\dots i_k}|^2= 1 - |A_{2,3,\dots, n}A_{2,3,\dots,n}^T| = 1 - |I_n - A_1 A_1^T| = \alpha_1^2.$$

Here, the first equality follows by using the Cauchy-Binet formula again, the second equality follows from the orthogonality of the rows of $A$, and the third equality is true because $|I_n-uu^T| = 1- \norm{u}^2$ for any vector $u$. A similar calculation can be done to show that the coefficient of $\log {\lambda_j}$ is $\alpha_j^2$
for all $1 \le j \le n$, which completes the proof of $F(\Lambda) \geq 0$.

\section{Proof of Theorem \ref{thm: EPI+BL}}\label{section: tp}

Our proof strategy relies on the technique of Geng and Nair~\cite{GengNair14} which was developed to solve optimization problems of the form $\sup_{\text{Cov}(X) \preceq \Sigma} s(X).$ A rough sketch of this proof strategy is outlined below:
\begin{itemize}
\item
\textbf{Concave envelope:} Define the concave envelope of $s$, denoted by $S$, as
the smallest concave function that pointwise dominates $s$.
It can be seen that 
\begin{align*}
S(X) = \sup_{U} s(X|U) = \sup_U \sum_{u \in \cU} s(X | U=u) p_U(u),
\end{align*}
where the supremum is over finite auxiliary random variables $U$ with support $\cU$. 
\item 
\textbf{Subadditivity of $S$:} This step consists of defining $S$ on the larger space of pairs of random variables $(X_1,X_2)$. A straightforward extension often exists for information-theoretic functions $S$. The subadditivity result shows that
\begin{align*}
S(X_1, X_2) \leq S(X_1) + S(X_2).
\end{align*}
The ingredients for establishing the subadditivity result developed in this paper stems from the ideas to establish converses to coding theorems and outer bounds in network information theory. An argument with a flavor similar to 
that employed here can be found outlined in \cite{Nai14b}.
\item
\textbf{Optimizers of $S$:} In this step (also known as the doubling trick), we consider two i.i.d.\ copies of any optimizer $X$ of $S(X)$, say $(X_1, X_2)$, and show that $(X_1+X_2)/\sqrt 2$ and $(X_1-X_2)/\sqrt 2$ are also optimizers of $S(X)$. 
From here, we may use Gaussian characterization results \cite{GhuOlk62} or the central limit theorem \cite{GengNair14} to conclude that it is enough to consider only Gaussian optimizers.
\item 
\textbf{Optimizers of $s$:} In this final step, we show that the optimal value 
for $S(X)$ 
is attained by a single Gaussian distribution; i.e., we may assume without loss of generality that $\abs\cU = 1$, 
and thus this Gaussian also maximizes 
$s(X)$.
\end{itemize}

The crux of the proof is establishing the subadditivity of $S$. Our proof relies on the expanding the joint entropy $h(X_1, X_2)$ in two separate ways as follows:
\begin{itemize}
\item[(A)]
$h(X_1, X_2) = h(X_1) + h(X_2) - I(X_1; X_2)$,
\item[(B)]
$h(X_1, X_2) = h(X_1|X_2) + h(X_2|X_1) + I(X_1;X_2)$.
\end{itemize}
To highlight the main ideas, we present a proof sketch of the subadditivity result for the EPI using our new technique.
\subsection{Proving the EPI via subadditivity} Consider the function
\begin{equation}
s(X_1, Y_1) \defn \lambda h(X_1) + (1-\lambda)h(Y_1) - h(\sqrt \lambda X_1 + \sqrt {1-\lambda} Y_1),
\end{equation}
where $X_1 \ind Y_1$. Define the lifting of $s$ to the space of pairs of random variables by
\begin{equation}\label{eq: s_pair_epi}
s(X_{1:2}, Y_{1:2}) \defn \lambda h(X_{1:2}) + (1-\lambda)h(Y_{1:2}) - h(\sqrt \lambda X_{1:2} + \sqrt {1-\lambda}Y_{1:2}),
\end{equation}
where $X_{1:2} \ind Y_{1:2}$. Let $S(X_1, Y_1)$ and $S(X_{1:2}, Y_{1:2})$ be the respective concave envelopes of $s$ and its lifting.
\footnote{  \label{fn:hull}
To get $S(X_1,Y_1)$ from $s(X_1, Y_1)$, we can think of the domain of $s(X_1, Y_1)$ as being the product of the convex set of probability densities
on $x$ satisfying \eqref{eq: CarEtal_condition} and the convex set of probability densities on $y$ satisfying 
\eqref{eq: CarEtal_condition}, and take the concave hull on this 
product space;
%By this we mean that we think of $s(X_1, Y_1)$ as defined for all joint distributions on $(X_1, Y_1)$ by setting it equal
%to $- \infty$ except for product distributions, take the concave hull of the resulting function, call it
%$S(X_1,Y_1)$, and then restrict the domain 
%of this concave hull to product distributions; 
similarly for getting $S(X_{1:2}, Y_{1:2})$ from $s(X_{1:2}, Y_{1:2})$.
It can be checked that any product distribution on $(X_1,Y_1)$ got by a mixture of product distributions
can be viewed as having the mixing done on the marginals,
basically because if $p(x) q(y) = \sum_i \lambda_i p_i(x) q_i(y)$ where $\lambda_i \ge 0$ and $\sum_i \lambda_i =1$
then summing over $y$ on both sides gives
$p(x) = \sum_i \lambda_i p_i(x)$ 
and similarly $q(y) = \sum_i \lambda_i q_i(y)$.
This justifies why we can write \eqref{eq:mixing}
and the analogous expression for $S(X_{1:2}, Y_{1:2})$.

}
We would like to show the subadditivity relation
\begin{equation}
S(X_{1:2}, Y_{1:2}) \leq S(X_1, Y_1) + S(X_2, Y_2).
\end{equation}
Notice that 
\begin{equation}        \label{eq:mixing}
S(X_1, Y_1) = \sup_{X_1 \to U \to Y_1} s(X_1, Y_1|U),
\end{equation}
and similarly for $S(X_{1:2}, Y_{1:2})$. For any auxiliary random variable $U$ satisfying $X_{1:2} \rt U \rt Y_{1:2}$, applying expansion (A) to each entropy term in equation \eqref{eq: s_pair_epi} (conditioned on $U$) yields 
\begin{align}
s(X_{1:2}, Y_{1:2} \mid U) &= \lambda h(X_{1:2}|U) + (1-\lambda)h(Y_{1:2}|U)- h(\sqrt \lambda X_{1:2}  + \sqrt {1-\lambda} Y_{1:2} | U) \nonumber \\
&= \Big[\lambda h(X_1|U) + (1-\lambda)h(Y_1|U)- h(\sqrt \lambda X_1  + \sqrt {1-\lambda} Y_1 | U) \Big] \nonumber\\
& + \Big[\lambda h(X_2|U) + (1-\lambda)h(Y_2|U)- h(\sqrt \lambda X_2 + \sqrt {1-\lambda} Y_2 | U) \Big] \nonumber\\
& + \Big[- \lambda I(X_1; X_2 | U) - (1-\lambda) I(Y_1; Y_2 | U) + I(\sqrt \lambda X_1 + \sqrt{1-\lambda} Y_1; \sqrt \lambda X_2 + \sqrt{1-\lambda} Y_2| U)\Big]. \label{eq: epi_exp1}
\end{align}
For simplicity, call the terms in the brackets $T_1(U)$, $T_2(U)$, and $T_3(U)$ respectively, even though they actually depend on $p_{U|X_{1:2}, Y_{1:2}}$.  Observing that   $X_i \to U \to Y_i$ for $i= 1,2$, we may conclude $T_1(U) \leq S(X_1, Y_1)$ and $T_2(U) \leq S(X_2, Y_2)$. Substituting these inequalities, we arrive at 
\begin{align}
s(X_{1:2}, Y_{1:2}|U) \leq S(X_1, Y_1) + S(X_2, Y_2) + T_3(U).\label{eq: epi_f2}
\end{align}

We now expand the
expression in equation \eqref{eq: s_pair_epi} (conditioned on $U$) using expansion (B) for each entropy term:
\begin{align}
s(X_{1:2}, Y_{1:2}|U) &= \lambda h(X_{1:2}|U) + (1-\lambda)h(Y_{1:2}|U)- h(\sqrt \lambda X_{1:2}  + \sqrt {1-\lambda} Y_{1:2} | U) \nonumber\\
&= \Big[\lambda h(X_1|U, X_2) + (1-\lambda)h(Y_1|U, Y_2)- h(\sqrt \lambda X_1  + \sqrt {1-\lambda} Y_1 | U, \sqrt \lambda X_2 + \sqrt {1-\lambda} Y_2) \Big] \nonumber\\
& + \Big[\lambda h(X_2|U, X_1) + (1-\lambda)h(Y_2|U, Y_1)- h(\sqrt \lambda X_2 + \sqrt {1-\lambda} Y_2 | U, \sqrt \lambda X_1  + \sqrt {1-\lambda} Y_1) \Big] \nonumber\\
& + \Big[\lambda I(X_1; X_2 | U) + (1-\lambda) I(Y_1; Y_2 | U) - I(\sqrt \lambda X_1 + \sqrt{1-\lambda} Y_1; \sqrt \lambda X_2 + \sqrt{1-\lambda} Y_2| U)\Big]. \label{eq: epi_exp2}
\end{align}
For ease of notation, call the three terms $R_1(U)$, $R_2(U)$, and $R_3(U) = -T_3(U)$, even though they actually depend on $p_{U|X_{1:2}, Y_{1:2}}$. Similar to inequality \eqref{eq: epi_f2}, we would like to upper bound $R_1(U)$ and $R_2(U)$ by $S(X_1, Y_1)$ and $S(X_2, Y_2)$ respectively. However, the conditioning for the entropy terms in each of the $R_i(U)$ is not the same so we cannot directly conclude such a bound. Using the chain rule of mutual information and data-processing relations, we may make the conditioning in $R_1(U)$ and $R_2(U)$ uniform by introducing some extra mutual information terms:
\begin{align*}
R_1(U) &= \Big[\lambda h(X_1|U, X_2) + (1-\lambda)h(Y_1|U, Y_2)- h(\sqrt \lambda X_1  + \sqrt {1-\lambda} Y_1 | U, \sqrt \lambda X_2 + \sqrt {1-\lambda} Y_2) \Big]\\
&= \Big[\lambda h(X_1|U, X_2, Y_2) + (1-\lambda)h(Y_1|U, Y_2, X_2)- h(\sqrt \lambda X_1  + \sqrt {1-\lambda} Y_1 | U, X_2, Y_2) \Big]\\
&- I( \sqrt \lambda X_1  + \sqrt {1-\lambda} Y_1;  X_2, Y_2  | U, \sqrt \lambda X_2  + \sqrt {1-\lambda} Y_2)\\
&=: \tilde R_1(U) - I_1(U),
\end{align*}
where the notational conventions
$\tilde R_1(U)$ and $I_1(U)$ are used 
even though the respective terms actually depend on $p_{U|X_{1:2}, Y_{1:2}}$.
The main step in the preceding equation is justified as follows. First, it it easy to check using the Markov relation $(X_1, X_2) \rt U \rt (Y_1, Y_2)$ that
\begin{equation*}
h(X_1 | U, X_2) = h(X_1|U, X_2, Y_2), \quad \text{and} \quad h(Y_1 | U, Y_2) = h(Y_1 | U, X_2, Y_2).
\end{equation*}
Also, we may verify that 
\begin{align*}
&h(\sqrt \lambda X_1  + \sqrt {1-\lambda} Y_1 | U, \sqrt \lambda X_2 + \sqrt {1-\lambda} Y_2)\\
& = h(\sqrt \lambda X_1  + \sqrt {1-\lambda} Y_1 | U, X_2, Y_2) + I( \sqrt \lambda X_1  + \sqrt {1-\lambda} Y_1; X_2, Y_2  |U,  \sqrt \lambda X_2  + \sqrt {1-\lambda} Y_2).
\end{align*}
Similar reasoning for $R_2(U)$ gives 
\begin{align*}
R_2(U) &= \Big[\lambda h(X_2|U, X_1) + (1-\lambda)h(Y_2|U, Y_1)- h(\sqrt \lambda X_2  + \sqrt {1-\lambda} Y_2 | U, \sqrt \lambda X_1 + \sqrt {1-\lambda} Y_1) \Big]\\
&= \Big[\lambda h(X_2|U, X_1, Y_1) + (1-\lambda)h(Y_2|U, Y_1, X_1)- h(\sqrt \lambda X_2  + \sqrt {1-\lambda} Y_2 | U, X_1, Y_1) \Big]\\
&- I( \sqrt \lambda X_2  + \sqrt {1-\lambda} Y_2;X_1, Y_1  | U, \sqrt \lambda X_1  + \sqrt {1-\lambda} Y_1)\\
&=: \tilde R_2(U) - I_2(U),
\end{align*}
where the notational conventions
$\tilde R_2(U)$ and $I_2(U)$ are used 
even though the respective terms actually depend on $p_{U|X_{1:2}, Y_{1:2}}$.
Substituting the expressions for $R_1(U)$ and $R_2(U)$ in the expansion in equation \eqref{eq: epi_exp2}, we arrive at 
\begin{align}
s(X_{1:2}, Y_{1:2}|U)&= \tilde R_1(U) + \tilde R_2(U) - T_3(U) - I_1(U) - I_2(U) \nonumber \\
&\stackrel{(a)}\leq S(X_1, Y_1) + S(X_2, Y_2) - T_3(U) - I_1(U) - I_2(U) \nonumber\\
&\stackrel{(b)}\leq S(X_1, Y_1) + S(X_2, Y_2) - T_3(U). \label{eq: epi_f3}
\end{align}
Here, in step $(a)$ we used the Markov chains $X_1 \rt (U, X_2, Y_2) \rt Y_1$ and $X_2 \rt (U, X_1, Y_1) \rt Y_2$. Step $(b)$ follows by noticing that $I_1(U)$ and $I_2(U)$ are non-negative, being mutual information expressions. 

Inequalities \eqref{eq: epi_f2} and \eqref{eq: epi_f3} may now be used in tandem to conclude 
\begin{equation}\label{eq: epi_su1}
s(X_{1:2}, Y_{1:2} | U) \leq S(X_1, Y_1) + S(X_2, Y_2).
\end{equation}
Taking the supremum over all auxiliary random variables $U$ satisfying $X_{1:2} \rt U \rt Y_{1:2}$
leads to
\begin{equation}\label{eq: epi_su2}
S(X_{1:2}, Y_{1:2}) \leq S(X_1, Y_1) + S(X_2, Y_2).
\end{equation}
Notice that the above proof not only gives us subadditivity, but also states that if there is equality in equation \eqref{eq: epi_su1} for some optimal $U^*$, then
$I_1(U^*) = I_2(U^*) = T_3(U^*) = 0$. This leads to several independence conditions that can be used establish Gaussian optimality. We do not sketch this part of the proof here.

In what follows, we develop this outline into a rigorous proof for a more general result in two stages. In Section \ref{section: tensorization} we establish the key subadditivity inequality and the independence relations that follow from the conditions for equality in that inequality, and in Section \ref{section: proof} we complete the proof of Theorem \ref{thm: EPI+BL} by proving Gaussian optimality.

\subsection{Subadditivity lemma}\label{section: tensorization}
\subsubsection{Preliminaries}\label{section: tensorization: preliminaries}
Let $(\mathbf A, \mathbf c, \mathbf r, \mathbf d)$ be a BL-EPI datum. Let $X \defn (X_1, X_2, \dots, X_k) \in \cP(\mathbf r)$, where $X_i \sim p_{X_i}$. A natural definition for $s(X)$ would be
\begin{align*}
s(X) &:= \sum_{i=1}^k d_i h(X_i) - \sum_{j=1}^m c_j h(A_jX),
\end{align*}
and one might then work with its concave envelope $S$.
%\red
%(defined as in footnote~\ref{fn:hull}).
%\black
However, for technical reasons we consider Gaussian-smoothed random variables in defining $s$ as follows:

\begin{definition}
Let $W_i \sim \cN(0, I_{r_i \times r_i}), i \in [k]$ be mutually independent standard normal random variables on $\real^{r_i}$, and let $W:= (W_1, \ldots, W_k)$. For $j \in [m]$, define independent Gaussian random variables $Z_j \sim \cN(0, I_{n_j \times n_j})$, and let $Z := (Z_1, Z_2, \dots, Z_m)$.
Assume that the random variables $X$, $W$ and $Z$ are mutually independent. For $\epsilon, \delta \geq 0$ define $s_{\epsilon, \delta}: \cP(\mathbf r) \to \real$ as
\begin{equation}\label{eq: sdef}
s_{\epsilon, \delta}(X) \defn \sum_{i=1}^k d_i h(X_i + \sqrt \delta W_i) - \sum_{j =1}^m c_j h(A_j (X+\sqrt \delta W) + \sqrt \epsilon Z_j).
\end{equation}
\end{definition}

Let $S_{\epsilon, \delta}$ be the concave envelope of $s_{\epsilon, \delta}$.
%\red
%; i.e., the smallest concave function that pointwise dominates $s_{\epsilon, \delta}$. 
%(defined as in footnote~\ref{fn:hull}).
%\black
Let $U$ be an auxiliary random variable taking values in a finite set $\cU$ such that we have
$p_{X|U}(\cdot|U) \in \cP(\mathbf r)$. It is easy to see that the concave envelope has an equivalent definition in terms of such choices of $U$:
\begin{equation}\label{eq: defF}
S_{\epsilon, \delta}(X) \defn \sup_{U ~:~ p_{X|U}(\cdot|U) \in \cP(\mathbf r)}  \sum_{i=1}^k d_i h(X_i + \sqrt \delta W_i | U) - \sum_{j =1}^m c_j h(A_j (X +\sqrt \delta W) + \sqrt \epsilon Z_j \mid U),
\end{equation} 
where, on the right hand side of equation \eqref{eq: defF}, we can assume that $W$, $Z$ and $(U, X)$ are mutually independent. For a particular choice of $U$, define 
\begin{equation}\label{eq: s_given_u}
s_{\epsilon, \delta}(X \mid U) := \sum_{i=1}^k d_i h(X_i + \sqrt \delta W_i| U) - \sum_{j=1}^m c_j h(A_j (X+\sqrt \delta W) + \sqrt \epsilon Z_j | U).
\end{equation}

Analogous to $\cP(\mathbf r)$, define $\cP(2\mathbf r)$ to be the set of random variables that take values in $\real^{2r_1} \times \dots \times \real^{2r_k}$ and satisfy the conditions in Definition \ref{def: pg}. 
More precisely, a random vector $(X_1, X_2)$ is in $\cP(2  \mathbf r)$
if $X_1 := (X_{11}, \ldots, X_{k1})$
and $X_2 := (X_{12}, \ldots, X_{k2})$
are $\real^{n}$-valued random vectors such that the random vectors $(X_{i1}, X_{i2}) \in \real^{2r_i}$, $i \in [k]$ are mutually independent, satisfy the condition in equation \eqref{eq: CarEtal_condition}, and condition 3 of Definition \ref{def: pg} holds for $(X_1, X_2)$. Since the condition in equation \eqref{eq: CarEtal_condition}
is inherited by marginalization, we have that if $(X_1, X_2) \in \cP(2  \mathbf r)$ then $X_1 \in \cP(\mathbf r)$
and $X_2 \in \cP(\mathbf r)$.

We will need to define an extension of  $S_{\epsilon, \delta}$ to the larger space $\cP(2\mathbf r)$. Consider a random vector $(X_1, X_2) \in \cP(2  \mathbf r)$
as in the preceding paragraph. Define 
\begin{align}\label{eq: s_y1y2}
s_{\epsilon, \delta}(X_1, X_2)   &\defn \sum_{i=1}^k d_i h(X_{i1}+\sqrt \delta W_{i1}, X_{i2} + \sqrt \delta W_{i2} ) \nonumber \\
& \qquad - \sum_{j=1}^m c_j h(A_j (X_{1} + \sqrt \delta W_1) + \sqrt \epsilon Z_{j1}, A_j (X_{2}+ \sqrt \delta W_2) + \sqrt \epsilon Z_{j2}),
\end{align}
where $(W_1, W_2, Z_1, Z_2)$ are mutually independent standard normal distributions of the appropriate dimensions that are independent of $(X_1, X_2)$.
The concave envelope of $s_{\epsilon, \delta}(X_1,X_2)$ 
%is defined as
%\red
%the smallest concave function that pointwise dominates $s_{\epsilon, \delta}$, 
%in footnote~\ref{fn:hull},
%\black
%and 
can be written as:
\begin{align}\label{eq: y11}
S_{\epsilon, \delta}(X_1, X_2) &= \sup_{U ~:~ p_{X_1 X_2|U}(\cdot, \cdot|U) \in \cP(2\mathbf r)} \sum_{i=1}^k d_i h(X_{i1}+\sqrt \delta W_{i1}, X_{i2} + \sqrt \delta W_{i2} \mid U ) \nonumber \\
&- \sum_{j=1}^m c_j h(A_j (X_{1} + \sqrt \delta W_1) + \sqrt \epsilon Z_{j1}, A_j (X_{2}+ \sqrt \delta W_2) + \sqrt \epsilon Z_{j2} \mid U),
\end{align}
where $W_1$, $W_2$, $Z_1$, $Z_2$ and $(U, X_1, X_2)$ are mutually independent, with $U$ taking values in  finite sets $\cU$ and 
$p_{X_1,X_2|U}(\cdot, \cdot|U) \in  \cP(2\mathbf r)$.
Figure \ref{fig:gmodel} illustrates the relations between the random variables via a graphical model. 

\begin{figure}[htb] \centering
\begin{tikzpicture}
\node at(1.5,3) {$X_1$};
\node at(2.5,3) {$X_2$};
\node at(1.5,2.2) {$X_{11}$};
\node at(2.5,2.2) {$X_{12}$};
\node at(1.5,1.2) {$X_{21}$};
\node at(2.5,1.2) {$X_{22}$};
\node at(1.5,-0.5) {$\vdots$};
\node at(2.5,-0.5) {$\vdots$};
\node at(1.5,-2.2) {$X_{k1}$};
\node at(2.5,-2.2) {$X_{k2}$};
\node at(-2,0) {$U$};
\draw[rounded corners=5pt,color=blue]
  (1.1,-2.5) rectangle ++(0.8,5);
  \draw[rounded corners=5pt,color=blue]
  (2.1,-2.5) rectangle ++(0.8,5);
  \draw[thick,->] (1.8, -2.2) -- (2.2, -2.2);
  \draw[thick,->] (1.8, 1.2) -- (2.2, 1.2);
  \draw[thick,->] (1.8, 2.2) -- (2.2, 2.2);
  \draw[thick,->] (-1.8, 0.3) -- (1.2, 2.2);
  \draw[thick,->] (-1.8, 0.1) -- (1.2, 1.2);
  \draw[thick,->] (-1.8, -0.3) -- (1.2, -2.2);
  \draw[thick,->] (-1.8, 0.26) -- (2.4, 2.0);
  \draw[thick,->] (-1.8, 0.06) -- (2.4, 1.0);
  \draw[thick,->] (-1.8, -0.26) -- (2.4, -2.0);
\end{tikzpicture}
\caption{Illustration of the Markov relationship}
\label{fig:gmodel}
\end{figure}
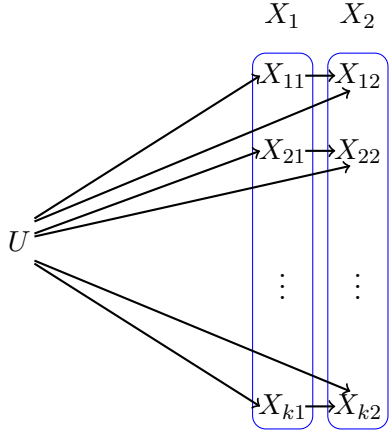

\subsubsection{Proof of subadditivity}\label{section: tensorization: proof}
\begin{lemma}[Subadditivity lemma]\label{lemma: EPI+BL+sub}
For any $\epsilon, \delta \geq 0$, the function $S_{\epsilon, \delta}$ is subadditive; i.e., if $(X_1, X_2) \in \cP(2 \mathbf r)$ then 
\begin{equation}
S_{\epsilon, \delta}(X_1, X_2) \leq S_{\epsilon, \delta}(X_1) + S_{\epsilon, \delta}(X_2).
\end{equation}
\end{lemma}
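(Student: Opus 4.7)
The plan is to adapt to the general BL–EPI datum the two-expansion argument sketched for the EPI in Section~\ref{section: tp}. Fix an auxiliary random variable $U$ of finite range such that $p_{X_1,X_2\mid U}(\cdot,\cdot\mid U)\in\cP(2\mathbf{r})$, and write $\tilde{X}_{i\ell}:=X_{i\ell}+\sqrt{\delta}W_{i\ell}$ and $\tilde{A}_jX_\ell:=A_j(X_\ell+\sqrt{\delta}W_\ell)+\sqrt{\epsilon}Z_{j\ell}$ for $\ell\in\{1,2\}$. I will first prove the pointwise bound $s_{\epsilon,\delta}(X_1,X_2\mid U)\le S_{\epsilon,\delta}(X_1)+S_{\epsilon,\delta}(X_2)$ and then take the supremum over such $U$ on the left to conclude. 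The strategy is to apply, term by term to each joint entropy in $s_{\epsilon,\delta}(X_1,X_2\mid U)$, both of the chain rules $h(P,Q\mid U)=h(P\mid U)+h(Q\mid U)-I(P;Q\mid U)$ and $h(P,Q\mid U)=h(P\mid U,Q)+h(Q\mid U,P)+I(P;Q\mid U)$, producing two distinct upper bounds whose mutual-information corrections are exact negatives of each other.

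The first chain rule immediately yields
$$s_{\epsilon,\delta}(X_1,X_2\mid U)=s_{\epsilon,\delta}(X_1\mid U)+s_{\epsilon,\delta}(X_2\mid U)+T_3(U),$$
where $T_3(U):=-\sum_i d_iI(\tilde{X}_{i1};\tilde{X}_{i2}\mid U)+\sum_j c_jI(\tilde{A}_jX_1;\tilde{A}_jX_2\mid U)$. Since each $p_{X_\ell\mid U}(\cdot\mid U)$ lies in $\cP(\mathbf{r})$, the definition of the concave envelope yields $s_{\epsilon,\delta}(X_\ell\mid U)\le S_{\epsilon,\delta}(X_\ell)$, and hence
$$s_{\epsilon,\delta}(X_1,X_2\mid U)\le S_{\epsilon,\delta}(X_1)+S_{\epsilon,\delta}(X_2)+T_3(U). \quad(\star)$$
The second chain rule gives $s_{\epsilon,\delta}(X_1,X_2\mid U)=R_1(U)+R_2(U)-T_3(U)$, where each $R_\ell(U)$ has \emph{mixed} conditioning — namely $(U,\tilde{X}_{i,3-\ell})$ in its $X_i$-entropies and $(U,\tilde{A}_jX_{3-\ell})$ in its $A_jX$-entropies — and is therefore not yet directly comparable to a single $S_{\epsilon,\delta}(X_\ell)$.

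The pivotal step will be to unify the two conditionings inside each $R_\ell(U)$ to the single auxiliary $V_\ell:=(U,\tilde{X}_{3-\ell})$. Because $(X_{i1},X_{i2})_{i\in[k]}$ are conditionally independent across $i$ given $U$ and the noises $W,Z$ are globally independent, for each $i$ the variable $\tilde{X}_{i1}$ is conditionally independent of $\tilde{X}_{i'2}$ ($i'\neq i$) given $(U,\tilde{X}_{i2})$, whence $h(\tilde{X}_{i1}\mid U,\tilde{X}_{i2})=h(\tilde{X}_{i1}\mid U,\tilde{X}_2)$. For the $A_jX$-entropies, the identity $\tilde{A}_jX_2=A_j\tilde{X}_2+\sqrt{\epsilon}Z_{j2}$ together with the independence of $Z_{j2}$ from everything else gives $h(\tilde{A}_jX_1\mid U,\tilde{X}_2,\tilde{A}_jX_2)=h(\tilde{A}_jX_1\mid U,\tilde{X}_2)$, whence
$$h(\tilde{A}_jX_1\mid U,\tilde{A}_jX_2)=h(\tilde{A}_jX_1\mid U,\tilde{X}_2)+I(\tilde{A}_jX_1;\tilde{X}_2\mid U,\tilde{A}_jX_2).$$
Collecting these identities yields $R_\ell(U)=\tilde{R}_\ell(U)-I_\ell(U)$, where $\tilde{R}_\ell(U):=\sum_i d_ih(\tilde{X}_{i\ell}\mid V_\ell)-\sum_j c_jh(\tilde{A}_jX_\ell\mid V_\ell)$ and $I_\ell(U):=\sum_j c_jI(\tilde{A}_jX_\ell;\tilde{X}_{3-\ell}\mid U,\tilde{A}_jX_{3-\ell})\ge0$. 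Since conditioning on $V_\ell$ preserves the mutual independence of the $X_{i\ell}$ across $i$, one has $\tilde{R}_\ell(U)\le S_{\epsilon,\delta}(X_\ell)$, producing the second bound
$$s_{\epsilon,\delta}(X_1,X_2\mid U)\le S_{\epsilon,\delta}(X_1)+S_{\epsilon,\delta}(X_2)-T_3(U)-I_1(U)-I_2(U). \quad(\star\star)$$

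Averaging $(\star)$ and $(\star\star)$ will cancel $T_3(U)$ and discard the non-negative $I_1(U)+I_2(U)$, delivering the desired pointwise inequality; a supremum over admissible $U$ then completes the proof. The hard part will be the conditioning-homogenization step: carefully verifying that the transitions $(U,\tilde{X}_{i,3-\ell})\to(U,\tilde{X}_{3-\ell})$ are cost-free for the $X_i$-entropies, while the analogous transitions $(U,\tilde{A}_jX_{3-\ell})\to(U,\tilde{X}_{3-\ell})$ introduce only non-negative mutual informations, and confirming that the aggregate correction $T_3(U)$ from the first chain rule is exactly the negative of the one arising from the second. A secondary concern is that $V_\ell$ is continuous rather than finite-valued, but the Gaussian perturbations $W$ and $Z$ built into $s_{\epsilon,\delta}$ should handle this, by ensuring $p_{X_\ell\mid V_\ell}\in\cP(\mathbf{r})$ almost surely and enabling an approximation of the continuous auxiliary $V_\ell$ by finite-valued ones in the definition of $S_{\epsilon,\delta}$.
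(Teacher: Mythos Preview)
Your proposal is correct and follows essentially the same approach as the paper: the same two chain-rule expansions, the same homogenization of conditioning via the auxiliaries $(U,\tilde X_{3-\ell})$ yielding $R_\ell(U)=\tilde R_\ell(U)-I_\ell(U)$ with $I_\ell(U)\ge 0$, and the same combination of the two resulting bounds to eliminate $T_3(U)$ (the paper uses $\min(T_3,-T_3)\le 0$ rather than averaging, but these are equivalent). Your flagged concern about the continuous auxiliary $V_\ell$ is a point the paper also glosses over, simply asserting $\tilde R_\ell(U)\le S_{\epsilon,\delta}(X_\ell)$ from $p_{X_\ell\mid V_\ell}\in\cP(\mathbf r)$ without discussing the finite-vs-continuous $U$ issue.
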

\begin{corollary}\label{cor: EPI+BL}
For any $\epsilon, \delta \geq 0$, the function $S_{\epsilon, \delta}$ tensorizes; i.e., if $X_1, X_2 \in \cP(\mathbf r)$ and if $X_1 \ind X_2$, then
\begin{equation}
S_{\epsilon, \delta}(X_1, X_2) = S_{\epsilon, \delta}(X_1) + S_{\epsilon, \delta}(X_2).
\end{equation}
\end{corollary}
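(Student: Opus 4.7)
The plan is to show the matching lower bound $S_{\epsilon,\delta}(X_1,X_2) \geq S_{\epsilon,\delta}(X_1) + S_{\epsilon,\delta}(X_2)$ under the independence hypothesis $X_1 \ind X_2$; combined with Lemma~\ref{lemma: EPI+BL+sub} this yields equality.

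First I observe that under independence the unperturbed-envelope function $s_{\epsilon,\delta}$ itself tensorizes. Since $X_1 \ind X_2$ (together with the mutual independence of the noise variables $W_1,W_2,Z_{j1},Z_{j2}$ built into the definition of $s_{\epsilon,\delta}(X_1,X_2)$), for each $i$ we have $X_{i1}+\sqrt{\delta}W_{i1}$ independent of $X_{i2}+\sqrt{\delta}W_{i2}$, so the joint entropy splits into a sum of marginals. Likewise, for each $j$ the pair $A_j(X_1+\sqrt\delta W_1)+\sqrt\epsilon Z_{j1}$ and $A_j(X_2+\sqrt\delta W_2)+\sqrt\epsilon Z_{j2}$ is a product, so the corresponding joint entropy splits. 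Comparing with \eqref{eq: sdef} and \eqref{eq: s_y1y2} gives $s_{\epsilon,\delta}(X_1,X_2) = s_{\epsilon,\delta}(X_1) + s_{\epsilon,\delta}(X_2)$.

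Next I lift this to the concave envelopes. Fix finite-valued auxiliaries $U_1,U_2$ with $p_{X_1|U_1}(\cdot|u_1) \in \cP(\mathbf r)$ and $p_{X_2|U_2}(\cdot|u_2) \in \cP(\mathbf r)$ for every value $u_1,u_2$, and draw them as an independent pair $U := (U_1,U_2)$ with $U_1 \ind U_2$ and with the natural coupling $U_1 \to X_1$, $U_2 \to X_2$. Under this coupling $X_1$ and $X_2$ remain conditionally independent given $U$, hence $p_{X_1,X_2|U}(\cdot,\cdot|u_1,u_2) \in \cP(2\mathbf r)$. Applying the same splitting argument entry-wise, but now to the conditional entropies, yields
\begin{equation*}
s_{\epsilon,\delta}(X_1,X_2 \mid U_1,U_2) = s_{\epsilon,\delta}(X_1 \mid U_1) + s_{\epsilon,\delta}(X_2 \mid U_2).
\end{equation*}
Since $(U_1,U_2)$ is an admissible auxiliary for the variational formula \eqref{eq: y11}, the left-hand side is a lower bound for $S_{\epsilon,\delta}(X_1,X_2)$. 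Taking the supremum independently over $U_1$ and $U_2$ on the right-hand side and using the definition \eqref{eq: defF} produces $S_{\epsilon,\delta}(X_1,X_2) \geq S_{\epsilon,\delta}(X_1) + S_{\epsilon,\delta}(X_2)$.

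There is essentially no obstacle: the only item to be mildly careful about is that $(U_1,U_2)$ remains finitely supported (which is immediate) and that the conditional distribution of $(X_1,X_2)$ given $(U_1,U_2)$ really sits inside $\cP(2\mathbf r)$, which follows because $\cP(\mathbf r)$-membership is preserved under independent products and the integrability condition \eqref{eq: CarEtal_condition} is stable under independent product (the product density's integrability against $\log(1+\cdot)$ reduces to the individual integrabilities, invoking Remark~\ref{rem:inherit} for the marginals). Combining the upper bound from Lemma~\ref{lemma: EPI+BL+sub} with the lower bound just established closes the proof.
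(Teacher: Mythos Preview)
Your proof is correct and follows essentially the same approach as the paper: establish the lower bound $S_{\epsilon,\delta}(X_1,X_2) \ge S_{\epsilon,\delta}(X_1)+S_{\epsilon,\delta}(X_2)$ by restricting to product auxiliaries $U=(U_1,U_2)$ with $(U_1,X_1)\ind(U_2,X_2)$, then combine with the subadditivity Lemma~\ref{lemma: EPI+BL+sub}. You simply spell out more of the routine checks (splitting of $s_{\epsilon,\delta}$, membership in $\cP(2\mathbf r)$) than the paper does; one small quibble is that Remark~\ref{rem:inherit} concerns inheritance under marginalization rather than stability under products, so it is not quite the right citation, though your direct argument that $\log(1+f_1f_2)\le\log(1+f_1)+\log(1+f_2)$ handles the product case anyway.
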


\begin{proof}[Proof of Lemma \ref{lemma: EPI+BL+sub}]

Let $U$ be an auxiliary random variable
taking values in a finite set $\cU$, such that  $p_{X_1,X_2 \mid U} (\cdot, \cdot|U) \in \cP(2 \mathbf r)$. Consider the following expansion, which comes from applying expansion (A) term by term: 
\begin{align}
s_{\epsilon, \delta}(X_1, X_2 \mid U)
&= \Bigg[ \sum_{i=1}^k d_i h(X_{i1} + \sqrt \delta W_{i1} | U ) - \sum_{j=1}^m c_j h(A_j (X_{1}+\sqrt \delta W_1) + \sqrt \epsilon Z_{j1} | U) \Bigg] \nonumber\\
& \quad + \Bigg[\sum_{i=1}^k d_i h( X_{i2} + \sqrt \delta W_{i2}| U ) - \sum_{j=1}^m c_j h(A_j (X_{2} + \sqrt \delta W_2) + \sqrt \epsilon Z_{j2}| U) \Bigg] \nonumber\\
& \quad + \Bigg[ - \sum_{i=1}^k d_i I(X_{i1} + \sqrt \delta W_{i1}; X_{i2}+ \sqrt \delta W_{i2} | U) \nonumber \\
&\qquad \qquad + \sum_{j=1}^m c_j I(A_j (X_{1}+\sqrt \delta W_1) + \sqrt \epsilon Z_{j1}; A_j (X_{2} + \sqrt \delta W_2) + \sqrt \epsilon Z_{j2} | U)\Bigg]. \label{eq: exp1}
\end{align}

For simplicity, denote the terms in the square brackets by $T_1(U)$, $T_2(U)$, and $T_3(U)$, respectively,
even though they actually depend on $p_{U|X_1,X_2}$.
Observe that   that $p_{X_1|U}(\cdot|U), p_{X_2|U}(\cdot|U) \in \cP(\mathbf r)$ (see Figure \ref{fig:gmodel}). Thus, we conclude that $T_1(U) \leq S_{\epsilon, \delta}(X_1)$ and $T_2(U) \leq S_{\epsilon, \delta}(X_2)$, using the definition in equation \eqref{eq: defF}. Substituting these inequalities, we arrive at 
\begin{align}
s_{\epsilon, \delta}(X_1, X_2 | U) \leq S_{\epsilon, \delta}(X_1) + S_{\epsilon, \delta}(X_2) + T_3(U).\label{eq: f2}
\end{align}
We now expand 
$s_{\epsilon, \delta}(X_1,X_2 \mid U) $
in a different way, which comes from applying expansion (B) term by term:
{\small \begin{align}
&s_{\epsilon, \delta}(X_1, X_2 \mid U)\\ 
&= \Big[ \sum_{i=1}^k d_i h(X_{i1} + \sqrt \delta W_{i1} | U, X_{i2} + \sqrt \delta W_{i2} ) - \sum_{j=1}^m c_j h(A_j (X_{1}+\sqrt \delta W_1)+\sqrt \epsilon Z_{j1} | U, A_j (X_{2}+\sqrt \delta W_2)+\sqrt \epsilon Z_{j2}) \Big] \nonumber\\
& + \Big[\sum_{i=1}^k d_i h( X_{i2} + \sqrt \delta W_{i2} | U, X_{i1} + \sqrt \delta W_{i1} ) - \sum_{j=1}^m c_j h(A_j (X_{2}+\sqrt \delta W_2)+\sqrt \epsilon Z_{j2} | U, A_j (X_{1}+\sqrt \delta W_1)+\sqrt \epsilon Z_{j1}) \Big] \nonumber\\
& + \Big[ \sum_{i=1}^k d_i I(X_{i1} + \sqrt \delta W_{i1}; X_{i2} + \sqrt \delta W_{i2} | U) - \sum_{j=1}^m c_j I(A_j (X_{1}+\sqrt \delta W_1)+\sqrt \epsilon Z_{j1}; A_j (X_{2}+\sqrt \delta W_2)+\sqrt \epsilon Z_{j2} | U)\Big]. \label{eq: exp2}
\end{align}}
For ease of notation, call the three terms in the square brackets $R_1(U)$, $R_2(U)$, and $R_3(U) = -T_3(U)$, respectively,
even though each term actually depends on $p_{U|X_1,X_2}$.
Similar to inequality \eqref{eq: f2}, we would like to upper bound $R_1(U)$ and $R_2(U)$ by $S_{\epsilon, \delta}(X_1)$ and $S_{\epsilon, \delta}(X_2)$ respectively. However, the conditioning in each of the two differential entropy terms in each 
$R_a(U)$, $a = 1,2$ is not the same, so we cannot directly conclude such a bound. Using the chain rule of mutual information and data-processing relations, we may make the conditioning in $R_1(U)$ and $R_2(U)$ uniform by introducing some extra mutual information terms:
\begin{align*}
&R_1(U)\\ 
&= \Big[ \sum_{i=1}^k d_i h(X_{i1} + \sqrt \delta W_{i1} | U, X_{i2} + \sqrt \delta W_{i2} ) - \sum_{j=1}^m c_j h(A_j (X_{1}+\sqrt \delta W_1)+\sqrt \epsilon Z_{j1} | U, A_j (X_{2}+\sqrt \delta W_2)+\sqrt \epsilon Z_{j2}) \Big] \\
&= \Big[ \sum_{i=1}^k d_i h(X_{i1} + \sqrt \delta W_{i1} | U, X_{2} + \sqrt \delta W_{2} ) - \sum_{j=1}^m c_j h(A_j (X_{1}+\sqrt \delta W_1)+\sqrt \epsilon Z_{j1} | U, X_{2}+\sqrt \delta W_2) \Big]\\
&- \sum_{j=1}^m c_jI(  A_j (X_1+\sqrt \delta W_1) + \sqrt \epsilon Z_{j1} ; X_2+\sqrt \delta W_2 | U, A_j (X_2 + \sqrt \delta W_2)+\sqrt \epsilon Z_{j2})\\
&=: \tilde R_1(U) - I_1(U),
\end{align*}
where we write $\tilde R_1(U)$
and $I_1(U)$ for simplicity, even thought the corresponding terms depend on $p_{U|X_1,X_2}$.
The above steps are justified as follows. First, it is easy to check that $(X_{i1} + \sqrt \delta W_{i1}) \ind \{X_{l2} + \sqrt \delta W_{l2}\}_{l \neq i}$ conditioned on $(U, X_{i2} + \sqrt \delta W_{i2})$. This means that, for all $1 \leq i \leq k$, 
\begin{align*}
&h(X_{i1} + \sqrt \delta W_{i1} | U, X_{i2} + \sqrt \delta W_{i2} )\\
&= h(X_{i1} + \sqrt \delta W_{i1} | U, X_{12} + \sqrt \delta W_{12}, \dots, X_{i2} + \sqrt \delta W_{i2}, \dots, X_{k2} + \sqrt \delta W_{k2} )\\
&= h(X_{i1} + \sqrt \delta W_{i1} | U, X_2 + \sqrt \delta W_2 ).
\end{align*}
Also, we may verify the Markov chain (conditioned on $U$)
\begin{align*}
\Big[A_j (X_2+\sqrt \delta W_2) + \sqrt \epsilon Z_{j2}\Big] \rt \Big[X_2+\sqrt \delta W_2\Big] \rt  \Big[A_j (X_1+ \sqrt \delta W_1) + \sqrt \epsilon Z_{j1}\Big],
\end{align*}
which gives the equality
\begin{align*}
&h(A_j (X_{1}+\sqrt \delta W_1) + \sqrt \epsilon Z_{j1} | U, A_j (X_2+\sqrt \delta W_2) + \sqrt \epsilon Z_{j2})\\
 &= h(A_j (X_{1} + \sqrt \delta W_1) + \sqrt \epsilon Z_{j1} | U, X_2+\sqrt \delta W_2) \\
 &+ I(  A_j (X_1+ \sqrt \delta W_1) + \sqrt \epsilon Z_{j1}; X_2+\sqrt \delta W_2 | U, A_j (X_2+\sqrt \delta W_2) + \sqrt \epsilon Z_{j2}).
\end{align*}
Similar reasoning for $R_2(U)$ gives 
\begin{align*}
&R_2(U)\\ 
&= \Big[ \sum_{i=1}^k d_i h(X_{i2} + \sqrt \delta W_{i2} | U, X_{i1} + \sqrt \delta W_{i1} ) - \sum_{j=1}^m c_j h(A_j (X_{2}+\sqrt \delta W_2)+\sqrt \epsilon Z_{j2} | U, A_j (X_{1}+\sqrt \delta W_1)+\sqrt \epsilon Z_{j1}) \Big] \\
&= \Big[ \sum_{i=1}^k d_i h(X_{i2} + \sqrt \delta W_{i2} | U, X_{1} + \sqrt \delta W_{1} ) - \sum_{j=1}^m c_j h(A_j (X_{2}+\sqrt \delta W_2)+\sqrt \epsilon Z_{j2} | U, X_{1}+\sqrt \delta W_1) \Big]\\
&- \sum_{j=1}^m c_jI(  A_j (X_2+\sqrt \delta W_2) + \sqrt \epsilon Z_{j2} ; X_1+\sqrt \delta W_1 | U, A_j (X_1 + \sqrt \delta W_1)+\sqrt \epsilon Z_{j1})\\
&=: \tilde R_2(U) - I_2(U),
\end{align*}
where we use the notation $\tilde R_2(U)$ and $I_2(U)$ for simplicity, even though the corresponding terms depend on $p_{U|X_1,X_2}$.
Substituting the expressions for $R_1(U)$ and $R_2(U)$ in the expansion in equation \eqref{eq: exp2}, we arrive at 
\begin{align}
s_{\epsilon, \delta}(X_1, X_2 \mid U) &= \tilde R_1(U) + \tilde R_2(U) - T_3(U) - I_1(U) - I_2(U) \nonumber \\
&\stackrel{(a)}\leq S_{\epsilon, \delta}(X_1) + S_{\epsilon, \delta}(X_2) - T_3(U) - I_1(U) - I_2(U) \nonumber\\
&\stackrel{(b)}\leq S_{\epsilon, \delta}(X_1) + S_{\epsilon, \delta}(X_2) - T_3(U). \label{eq: f3}
\end{align}
Here, in step $(a)$ we used the fact that $p_{X_1|U,X_2+\sqrt \delta W_2}(\cdot|U,X_2+\sqrt \delta W_2), p_{X_2|U,X_{1}+\sqrt \delta W_1}(\cdot|U,X_{1}+\sqrt \delta W_1) \in \cP(\mathbf r)$ 
and the definition in equation \eqref{eq: defF}. Step $(b)$ follows by noticing that the $c_j$ are non-negative, and so are $I_1(U)$ and $I_2(U)$ since they are nonnegative linear combinations of mutual informations.

\iffalse
Consider a sequence of $U_t$'s satisfying $p_{X_1X_2|U_t} (\cdot|U_t) \in \cP(2\mathbf r)$ such that 
\begin{equation}
\lim_{t \to \infty} s_{\epsilon, \delta}(X_1, X_2 \mid U_t) = S_{\epsilon, \delta}(X_1, X_2).
\end{equation}
Inequality \eqref{eq: f2} applied to this sequence of $U_t$'s yields
\begin{equation}\label{eq: ff2}
S_{\epsilon, \delta}(X_1, X_2) \leq S_{\epsilon, \delta}(X_1) + S_{\epsilon, \delta}(X_2) + \liminf_{t \to \infty} T_3(U_t),
\end{equation}
whereas inequality \eqref{eq: f3} yields
\begin{equation}\label{eq: ff3}
S_{\epsilon, \delta}(X_1, X_2) \leq S_{\epsilon, \delta}(X_1) + S_{\epsilon, \delta}(X_2) - \limsup_{t \to \infty} T_3(U_t).
\end{equation}
We can combine inequalities \eqref{eq: ff2} and \eqref{eq: ff3} 
into a single inequality:
 \begin{equation}
 S_{\epsilon, \delta}(X_1, X_2) \leq S_{\epsilon, \delta}(X_1) + S_{\epsilon, \delta}(X_2) + \min \left( \liminf_{t \to \infty} T_3(U_t), - \limsup_{t \to \infty} T_3(U_t) \right).
 \end{equation}
 It is easy to see that 
 \begin{align*}
 \min \left( \liminf_{t \to \infty} T_3(U_t), - \limsup_{t \to \infty} T_3(U_t) \right) \leq 0,
 \end{align*}
 which yields the claimed subadditivity result.
\fi
We can combine inequalities
\eqref{eq: f2}
and \eqref{eq: f3} to get
\begin{equation}
s_{\epsilon, \delta}(X_1, X_2|U) \leq S_{\epsilon, \delta}(X_1) + S_{\epsilon, \delta}(X_2).
\end{equation}
Taking the supremum on the left hand side of this inequality over all auxiliary variables 
$U$
taking values in finite sets $\cU$, such that  $p_{X_1,X_2 \mid U} (\cdot, \cdot|U) \in \cP(2 \mathbf r)$,
yields the claimed subadditivity result.
\end{proof}

\begin{proof}[Proof of Corollary \ref{cor: EPI+BL}]
When $X_1 \ind X_2$, we have the inequality 
\begin{equation}\label{eq: f1}
S_{\epsilon, \delta}(X_1, X_2) \geq S_{\epsilon, \delta}(X_1) + S_{\epsilon, \delta}(X_2).
\end{equation}
This is because we can always choose $U \defn (U_1, U_2)$ such that $(U_1, X_1) \ind (U_2, X_2)$ and $p_{X_1|U_1}(\cdot|U_1)$, $p_{X_2|U_2}(\cdot|U_2) \in \cP(\mathbf r)$. The supremum in equation \eqref{eq: y11} over this restricted class of auxiliaries is simply $S_{\epsilon, \delta}(X_1) + S_{\epsilon, \delta}(X_2)$, which therefore is a  lower bound on $S_{\epsilon, \delta}(X_1, X_2)$. 
Inequality \eqref{eq: f1} combined with Lemma \ref{lemma: EPI+BL+sub} completes the proof of Corollary \ref{cor: EPI+BL}.
\end{proof}

Our next lemma serves to some extent as a converse to Corollary \ref{cor: EPI+BL}. In particular, we show that if $S_{\epsilon, \delta}(X_1, X_2) = S_{\epsilon, \delta}(X_1) + S_{\epsilon, \delta}(X_2)$, then $X_1$ and $X_2$ are independent conditioned on the optimal auxiliary $U^*$, assuming it exists. We point out that this converse requires $\epsilon$ and $\delta$ to be strictly bounded away from $0$, unlike Lemma~\ref{lemma: EPI+BL+sub}. The formal statement is as follows:
 
\begin{lemma}[Independence relations]\label{lemma: independence}
Fix $\epsilon, \delta > 0$.
Given $(X_1, X_2) \in \cP(2\mathbf r)$, suppose that $S_{\epsilon, \delta}(X_1, X_2) =  S_{\epsilon, \delta}(X_1) + S_{\epsilon, \delta}(X_2)$. Suppose that $U^*$ is such that $p_{X_1, X_2|U^*}(\cdot, \cdot|U^*) \in \cP(2\mathbf r)$ and $s_{\epsilon, \delta}(X_1, X_2 \mid U^*) =  S_{\epsilon, \delta}(X_1, X_2)$.  Then the following results hold:
\begin{enumerate}
    \item [(a)] For all $u^* \in \cU^*$, we have that $X_1 \ind X_2$ conditioned on $U^* = u^*$,
    \item[(b)] $s_{\epsilon, \delta}(X_1 | U^*) = S_{\epsilon, \delta}(X_1)$ and $s_{\epsilon, \delta}(X_2 | U^*) = S_{\epsilon, \delta}(X_2)$.
\end{enumerate} 

\end{lemma}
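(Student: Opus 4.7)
The plan is to trace the equality analysis of the inequality chain in the proof of Lemma~\ref{lemma: EPI+BL+sub} at $U = U^*$, then specialize the resulting identities to each realization $u^*$, and finally upgrade the vanishing mutual-information expressions to literal conditional independence using the Gaussian perturbations.

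For part (b), inequalities (f2) and (f3) applied at $U^*$ under the hypothesized equality $s_{\epsilon,\delta}(X_1, X_2 \mid U^*) = S_{\epsilon,\delta}(X_1) + S_{\epsilon,\delta}(X_2)$ give $T_3(U^*) \ge 0$ and $T_3(U^*) + I_1(U^*) + I_2(U^*) \le 0$, respectively; the nonnegativity of $I_1(U^*), I_2(U^*)$ as nonnegative combinations of mutual informations forces $T_3(U^*) = I_1(U^*) = I_2(U^*) = 0$. With $T_3(U^*) = 0$, (f2) collapses to $T_1(U^*) + T_2(U^*) = S_{\epsilon,\delta}(X_1) + S_{\epsilon,\delta}(X_2)$, and the pointwise bounds $T_a(U^*) \le S_{\epsilon,\delta}(X_a)$ then yield $T_a(U^*) = S_{\epsilon,\delta}(X_a)$; recognizing $T_a(U^*) = s_{\epsilon,\delta}(X_a \mid U^*)$ gives part (b). The analogous equality in step (a) of the derivation of (f3) further yields $\tilde R_a(U^*) = S_{\epsilon,\delta}(X_a)$, i.e., $s_{\epsilon,\delta}(X_1 \mid U^*, \tilde X_2) = S_{\epsilon,\delta}(X_1)$ and symmetrically for $X_2$, where $\tilde X_a := X_a + \sqrt\delta W_a$ and $\tilde X_{ia} := X_{ia} + \sqrt\delta W_{ia}$.

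For part (a), I specialize these identities to each $u^*$. Applying Lemma~\ref{lemma: EPI+BL+sub} to $p_{X_1,X_2\mid u^*} \in \cP(2\mathbf r)$ and using the concavity of $S_{\epsilon,\delta}$ via Jensen's inequality, the chain
\[
 s_{\epsilon,\delta}(X_1, X_2 \mid u^*) \;\le\; S_{\epsilon,\delta}(X_1, X_2 \mid u^*) \;\le\; S_{\epsilon,\delta}(X_1 \mid u^*) + S_{\epsilon,\delta}(X_2 \mid u^*),
\]
after averaging over $u^*$ and invoking $E_{U^*}[S_{\epsilon,\delta}(X_a \mid u^*)] \le S_{\epsilon,\delta}(X_a)$, collapses to a single equal quantity under our hypothesis. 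This forces, for each $u^*$ with positive probability, the pointwise equalities $s_{\epsilon,\delta}(X_a \mid u^*) = S_{\epsilon,\delta}(X_a \mid u^*)$ and $s_{\epsilon,\delta}(X_1, X_2 \mid u^*) = s_{\epsilon,\delta}(X_1 \mid u^*) + s_{\epsilon,\delta}(X_2 \mid u^*)$; by the two-way entropy expansion used in the subadditivity proof, the latter reads $T_3(u^*) = 0$. Pointwise nonnegativity of $I_1(u^*), I_2(u^*)$ combined with $I_1(U^*) = I_2(U^*) = 0$ gives $I_1(u^*) = I_2(u^*) = 0$ per $u^*$. Similarly, the pointwise bound $s_{\epsilon,\delta}(X_1 \mid u^*, \tilde X_2) \le s_{\epsilon,\delta}(X_1 \mid u^*)$ (established via $s = S$ per $u^*$ and the concavity of $S$) combined with averaged equality yields, for each $u^*$,
\[
 \sum_{i=1}^k d_i\, I(\tilde X_{i1}; \tilde X_2 \mid u^*) \;=\; \sum_{j=1}^m c_j\, I(A_j \tilde X_1 + \sqrt\epsilon Z_{j1};\, \tilde X_2 \mid u^*).
\]

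The decisive step converts these vanishing mutual-information identities into literal conditional independence. From $I_1(u^*) = 0$, for each $j$ with $c_j > 0$, the Markov relation $A_j \tilde X_1 + \sqrt\epsilon Z_{j1} \ind \tilde X_2 \mid u^*, A_j \tilde X_2 + \sqrt\epsilon Z_{j2}$ holds. Since $Z_{j2}$ is independent of $(\tilde X_1, \tilde X_2, Z_{j1})$ and has full-support Gaussian density on $\real^{n_j}$ (the strict positivity $\epsilon > 0$ is essential here), the conditional law of $A_j \tilde X_1 + \sqrt\epsilon Z_{j1}$ given $(u^*, \tilde X_2)$ is free of $Z_{j2}$, yet by the Markov relation must coincide with the law given $(u^*, A_j \tilde X_2 + \sqrt\epsilon Z_{j2})$; as $z_2$ ranges over $\real^{n_j}$ the latter conditional is forced to be constant in its argument, so $A_j \tilde X_1 + \sqrt\epsilon Z_{j1} \ind \tilde X_2 \mid u^*$. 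Hence the right-hand side of the displayed identity vanishes. Assuming $d_i > 0$ for every $i$, each term on the left vanishes and $\tilde X_{i1} \ind \tilde X_2 \mid u^*$ for every $i$. Gaussian deconvolution (using $\delta > 0$, so that $W_a$ has full-support density and the Fourier transform of its law never vanishes) strips the $\sqrt\delta W_a$ perturbations to give $X_{i1} \ind X_2 \mid U^* = u^*$; combined with the cross-$i$ independence inherent in $\cP(2\mathbf r)$, this yields $X_1 \ind X_2 \mid U^* = u^*$, completing part (a). The main obstacle is precisely this final upgrade step: the strictly positive perturbations $\epsilon, \delta > 0$ are essential both for the support-based argument that promotes a Markov chain to outright independence, and for the Gaussian deconvolution that removes the smoothing.
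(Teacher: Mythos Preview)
Your proof is correct and shares the paper's core strategy: equality in the subadditivity chain forces $T_3(U^*) = I_1(U^*) = I_2(U^*) = 0$; your support-based argument upgrading the Markov relation from $I_1 = 0$ to outright independence is precisely the double-Markov Lemma~\ref{lemma: markov_ind}; and your Gaussian deconvolution is Lemma~\ref{lemma: xz_ind}. Where you diverge is in the intermediate bookkeeping. You localize to each $u^*$ via Jensen and a re-run of subadditivity, but this is unnecessary: since $I_1(U^*)$ and $I_2(U^*)$ are $U^*$-averages of nonnegative terms, their vanishing already gives $I_1(u^*) = I_2(u^*) = 0$ pointwise. More substantively, you derive and use the auxiliary identity $\sum_i d_i\, I(\tilde X_{i1}; \tilde X_2\mid u^*) = \sum_j c_j\, I(A_j\tilde X_1 + \sqrt\epsilon Z_{j1}; \tilde X_2\mid u^*)$ (obtained from $\tilde R_1(u^*) = s_{\epsilon,\delta}(X_1\mid u^*)$), whereas the paper bypasses it entirely: having shown $A_j\tilde X_1 + \sqrt\epsilon Z_{j1} \ind \tilde X_2$ given $u^*$, the paper substitutes directly into $T_3(U^*) = 0$, observes that the positive sum $\sum_j c_j\, I(A_j\tilde X_1 + \sqrt\epsilon Z_{j1}; A_j\tilde X_2 + \sqrt\epsilon Z_{j2}\mid U^*)$ vanishes, and reads off $\sum_i d_i\, I(\tilde X_{i1}; \tilde X_{i2}\mid U^*) = 0$. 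Both arguments require $d_i > 0$ for each $i$ at this final step, an assumption you make explicit but the paper leaves tacit.
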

\begin{proof}
Notice that the proof of Lemma \ref{lemma: EPI+BL+sub} implies that the optimizing $U^*$, if it exists, must satisfy $I_1(U^*) = I_2(U^*) = T_3(U^*) = 0$.
The first two equalities yield  the Markov chains (conditioned on $U^* = u^*$)
\begin{align*}
&\Big[A_j (X_1+ \sqrt \delta W_1) + \sqrt \epsilon Z_{j1}\Big] \rt \Big[A_j (X_2+\sqrt \delta W_2) + \sqrt \epsilon Z_{j2}\Big] \rt \Big[X_2+\sqrt \delta W_2\Big], \quad \text{ and }\\
&\Big[A_j (X_2+ \sqrt \delta W_2) + \sqrt \epsilon Z_{j2}\Big] \rt \Big[A_j (X_1+\sqrt \delta W_1) + \sqrt \epsilon Z_{j1}\Big] \rt \Big[X_1+\sqrt \delta W_1\Big].
\end{align*}
However, we have the obvious Markov chains
\begin{align*}
&\Big[A_j (X_1+ \sqrt \delta W_1) + \sqrt \epsilon Z_{j1}\Big] \rt  \Big[X_2+\sqrt \delta W_2\Big] \rt \Big[A_j (X_2+\sqrt \delta W_2) + \sqrt \epsilon Z_{j2}\Big], \quad \text{ and }\\
&\Big[A_j (X_2+ \sqrt \delta W_2) + \sqrt \epsilon Z_{j2}\Big] \rt  \Big[X_1+\sqrt \delta W_1\Big] \rt \Big[A_j (X_1+\sqrt \delta W_1) + \sqrt \epsilon Z_{j1}\Big].
\end{align*}
Using Lemma \ref{lemma: markov_ind}, we may conclude that, conditioned on $U^*$, we have
\begin{align*}
\Big[A_j (X_1+ \sqrt \delta W_1) + \sqrt \epsilon Z_{j1}\Big] &\ind \Big[X_2+\sqrt \delta W_2\Big], \quad \text{ and }\\
\Big[A_j (X_2+ \sqrt \delta W_2) + \sqrt \epsilon Z_{j2}\Big] &\ind \Big[X_1+\sqrt \delta W_1\Big].
\end{align*}
Recall that $T_3(U^*)$ is given by
\begin{align*}
& T_3(U^*)= \Bigg[ - \sum_{i=1}^k d_i I(X_{i1} + \sqrt \delta W_{i1}; X_{i2}+ \sqrt \delta W_{i2} | U^*) \\
& \qquad \qquad \qquad + \sum_{j=1}^m c_j I(A_j (X_{1}+\sqrt \delta W_1) + \sqrt \epsilon Z_{j1}; A_j (X_{2} + \sqrt \delta W_2) + \sqrt \epsilon Z_{j2} | U^*)\Bigg].
\end{align*}
Substituting the above independence relations in $T_3(U^*) = 0$, we conclude that, conditioned on $U^*$, we have
\begin{align*}
X_1 + \sqrt \delta W_1 \ind X_2 + \sqrt \delta W_2,
\end{align*}
which by Lemma \ref{lemma: xz_ind} implies that, conditioned on $U^*$, we have
\begin{align*}
X_1  \ind X_2,
\end{align*}
and concludes the proof of (a). 

Having proved (a), rewrite equation \eqref{eq: exp1}, with $U^*$ for $U$, as
\begin{equation}
    s_{\epsilon, \delta}(X_1, X_2|U^*) = s_{\epsilon, \delta}(X_1|U^*) + s_{\epsilon, \delta}(X_2|U^*).
\end{equation}
The above inequality, combined with the assumed equality $s_{\epsilon, \delta}(X_1, X_2|U^*) = S_{\epsilon, \delta}(X_1) + S_{\epsilon, \delta}(X_2)$, immediately yields 
\begin{align*}
    s_{\epsilon, \delta}(X_1|U^*) &= S_{\epsilon, \delta}(X_1) \quad, \text{ and }\\
    s_{\epsilon, \delta}(X_2|U^*) &= S_{\epsilon, \delta}(X_2).
\end{align*}
\end{proof}

\subsubsection{A general subadditivity result}\label{section: tensorization: general}
A closer inspection of the proof of Lemma~\ref{lemma: EPI+BL+sub} reveals that the linear functions mapping $X$ to $A_j X$ could be replaced with general channels. To be precise, let $X = (X_1, X_2, \dots, X_k) \in \cP(\mathbf r)$ and for $j \in [m]$, consider $m$ channels $p_{Y_j | X}$ from $X$ to $Y_j$. Define the function $s: \cP(\mathbf r) \to \real$ as
\begin{align}
s(X) := \sum_{i=1}^k d_i h(X_i) - \sum_{j=1}^m c_jh(Y_j),
\end{align}
and let $S$ be its concave envelope. The function $s$ is lifted to pairs of random variables $(X_1, X_2) \in \cP(2\mathbf r)$ as
\begin{align*}
s(X_1, X_2) := \sum_{i=1}^k d_i h(X_{i1}, X_{i2}) - \sum_{j=1}^m c_j h(Y_{j1}, Y_{j2}),
\end{align*}
where the channel from $(X_1, X_2)$ to $(Y_{j1}, Y_{j2})$ is given by $p_{Y_{j1}, Y_{j2} | X_1, X_2} = p_{Y_{j1}| X_1}p_{Y_{j2} | X_2}$. Let $S(X_1, X_2)$ be the concave envelope of $s(X_1, X_2)$.
%\red
%(defined as in footnote~\ref{fn:hull}).
%\black
\begin{claim}\label{claim: sub-general}
The function $S$ is subadditive; i.e., $S(X_1, X_2) \leq S(X_1) + S(X_2)$.
\end{claim}
\begin{proof}
Let $U$ be an auxiliary random variable taking values in a finite set $\cU$, such that  $p_{X_1,X_2 \mid U} (\cdot, \cdot|U) \in \cP(2 \mathbf r)$. Note that
\begin{align*}
&\sum_{i=1}^k d_i h(X_{i1}, X_{i2} | U) - \sum_{j=1}^m c_j h(Y_{j1}, Y_{j2} | U)\\
&\quad = \sum_{i=1}^k d_i h(X_{i1} | U) - \sum_{j=1}^m c_j h(Y_{j1} | U) + \sum_{i=1}^k d_i h(X_{i2} | U, X_{i1}) - \sum_{j=1}^m c_j h(Y_{j2} | U, Y_{j1})\\
&\quad \stackrel{(a)}\leq \sum_{i=1}^k d_i h(X_{i1} | U) - \sum_{j=1}^m c_j h(Y_{j1} | U) + \sum_{i=1}^k d_i h(X_{i2} | U, X_{1}) - \sum_{j=1}^m c_j h(Y_{j2} | U, X_{1}) \\
&\quad \leq S(X_1) + S(X_2).
\end{align*}

%Step $(a)$ above is justified exactly as in the proof of Lemma~\ref{lemma: EPI+BL+sub}.
To verify step (a) it suffices to show that $h(X_{i2} | U, X_{i1}) \le h(X_{i2} | U, X_{1})$
for each $i \in [k]$. In fact we have equality here because, as is easily verified,
we have $(X_{l1}, l \neq i)$ conditionally independent of
$X_{i2}$ given $X_{i1}$ and $U$.
To verify the last inequality, observe that $(X_{i1}, 1 \le i \le k)$ are conditionally independent given $U$
and $(X_{i2}, 1 \le i \le k)$ are conditionally independent given $(U,X_1)$.
%\blue
%(Note: It is preferable to give the details to emphasize how this proof is much simpler than that of 
%Lemma~\ref{lemma: EPI+BL+sub}. The earlier wording gave the impression that we needed to appeal to 
%the full details of Lemma~\ref{lemma: EPI+BL+sub}, which does not make the simplicity of the alternate
%proof clear.)
%\black
Taking a supremum over $U$ completes the proof.
\end{proof}

We make several remarks. First, observe that only the $c_j$ need to be non-negative; no such condition is necessary for the $d_i$.
\footnote{
 However, studying the maximum over $\cP(\mathbf r)$ of an expression like \eqref{eq: diff_h} when some of the $d_i$ are negative is not interesting because the maximum over $\cP_g(\mathbf r)$ is $\infty$, as can be seen by letting the covariance matrix of the component corresponding to any factor with negative $d_i$ tend to $0$.
}
Second, while this proof is very simple compared to that of Lemma~\ref{lemma: EPI+BL+sub},
the independence relations in Lemma~\ref{lemma: independence}---which are critical to the proof of Gaussian optimality---cannot be directly deduced from the above proof. However, this is not such a big impediment. Instead of $s(X)$, we could consider a slightly modified function $s_\epsilon(X)$ defined by
\begin{align*}
s_\epsilon(X) = \sum_{i=1}^k d_i h(X_i) - \sum_{j=1}^m c_jh(Y_j) - \epsilon I(X; X+Z),
\end{align*}
where $Z$ is a standard Gaussian that is independent of $X$ and $Y_j = A_jX$. It is not hard to show that the concave envelope of $s_\epsilon$ is subadditive; in fact, the same steps as in the proof of Claim~\ref{claim: sub-general} suffice. Further,
including the extra mutual information term allows one to deduce independence relations analogous to those in Lemma~\ref{lemma: independence}. This approach provides an alternate route to proving Theorem~\ref{thm: EPI+BL}.

\subsection{Proof of Theorem \ref{thm: EPI+BL}}\label{section: proof}

Having proved the key subadditivity step, the rest of the proof closely follows the steps outlined in \cite[Appendix II]{GengNair14}. 

\begin{definition}\label{def: vV}
Let $\Sigma \defn \text{Diag}(\Sigma_1, \Sigma_2, \dots, \Sigma_k)$ be an $n \times n$ block diagonal matrix such that each $\Sigma_i$ is an $r_i \times r_i$ positive definite matrix. For $\epsilon, \delta > 0$, define
\begin{align}
v (\Sigma) &:= \sup_{X \in \cP(\mathbf r), \E XX^T = \Sigma} s_{\epsilon, \delta}(X), \quad \text{ and } \label{eq: vV-v}\\
V(\Sigma) &:= \sup_{X \in \cP(\mathbf r), \E XX^T \preceq \Sigma} S_{\epsilon, \delta}(X), \label{eq: vV-V}
\end{align}
where $\preceq$ denotes ordering in the positive semidefinite partial order.
\end{definition}

\begin{lemma}\label{lemma: xstar_ustar}
There exist random variables $X^*$ and $U^*$ satisfying (1) $| \cU^*| \leq \sum_{i=1}^k \frac{r_i(r_i+1)}{2} +1$; (2) $X^* \in \cP(\mathbf r)$; and (3) $\E X^* {X^*}^T \preceq \Sigma$, such that the following holds:
\begin{equation}
V(\Sigma) = s_{\epsilon, \delta}(X^* \mid U^*).
\end{equation}
%Further, we may assume $\E [X^* | U^* = u^*] = 0$ for every $u^* \in \cU^*$.
\end{lemma}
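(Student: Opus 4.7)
The approach follows the template of \cite[Appendix II]{GengNair14}, combining a Fenchel--Eggleston--Carath\'eodory argument for the cardinality bound with a compactness and upper-semicontinuity argument to promote the supremum in \eqref{eq: vV-V} to a maximum. I would first unfold both suprema (the outer one defining $V$ and the concave-envelope supremum in \eqref{eq: defF}) into the single expression
$$V(\Sigma) = \sup_{\{(p_u, X_u)\}_u} \sum_u p_u \, s_{\epsilon, \delta}(X_u),$$
where the supremum is over finitely supported probability distributions $(p_u)_u$ on an index set $\cU$ and random variables $X_u \in \cP(\mathbf r)$ subject to $\sum_u p_u \E X_u X_u^T \preceq \Sigma$. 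Setting $p_{U^*}(u) = p_u$ and taking $X^* \mid U^* = u$ to be distributed as $X_u$ yields $s_{\epsilon, \delta}(X^* \mid U^*) = \sum_u p_u s_{\epsilon, \delta}(X_u)$, confirming equivalence.

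For the cardinality reduction, let $d := \sum_{i=1}^k \frac{r_i(r_i+1)}{2}$, the dimension of the space of $n \times n$ real symmetric matrices that are block diagonal with block sizes $r_1, \ldots, r_k$. Consider the image set
$$\cI := \bigl\{ (\E XX^T,\, s_{\epsilon, \delta}(X)) : X \in \cP(\mathbf r),\ \E XX^T \preceq \Sigma \bigr\} \subseteq \real^{d+1}.$$
This set is connected: any $X \in \cP(\mathbf r)$ with covariance $\preceq \Sigma$ can be linked to a small block-independent Gaussian through the path $X_t := \sqrt{1-t}\,X + \sqrt{t}\,N$, which remains in $\cP(\mathbf r)$ and traces a continuous trajectory in $\cI$ (continuity of $s_{\epsilon, \delta}$ along this path follows from the Gaussian perturbations $\sqrt{\delta}\,W$ and $\sqrt{\epsilon}\,Z_j$ built into its definition). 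The rewriting above expresses $V(\Sigma)$ as the supremum of the last coordinate over the convex hull of $\cI$ subject to the first-coordinate constraint $\preceq \Sigma$. By the Fenchel--Eggleston strengthening of Carath\'eodory's theorem, any point in the convex hull of a connected subset of $\real^{d+1}$ is a convex combination of at most $d+1$ of its points, so one may restrict to $|\cU| \leq d + 1 = \sum_{i=1}^k \frac{r_i(r_i+1)}{2} + 1$.

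To promote the supremum to a maximum, take a maximizing sequence $(U^{(n)}, X^{(n)})$ with $|\cU^{(n)}| \leq d+1$. After passing to subsequences, assume $|\cU^{(n)}| = K$ is constant and $p_{U^{(n)}}(u) \to p^*(u)$ for each $u$; the bound $p_{U^{(n)}}(u)\,\E[X_u^{(n)}(X_u^{(n)})^T] \preceq \Sigma$ ensures tightness of the laws of $X_u^{(n)}$, so along a further subsequence $X_u^{(n)} \Rightarrow X_u^*$ weakly. Because $\epsilon, \delta > 0$, every entropy appearing in $s_{\epsilon, \delta}$ is the entropy of a Gaussian-smoothed vector with uniformly bounded second moment, and such entropies are continuous in weak convergence; hence $s_{\epsilon, \delta}(X_u^{(n)}) \to s_{\epsilon, \delta}(X_u^*)$, and the pair $(U^*, X^*)$ with $X^* \mid U^* = u$ distributed as $X_u^*$ attains $V(\Sigma)$. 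Block independence is preserved under weak limits, and the condition \eqref{eq: CarEtal_condition} is only required of the Gaussian-smoothed variables that actually appear inside the entropies, so $X^* \in \cP(\mathbf r)$ with $\E X^* X^{*T} \preceq \Sigma$ as required.

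The main obstacle will be this existence step: rigorously verifying upper semicontinuity of $s_{\epsilon, \delta}$ along weakly convergent sequences in $\cP(\mathbf r)$ and checking that the limits $X_u^*$ inherit the defining properties. The perturbations $\sqrt{\delta}\,W$ and $\sqrt{\epsilon}\,Z_j$ in the definition \eqref{eq: sdef} are built in for precisely this reason---without them, differential entropies can collapse to $-\infty$ under weak limits to degenerate (e.g., atomic) distributions, but the Gaussian smoothing guarantees continuity of the relevant entropies on laws with uniformly bounded second moment, making the argument go through cleanly.
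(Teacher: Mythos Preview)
Your proposal follows essentially the same strategy as the paper: a Fenchel--Eggleston--Carath\'eodory argument for the cardinality bound, combined with tightness and continuity of Gaussian-smoothed entropies for existence of the maximizer. The paper organizes the reduction slightly differently: it first shows that for each fixed block-diagonal covariance $\widehat\Sigma$ the inner supremum $v(\widehat\Sigma) := \sup_{\E XX^T = \widehat\Sigma} s_{\epsilon,\delta}(X)$ is attained, and then applies Fenchel--Carath\'eodory to the image of the map $\widehat\Sigma \mapsto (\widehat\Sigma, v(\widehat\Sigma))$ on the set of block-diagonal positive definite matrices. Connectedness is then immediate (the domain is convex), which avoids your interpolation-path argument on distributions; but your route is also sound.

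There is one genuine gap in your existence step. You assert that ``the bound $p_{U^{(n)}}(u)\,\E[X_u^{(n)}(X_u^{(n)})^T] \preceq \Sigma$ ensures tightness of the laws of $X_u^{(n)}$.'' This is false exactly when $p_{U^{(n)}}(u) \to p^*(u) = 0$: in that regime the covariance of $X_u^{(n)}$ is only controlled by $\Sigma / p_{U^{(n)}}(u)$, which blows up, so no tight subsequence need exist. The paper deals with this explicitly: using the Gaussian maximum-entropy bound on the positive terms and the trivial lower bound $h(A_j(X+\sqrt\delta W) + \sqrt\epsilon Z_j) \ge h(\sqrt\epsilon Z_j)$ on the negative terms, one gets $s_{\epsilon,\delta}(X_u^{(n)}) \le C_0 + \sum_i \tfrac{d_i}{2}\log|\Sigma_i/p_{U^{(n)}}(u) + \delta I|$, whence $p_{U^{(n)}}(u)\, s_{\epsilon,\delta}(X_u^{(n)}) \to 0$. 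Such indices can therefore be discarded, after which one may assume $\min_u p^*(u) > 0$ and your tightness argument goes through.
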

\begin{proof}[Proof of Lemma \ref{lemma: xstar_ustar}]

Let $(X^{(t)}, t \ge 1)$ be a sequence of random variables such that $\E X^{(t)} (X^{(t)})^T = \widehat \Sigma$ and $s_{\epsilon, \delta}(X^{(t)}) \uparrow v (\widehat \Sigma)$ as $t \to \infty$. This sequence of random variables is tight due to the covariance constraint \cite[Proposition 17]{GengNair14}, and thus we may assume without loss of generality that the $X^{(t)}$ converge weakly to a random variable $X^{\widehat \Sigma}$ as $t \to \infty$. Since $X^{(t)} + \sqrt \delta W$ satisfies the necessary regularity conditions as in \cite[Proposition 18]{GengNair14}, we also have $h(X^{(t)}_{i} + \sqrt \delta W_i) \to h(X^{\widehat \Sigma}_i + \sqrt \delta W_i)$ for $i \in [k]$, and $h(A_j (X^{(t)} +\sqrt \delta W) + \sqrt \epsilon Z_j ) \to h(A_j (X^{\widehat \Sigma} +\sqrt \delta W) + \sqrt \epsilon Z_j )$ for $j \in [m]$. Hence we may conclude  $s_{\epsilon, \delta}(X^{\widehat \Sigma}) = v(\widehat \Sigma)$.

Recall that $V(\Sigma)$ is defined as
\begin{align}
V(\Sigma) &= \sup_{X \in \cP(\mathbf r), \E XX^T \preceq \Sigma} S_{\epsilon, \delta}(X) \nonumber \\
&= \sup_{(U, X), p_{X|U}(\cdot|U) \in \cP(\mathbf r), \E XX^T \preceq \Sigma} s_{\epsilon, \delta}(X \mid U) \nonumber \\
&\stackrel{(a)}= \sup_{\alpha_l \geq 0, \widehat \Sigma_l : \sum_{l=1}^M \alpha_l = 1, \sum_{l=1}^M \alpha_l \widehat \Sigma_l \preceq \Sigma} \sum_{l=1}^M \alpha_l v(\widehat \Sigma_l), \label{eq: Caratheodory}
\end{align}
where, for the moment, $M$
ranges over positive integers of
arbitrary size.
The equality in $(a)$ is because we may restrict $p_{X|U}(\cdot|U)$ to the class of optimizers $X^{\widehat \Sigma}$ for $\widehat \Sigma \succeq 0$. 
We now show that we can fix $M$ to be $\sum_{i=1}^k {{r_i+1} \choose 2}+1$
in \eqref{eq: Caratheodory}.
Let $\mathcal{T}$ denote the connected subset of positive definite matrices $\Sigma$ of the form $\text{Diag}(\Sigma_{1}, \dots,  \Sigma_{k})$  where $\Sigma_{i}$ is an $r_i \times r_i$ positive definite matrix for $i \in [k]$. Consider the
connected compact subset, $\mathcal{V}$, of the $M$-dimensional Euclidean space obtained using the continuous mapping $\Phi: \mathcal{T} \mapsto \mathbb{R}^{M},  $ defined by $\Phi(\Sigma) = \left(\{\Sigma_i (j,k)_{1\leq j \leq k \leq r_i}\}, v(\Sigma)\right)$, where $M := \sum_{i=1}^k {{r_i+1} \choose 2}+1$. Fenchel's extension of Carath\'{e}odory's Theorem \cite[Theorem 1.3.7]{HUT} states that any finite convex combination of points in $\mathcal{V}$, can be represented as a convex combination of at most $M$ points in $\mathcal{V}$. Hence for any $(U,X^{\Sigma_U})$ we can find a pair $(U',X^{\Sigma_{U'}})$ with 
$U'$ taking at most $M$ values,
such that $E(\Sigma_U) = E(\Sigma_{U'})$ and $E(v(\Sigma_U)) = E(v(\Sigma_{U'}))$. 
Thus from this point onwards in the proof we define
$M := \sum_{i=1}^k {{r_i+1} \choose 2}+1$ in \eqref{eq: Caratheodory}.

Consider any sequence of convex combinations $\left( \{\alpha_l^{(t)}\}_{l=1}^M, \{\widehat \Sigma_l^{(t)}\}_{l=1}^M \right)$ 
with $\sum_{l=1}^M \alpha_l^{(t)} \widehat \Sigma_l^{(t)} \preceq \Sigma$ for all $t \ge 1$, and such that $\sum_{l=1}^M \alpha_l^{(t)} v(\widehat \Sigma_l^{(t)})$ converges to $v(\Sigma)$ as
$t \to \infty$. Appealing to the compactness of the $M$-dimensional simplex, we may assume without loss of generality that $\alpha_l^{(t)} \to \alpha^*_l$ for all $i \in [M]$. If any of the $\alpha^*_l$ equals $0$,  then noticing that $\alpha_l^{(t)} \widehat \Sigma_l^{(t)} \preceq \Sigma$ gives us
\begin{align*}
v(\widehat \Sigma_l^{(t)}) &\stackrel{(a)}\leq \sum_{i=1}^k \frac{d_i}{2} \log (2\pi e)^{r_i}|\widehat \Sigma_{li}^{(t)} +  \delta I_{r_i \times r_i}| - \sum_{j=1}^m \frac{c_j n_j}{2} \log (2 \pi e \epsilon) \\
&\leq  \sum_{i=1}^k \frac{d_i}{2} \log (2\pi e)^{r_i} \Bigg| \frac{\Sigma_{li}}{\alpha^{(t)}_l}+  \delta I_{r_i \times r_i}\Bigg| - \sum_{j=1}^m \frac{c_j n_j}{2} \log (2 \pi e \epsilon)\\
&=  \sum_{i=1}^k \frac{d_i}{2} \log \Bigg| \frac{\Sigma_{li}}{\alpha^{(t)}_l} +  \delta I_{r_i \times r_i}\Bigg|+ C_0, 
\end{align*}
where $C_0$ is some constant that does not depend on $t$. 
In $(a)$, we used the fact that each $h(X_i+ \sqrt \delta W_i)$ is upper-bounded by the entropy of a Gaussian random variable with the same covariance matrix as $X_i +\sqrt \delta W_i$, and $h(A_j (X+\sqrt \delta W) + \sqrt \epsilon Z_j) \geq h(\sqrt \epsilon Z_j)$.

 It is now clear that the limit $\alpha_l^{(t)} v(\widehat \Sigma^{(t)}_l)$ as $t \to \infty$ is equal to 0 whenever $\alpha_l^{(t)} \to 0$. Thus, we may assume that $\min_{l \in [M]} \alpha^*_l = \alpha_{\min} > 0$, by splitting a component $\alpha_l^{(t)} v(\widehat \Sigma_l^{(t)})$ into multiple components if necessary.
This implies that $\widehat \Sigma^{(t)}_l \preceq \frac{2\Sigma}{\alpha_{\min}}$ for all large enough $t$. Hence, we can find a convergent subsequence such that $\widehat \Sigma^{(t)}_l \rt \Sigma_l^*$ for each $l \in [M]$ when $t \to \infty$ along this subsequence. We arrive at
\begin{equation}
V(\Sigma) = \sum_{l=1}^M \alpha_l^* v(\Sigma^*_l),
\end{equation}
or, in other words, we can find a pair of random variables $(X^*, U^*)$ with $|\cU^*| \leq M$ such that $V(\Sigma) = s_{\epsilon, \delta}(X^* | U^*)$. This completes the proof.
\end{proof}

\begin{lemma}\label{lemma: rotation}
Consider random variables $(X_1,X_2, U)$ such that $(X_1, X_2) \in \cP(2\mathbf r)$ for some $\mathbf r$-partition of $n > 0$. Define new random variables $X_+$ and $X_-$ via
\begin{align*}
X_+ := \frac{X_1 + X_2}{\sqrt 2}, \quad &\text{ and } \quad X_- := \frac{X_1 - X_2}{\sqrt 2}.
\end{align*}
Then $s_{\epsilon, \delta}(X_1, X_2 | U) = s_{\epsilon, \delta}(X_+, X_-|U)$.
\end{lemma}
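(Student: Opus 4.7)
The plan is to exploit the fact that the map $T:(a,b)\mapsto \bigl((a+b)/\sqrt{2},(a-b)/\sqrt{2}\bigr)$ is an orthogonal linear transformation, so it preserves differential entropy (its Jacobian determinant has absolute value $1$), and that i.i.d.\ standard Gaussian distributions are invariant under orthogonal transformations. First I would check that $(X_+, X_-)\in \cP(2\mathbf{r})$: the pair $(X_{+,i}, X_{-,i})$ is a deterministic function only of $(X_{i1}, X_{i2})$, so the mutual independence of the pairs $(X_{i1}, X_{i2})$ across $i\in[k]$ transfers to mutual independence of the pairs $(X_{+,i}, X_{-,i})$ across $i$; the zero-mean, bounded second-moment, and density integrability conditions in Definition \ref{def: pg} are similarly inherited under a nonsingular linear change of variables.

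Next I would prove equality of each entropy term appearing in the definition of $s_{\epsilon,\delta}$. Consider the $d_i$ term. Applying $T$ componentwise on $\real^{r_i}$ to the pair $(X_{i1}+\sqrt{\delta}W_{i1},\, X_{i2}+\sqrt{\delta}W_{i2})$ gives, by linearity,
$$\bigl(X_{+,i}+\sqrt{\delta}\widetilde W_{+,i},\; X_{-,i}+\sqrt{\delta}\widetilde W_{-,i}\bigr),$$
where $\widetilde W_{\pm,i} := (W_{i1}\pm W_{i2})/\sqrt{2}$. Orthogonal transformations map i.i.d.\ standard Gaussians to i.i.d.\ standard Gaussians, so $(\widetilde W_{+,i},\widetilde W_{-,i})$ is again a pair of independent standard Gaussians, and it is independent of $(X_+, X_-, U)$ because $(W_{i1}, W_{i2})$ was independent of $(X_1, X_2, U)$. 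Thus $(\widetilde W_{+,i},\widetilde W_{-,i})$ has exactly the same joint law as the fresh noise pair $(W_{+,i}', W_{-,i}')$ used when evaluating $s_{\epsilon,\delta}(X_+, X_-\mid U)$. Using the invariance of differential entropy under the volume-preserving map $T$, we obtain
$$h(X_{i1}+\sqrt{\delta}W_{i1},\, X_{i2}+\sqrt{\delta}W_{i2}\mid U) = h(X_{+,i}+\sqrt{\delta}W_{+,i}',\, X_{-,i}+\sqrt{\delta}W_{-,i}'\mid U).$$

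The identical argument handles each $c_j$ term. Here $T$ commutes with the block action of $A_j$ in the sense that $T$ applied to $(A_jy_1, A_jy_2)$ equals $(A_j(y_1+y_2)/\sqrt{2},\, A_j(y_1-y_2)/\sqrt{2})$; hence the same orthogonal combinations of $(W_1, W_2)$ and $(Z_{j1}, Z_{j2})$ again yield independent standard Gaussians with the correct joint distribution, and volume-preservation of $T$ gives the equality of the conditional entropies of $\bigl(A_j(X_1+\sqrt{\delta}W_1)+\sqrt{\epsilon}Z_{j1},\, A_j(X_2+\sqrt{\delta}W_2)+\sqrt{\epsilon}Z_{j2}\bigr)$ and of the analogous term written in terms of $X_+, X_-$ and fresh noises. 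Summing the per-$i$ and per-$j$ equalities with coefficients $d_i$ and $-c_j$ yields the lemma. The argument is essentially a rotational change of variables; the only bookkeeping point is confirming that the four transformed noise families remain jointly mutually independent and independent of $(X_+, X_-, U)$, which is immediate from the rotational invariance of standard Gaussians and the original independence assumptions, so no serious obstacle arises.
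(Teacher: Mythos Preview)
Your proposal is correct and takes essentially the same approach as the paper: both argue via the orthogonal rotation $T:(a,b)\mapsto\bigl((a+b)/\sqrt{2},(a-b)/\sqrt{2}\bigr)$, using that $T$ preserves differential entropy and that the standard Gaussian noise families $(W_1,W_2,Z_1,Z_2)$ have the same joint distribution as their rotated versions. Your treatment is in fact slightly more explicit than the paper's, since you also verify that $(X_+,X_-)\in\cP(2\mathbf{r})$ and spell out the independence of the transformed noises from $(X_+,X_-,U)$.
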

\begin{proof}
We have the equality
\begin{align}
s_{\epsilon, \delta}(X_1, X_2|U) &=  \sum_{i=1}^k d_i h(X_{i1}+\sqrt \delta W_{i1}, X_{i2}+\sqrt \delta W_{i2} | U ) \nonumber \\
&- \sum_{j=1}^m c_j h(A_j(X_1+\sqrt \delta W_1) + \sqrt \epsilon Z_{j1}, A_j(X_2+\sqrt \delta W_2) + \sqrt \epsilon Z_{j2} | U),
\end{align}
Further, defining 
$W_{i+} := \frac{W_{i1} + W_{i2}}{\sqrt 2}$,
$W_{i-} := \frac{W_{i1} - W_{i2}}{\sqrt 2}$,
$Z_{j+} := \frac{Z_{j1} + Z_{j2}}{\sqrt 2}$, and $Z_{j-} := \frac{Z_{j1} - Z_{j2}}{\sqrt 2}$, we have
%, for all $u \in \cU$,
\begin{align}\label{eq: rotation1}
    h(X_{i1}+\sqrt \delta W_{i1}, X_{i2}+\sqrt \delta W_{i2} | U) = h(X_{i+}+\sqrt \delta W_{i+}, X_{i-}+\sqrt \delta W_{i-} | U ),
\end{align}
and
\begin{align}\label{eq: rotation2}
    &h(A_j(X_1 +\sqrt \delta W_1) + \sqrt \epsilon Z_{j1}, A_j(X_2 +\sqrt \delta W_2) + \sqrt \epsilon Z_{j2} | U) \nonumber \\
    &= h(A_j(X_+ +\sqrt \delta W_+) + \sqrt \epsilon Z_{j+}, A_j(X_- +\sqrt \delta W_-) + \sqrt \epsilon Z_{j-} | U).
\end{align}
$(W_1, W_2, Z_1, Z_2)$ and $(W_+, W_-, Z_+, Z_-)$ are equal in distribution. Multiplying the equations in \eqref{eq: rotation1} by $d_i$ and those in \eqref{eq: rotation2} by $c_j$ and subtracting the sum of the latter from the sum of the former, we may conclude that $s_{\epsilon, \delta}(X_1, X_2 | U) = s_{\epsilon, \delta}(X_+, X_- | U)$.
\end{proof}

\begin{lemma}\label{lemma: gaussian_delta}
Fix $\epsilon, \delta > 0$.
Let the random variables $X^*$ and $U^*$ be as in Lemma \ref{lemma: xstar_ustar}; i.e., satisfying the equality $V(\Sigma) = s_{\epsilon, \delta}(X^* | U^*)$, and with $|\cU^*| \leq M$. Consider two independent and identically distributed copies of $(X^*,U^*)$, denoted by $(X_1, U_1)$ and $(X_2, U_2)$. Define new random variables $X_+$ and $X_-$ as follows:
\begin{align*}
X_+ := \frac{X_1 + X_2}{\sqrt 2}, \quad &\text{ and } \quad X_- := \frac{X_1 - X_2}{\sqrt 2}.
\end{align*}
Also, define $U \defn (U_1, U_2)$. Then the following results hold:
\begin{enumerate}
    \item [(a)] $X_+$ and $X_-$ are conditionally independent given $U$,
    \item [(b)] $V(\Sigma) = s_{\epsilon, \delta}(X_+|U)$ and $V(\Sigma) = s_{\epsilon, \delta}(X_-|U)$.
\end{enumerate}
\end{lemma}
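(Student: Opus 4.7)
The plan is to chain Lemma \ref{lemma: rotation}, Lemma \ref{lemma: EPI+BL+sub}, and Lemma \ref{lemma: independence}, so as to sandwich $S_{\epsilon,\delta}(X_+, X_-)$ between $2V(\Sigma)$ on both sides and then read off (a) and (b) from the resulting equality. The starting observation is that, by construction, $(X_1, U_1)$ and $(X_2, U_2)$ are independent, each with $s_{\epsilon,\delta}(X_a \mid U_a) = V(\Sigma)$ for $a = 1, 2$. Conditioning on $U = (U_1, U_2)$, each joint entropy in $s_{\epsilon,\delta}(X_1, X_2 \mid U)$ factors as a sum of marginal conditional entropies (the noise blocks $W_1, W_2$ and $Z_{j1}, Z_{j2}$ are independent, and $(X_1, U_1)$ is independent of $(X_2, U_2)$), yielding
\begin{equation*}
s_{\epsilon,\delta}(X_1, X_2 \mid U) = s_{\epsilon,\delta}(X_1 \mid U_1) + s_{\epsilon,\delta}(X_2 \mid U_2) = 2V(\Sigma).
\end{equation*}

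I would then apply Lemma \ref{lemma: rotation}, conditioned on $U$, to transfer the value to the rotated pair: $s_{\epsilon,\delta}(X_+, X_- \mid U) = 2V(\Sigma)$. The pairs $(X_{i1}, X_{i2})$ are mutually independent across $i \in [k]$, so the rotated blocks $(X_{i+}, X_{i-})$ are also mutually independent across $i$, placing $(X_+, X_-) \in \cP(2\mathbf r)$; hence $U$ is a valid auxiliary in the definition of $S_{\epsilon,\delta}(X_+, X_-)$, giving the lower bound $S_{\epsilon,\delta}(X_+, X_-) \ge 2V(\Sigma)$. The same blockwise argument shows $X_+, X_- \in \cP(\mathbf r)$, and a direct second-moment calculation using $\E X_1 = \E X_2 = 0$ and $X_1 \perp X_2$ gives $\E X_+ X_+^T = \E X_- X_-^T = \E X^* (X^*)^T \preceq \Sigma$, so $S_{\epsilon,\delta}(X_+) \le V(\Sigma)$ and $S_{\epsilon,\delta}(X_-) \le V(\Sigma)$.

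Combining these with the subadditivity statement of Lemma \ref{lemma: EPI+BL+sub} produces the sandwich
\begin{equation*}
2V(\Sigma) \le S_{\epsilon,\delta}(X_+, X_-) \le S_{\epsilon,\delta}(X_+) + S_{\epsilon,\delta}(X_-) \le 2V(\Sigma),
\end{equation*}
forcing every inequality to be an equality. In particular, $U$ attains the supremum defining $S_{\epsilon,\delta}(X_+, X_-)$, and $S_{\epsilon,\delta}(X_+, X_-) = S_{\epsilon,\delta}(X_+) + S_{\epsilon,\delta}(X_-)$. These are precisely the hypotheses of Lemma \ref{lemma: independence}: its part (a) gives conditional independence of $X_+$ and $X_-$ given $U$, establishing (a); its part (b) gives $s_{\epsilon,\delta}(X_\pm \mid U) = S_{\epsilon,\delta}(X_\pm)$, and combining with $S_{\epsilon,\delta}(X_\pm) = V(\Sigma)$ from the collapsed chain yields $s_{\epsilon,\delta}(X_\pm \mid U) = V(\Sigma)$, establishing (b).

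The main obstacle I anticipate is admissibility bookkeeping rather than any new inequality: one needs to verify that $X_\pm$ inherit block independence, zero mean, finite second moment, and the regularity condition \eqref{eq: CarEtal_condition}, so that every appearance of $\cP(\mathbf r)$, $\cP(2\mathbf r)$, and the comparison against $V(\Sigma)$ is legitimate, and that the sup representation of $S_{\epsilon,\delta}$ used in the definition of $V$ coincides with the concave envelope over these admissible auxiliaries. Once this routine verification is in place, the structural core of the argument is the single sandwich against $2V(\Sigma)$ described above.
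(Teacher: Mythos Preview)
Your proposal is correct and follows essentially the same argument as the paper: the identical sandwich chain $2V(\Sigma) = s_{\epsilon,\delta}(X_1,X_2\mid U) = s_{\epsilon,\delta}(X_+,X_-\mid U) \le S_{\epsilon,\delta}(X_+,X_-) \le S_{\epsilon,\delta}(X_+)+S_{\epsilon,\delta}(X_-) \le 2V(\Sigma)$ via Lemma~\ref{lemma: rotation}, Lemma~\ref{lemma: EPI+BL+sub}, and the covariance bound, followed by the appeal to Lemma~\ref{lemma: independence} once all inequalities collapse. Your added attention to the $\cP(\mathbf r)$/$\cP(2\mathbf r)$ admissibility checks is a reasonable elaboration but does not depart from the paper's route.
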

\begin{proof}
We have the following sequence of inequalities:
\begin{align*}
    2V(\Sigma) &\stackrel{(a)}= s_{\epsilon, \delta}(X_1 | U_1) + s_{\epsilon, \delta}(X_2 | U_2)\\
    &\stackrel{(b)}= s_{\epsilon, \delta}(X_1, X_2 | U_1, U_2)\\
    &\stackrel{(c)}= s_{\epsilon, \delta}(X_+, X_- | U_1, U_2)\\
    &\stackrel{(d)}\leq S_{\epsilon, \delta}(X_+, X_-)\\
    &\stackrel{(e)}\leq S_{\epsilon, \delta}(X_+) + S_{\epsilon, \delta}(X_-)\\
    &\stackrel{(f)}\leq V(\Sigma) + V(\Sigma) = 2V(\Sigma).
\end{align*}
Here $(a)$ follows from the assumption that $s_{\epsilon,\delta}(X^*| U^*) = V(\Sigma)$. Equality $(b)$ follows from the independence $(X_1, U_1) \ind (X_2, U_2)$. Equality $(c)$ holds because of Lemma \ref{lemma: rotation}. Inequality $(d)$ follows from the definition of $S_{\epsilon, \delta}(\cdot)$. Inequality $(e)$ follows from the tensorization result in Lemma \ref{lemma: EPI+BL+sub}. Finally, inequality $(f)$ follows from the definition in equation \eqref{eq: vV-V}, and the fact that $X_+$ and $X_-$ have the same covariance as $X^*$, which is bounded above by $\Sigma$ in the positive semidefinite partial order.

Since the first and last expressions match, all the inequalities in the above sequence of inequalities must be equalities. In particular, equalities $(d)$ and $(e)$ combined with Lemma \ref{lemma: independence} imply that $X_+ \ind X_-$ conditioned on $(U_1, U_2)$, thus establishing part  (a) of the lemma. Lemma \ref{lemma: independence} also gives $s_{\epsilon, \delta}(X_+|U_1, U_2) = S_{\epsilon, \delta}(X_+)$ and $s_{\epsilon, \delta}(X_-|U_1, U_2) = S_{\epsilon, \delta}(X_-)$. Finally, equality in $(f)$ gives $S_{\epsilon, \delta}(X_+) = V(\Sigma)$ and $S_{\epsilon, \delta}(X_-) = V(\Sigma)$. This completes the proof of part (b).
\end{proof}

\begin{lemma}\label{lemma: single_gaussian}
There exists $G^* \sim \cN(0, \Sigma^*) \in \cP(\mathbf r)$ such that $\Sigma^* \preceq \Sigma$ and $V(\Sigma) = s_{\epsilon, \delta}(G^*)$. Furthermore, the random variable $G^*$ is the unique element of the set $\cP(\mathbf r) \cap \{X ~:~ \E XX^T \preceq \Sigma\}$ satisfying $s_{\epsilon, \delta}(X) = V(\Sigma)$.
\end{lemma}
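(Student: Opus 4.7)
The plan is to apply a CLT-style argument driven by iterated use of Lemma~\ref{lemma: gaussian_delta}, combined with the multivariate Kac--Bernstein characterization of Gaussianity, mirroring the strategy in~\cite[Appendix II]{GengNair14}. Starting from $(X^*, U^*)$ provided by Lemma~\ref{lemma: xstar_ustar}, I would take two i.i.d.\ copies $(X_1, U_1), (X_2, U_2)$ and form $X_\pm := (X_1 \pm X_2)/\sqrt{2}$. Lemma~\ref{lemma: gaussian_delta}(a) gives $X_+ \ind X_- \mid (U_1, U_2)$; combined with the built-in conditional independence $X_1 \ind X_2 \mid (U_1, U_2)$, the multivariate Kac--Bernstein theorem (applied pointwise at each value $(u_1, u_2)$) forces $X^* \mid U^* = u$ to be Gaussian for every $u \in \cU^*$. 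Because $X^* \mid U^* = u \in \cP(\mathbf r)$, the conditional covariance is block diagonal, so $X^* \mid U^* = u \sim \cN(\mu(u), \Sigma(u))$ with $\Sigma(u) = \operatorname{Diag}(\Sigma_1(u), \ldots, \Sigma_k(u))$.

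To pass from this mixture of Gaussians to a single Gaussian, I would iterate the doubling: set $(X^{(1)}, U^{(1)}) := (X^*, U^*)$ and inductively $(X^{(t+1)}, U^{(t+1)}) := (X_+^{(t)}, (U_1^{(t)}, U_2^{(t)}))$. By induction, $X^{(t)} \mid U^{(t)}$ remains Gaussian and its conditional covariance $\Sigma^{(t)}(U^{(t)})$ is the average of $2^{t-1}$ i.i.d.\ copies of $\Sigma(U^*)$, so the strong law of large numbers yields $\Sigma^{(t)}(U^{(t)}) \to \Sigma^* := \E[\Sigma(U^*)]$ almost surely. Defining the Gaussian-entropy expression
\[
G(\Sigma) := \sum_{i=1}^k \tfrac{d_i}{2}\log\bigl((2\pi e)^{r_i}\det(\Sigma_i + \delta I)\bigr) - \sum_{j=1}^m \tfrac{c_j}{2}\log\bigl((2\pi e)^{n_j}\det(A_j\Sigma A_j^T + \delta A_j A_j^T + \epsilon I)\bigr),
\]
iterated application of Lemma~\ref{lemma: gaussian_delta} preserves $V(\Sigma) = s_{\epsilon,\delta}(X^{(t)}\mid U^{(t)}) = \E\bigl[G(\Sigma^{(t)}(U^{(t)}))\bigr]$ for every $t$. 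Continuity of $G$ and dominated convergence (using $\det(\Sigma_i+\delta I) \ge \delta^{r_i}$ as a lower bound and $\E\operatorname{tr}\Sigma(U^*) \le \operatorname{tr}\Sigma$ with Jensen's inequality for an upper bound) then give $V(\Sigma) = G(\Sigma^*) = s_{\epsilon,\delta}(G^*)$ with $G^* := \cN(0, \Sigma^*)$. Block-diagonality of $\Sigma^*$ and the bound $\Sigma^* + \operatorname{Cov}(\mu(U^*)) = \E X^*(X^*)^T \preceq \Sigma$ confirm that $G^* \in \cP(\mathbf r)$ with $\E G^*(G^*)^T \preceq \Sigma$, establishing existence.

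For uniqueness, let $Y \in \cP(\mathbf r) \cap \{X \colon \E XX^T \preceq \Sigma\}$ satisfy $s_{\epsilon,\delta}(Y) = V(\Sigma)$. Taking two i.i.d.\ copies $Y_1, Y_2$ and invoking Lemmas~\ref{lemma: rotation}, \ref{lemma: EPI+BL+sub}, and~\ref{lemma: independence} (all with trivial $U$) yields
\[
2V(\Sigma) = s_{\epsilon,\delta}(Y_+, Y_-) \leq S_{\epsilon,\delta}(Y_+, Y_-) \leq S_{\epsilon,\delta}(Y_+) + S_{\epsilon,\delta}(Y_-) \leq 2V(\Sigma),
\]
forcing equality throughout and $Y_+ \ind Y_-$, which by Kac--Bernstein makes $Y$ Gaussian. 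If $Y_a \sim \cN(0,\Sigma_a)$ and $Y_b \sim \cN(0,\Sigma_b)$ were two Gaussian optimizers, the same chain applied to independent $(Y_a, Y_b) \in \cP(2\mathbf r)$ would give $(Y_a+Y_b)/\sqrt{2} \ind (Y_a-Y_b)/\sqrt{2}$; these being jointly Gaussian with cross-covariance $(\Sigma_a-\Sigma_b)/2$, independence requires $\Sigma_a = \Sigma_b$. The main obstacle I foresee is the dominated-convergence step certifying $\E[G(\Sigma^{(t)}(U^{(t)}))] \to G(\Sigma^*)$, together with the careful conditional application of the multivariate Kac--Bernstein theorem that respects the block-diagonal structure imposed by $\cP(\mathbf r)$; both are routine but require attention.
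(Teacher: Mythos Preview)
Your argument is correct, but for existence you take a longer road than the paper. You invoke Kac--Bernstein at each $(u_1,u_2)$ to get Gaussianity of $X^*\mid U^*=u$, then launch an iterated doubling/CLT scheme to average out the conditional covariances. The paper observes that Kac--Bernstein already gives more: conditioned on $(U_1,U_2)=(u_1,u_2)$ the two variables are Gaussian \emph{with identical covariances}. Since $(X_1,U_1)\ind(X_2,U_2)$, the conditional covariance of $X_l$ given $(u_1,u_2)$ is just $\Sigma(u_l)$, so the identity reads $\Sigma(u_1)=\Sigma(u_2)$ for \emph{every} pair $u_1,u_2\in\cU^*$; hence $\Sigma(u)\equiv\Sigma^*$ immediately, with no iteration, no CLT, and no dominated-convergence step. (Incidentally, $p_{X^*\mid U^*}(\cdot\mid u)\in\cP(\mathbf r)$ already forces $\mu(u)=0$, so the means are not an issue either.) Your approach is a legitimate alternative and would work, but the two-line cross-product argument is what the paper actually does.

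Your uniqueness argument is also valid and slightly different from the paper's. The paper packages two candidate Gaussians $G_1,G_2$ into a single mixture $(X,U)$ with $U\in\{1,2\}$ and then reuses Lemma~\ref{lemma: gaussian_delta} verbatim, obtaining $G_1+G_2\ind G_1-G_2$ conditioned on $(U_1,U_2)=(1,2)$. You instead run the subadditivity chain directly on independent $(Y_a,Y_b)$ with trivial $U$ and invoke Lemma~\ref{lemma: independence}. Both routes lead to the same cross-covariance computation and the same conclusion; the paper's version has the advantage of literally reapplying an already-proved lemma, while yours is a touch more self-contained.
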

\begin{proof}
Consider the setting as in Lemma \ref{lemma: gaussian_delta}. Using Lemma \ref{lemma: gaussian_delta}, we have that $X_+ \ind X_-$ conditioned on $U = (u_1, u_2)$ for any $u_1, u_2 \in \cU^*$. However, we also have $X_1 \ind X_2$ conditioned on $U = (u_1, u_2)$. The characterization theorem for Gaussian distributions \cite{GhuOlk62} implies that  $X_1$ and $X_2$ must be Gaussian with identical covariance matrices, conditioned on $U = (u_1, u_2)$. Recall that $(X_1, U_1)$ is independent of $(X_2, U_2)$, and the covariance matrix of $X_i$ conditioned on $U= (u_1, u_2)$ is simply the covariance matrix of $X_i$ conditioned on $U_i = u_i$ for $i \in \{1, 2\}$. Since $u_1$ and $u_2$ may be chosen arbitrarily, we conclude that the covariance matrix of $X_1$ is some fixed $\Sigma^* \preceq \Sigma$ for all $u_1 \in \cU^*$. Let $G^* \sim \cN(0, \Sigma^*)$. Thus,
\begin{align*}
    V(\Sigma) &= \sum_{u_1 \in \cU^*} p_{U_1}(u_1)s_{\epsilon, \delta}(X_1 | U_1 = u_1)\\
    &=  \sum_{u_1 \in \cU^*} p_{U_1}(u_1)s_{\epsilon, \delta}(G^*)\\
    &= s_{\epsilon, \delta}(G^*).
\end{align*}
To establish uniqueness, first note that it is enough to only consider Gaussian random variables $X$ satisfying $s_{\epsilon, \delta}(X) = V(\Sigma)$, since our argument above shows that any $X$ that achieves this equality must be Gaussian. Now suppose that $G_1 \sim \cN(0, \Sigma_1)$ and $G_2 \sim \cN(0, \Sigma_2)$ are two distinct random variables such that  $s_{\epsilon, \delta}(G_1) = s_{\epsilon, \delta}(G_2) = V(\Sigma)$ with $\Sigma_1, \Sigma_2 \preceq \Sigma$. Define $(X, U)$ such that $X = G_1$ when $U=1$ and $X = G_2$ when $U=2$. Suppose also that $U$ takes values 1 and 2 with probability $1/2$, each. It is easy to check that $X$ satisfies the covariance constraint, and that $s_{\epsilon, \delta}(X|U) = V(\Sigma).$ As in Lemma \ref{lemma: gaussian_delta}, consider two i.i.d.\ copies of $(X_1,U_1)$ and $(X_2, U_2)$ of $(X, U)$. Lemma \ref{lemma: gaussian_delta} states that conditioned on $(U_1 = u_1, U_2 = u_2)$, we have $X_1+X_2 \ind X_1 - X_2$, for any values of $u_1$ and $u_2$. Conditioned on $u_1 = 1$ and $u_2 = 2$, we have $X_1 + X_2 = G_1 + G_2$ and $X_1 - X_2 = G_1 - G_2$. This implies $G_1+G_2 \ind G_1-G_2$, which is impossible since $\Sigma_1 \neq \Sigma_2$, and thus there cannot be two distinct Gaussian maximizers.  
\end{proof}

\begin{proof}[Proof of Theorem \ref{thm: EPI+BL}]
We now complete the proof of Theorem \ref{thm: EPI+BL}. Recall the definition of $M_g$:
\begin{equation*}
M_g := \sup_{X \in \cP_g(\mathbf r)} \sum_{i=1}^k d_i h(X_{i}) - \sum_{j=1}^m c_j h(A_j X).
\end{equation*}
Clearly, there is nothing to prove if $M_g$ is infinite, so we assume $M_g < \infty$.  Let $X \in \cP(\mathbf r)$ be an arbitrary random vector. By choosing a large enough $\Sigma$ such that $\E XX^T \preceq \Sigma$, we may conclude that
\begin{equation}\label{eq: sV}
s_{\epsilon, \delta}(X) \leq V(\Sigma).
\end{equation}
Let $G^*  \sim \cN(0, \Sigma^*) \in \cP(\mathbf r)$, where $\Sigma^* \preceq \Sigma$, be the unique maximizer such that $s_{\epsilon, \delta}(G^*) = V(\Sigma)$, 
as in Lemma \ref{lemma: single_gaussian}. Thus, we have the sequence of inequalities
\begin{align}
    V(\Sigma) &= \sum_{i=1}^k d_i h(G^*_i + \sqrt \delta W_i) - \sum_{j=1}^m c_j h(A_j(G^* + \sqrt \delta W) + \sqrt \epsilon Z_j) \nonumber\\
    &\stackrel{(a)}\leq \sum_{i=1}^k d_i h(G^*_i + \sqrt \delta W_i) - \sum_{j=1}^m c_j h(A_j(G^* + \sqrt \delta W)) \nonumber\\
    &\stackrel{(b)}\leq M_g.  \label{eq: VMg}
\end{align}
Here, inequality $(a)$ follows from the entropy inequality 
\begin{align*}
h(A_j (G^*+\sqrt \delta W) + \sqrt \epsilon Z_j) \geq h(A_j (G^*+\sqrt \delta W)),
\end{align*}
for all $j \in [m]$. The inequality in $(b)$ is true because the random variable $\tilde {G^*}$ defined by $\tilde {G^*_i} \defn G^*_i + \sqrt \delta W_i$ for $i \in [k]$ is a Gaussian random variable in $\cP_g(\mathbf r)$. Thus, by the definition of $M_g$, we must have
\begin{align*}
\sum_{i=1}^k d_i h(\tilde {G^*_i} ) - \sum_{j =1}^m c_j h(A_j \tilde {G^*} ) &\leq \sup_{X \in \cP_g(\mathbf r)} \sum_{i=1}^k d_i h(X_{i}) - \sum_{j=1}^m c_j h(A_j X)\\
&= M_g.
\end{align*}
Combining inequalities \eqref{eq: sV} and \eqref{eq: VMg}, we have
\begin{equation}\label{eq: sMg}
s_{\epsilon, \delta}(X) \leq M_g.
\end{equation}
Recall that $s_{\epsilon, \delta}(X)$ is given by
\begin{align*}
s_{\epsilon, \delta}(X) = \sum_{i=1}^k d_i h(X_i + \sqrt \delta W_i) - \sum_{j =1}^m c_j h(A_j (X +\sqrt \delta W) + \sqrt \epsilon Z_j).
\end{align*} 
If $X$ satisfies certain  mild conditions (such as bounded second moments) provided in  Lemma \ref{lemma: heat}, we have that
\begin{align*}
\lim_{\epsilon, \delta \to 0} s_{\epsilon, \delta}(X) =  \sum_{i=1}^k d_i h(X_i) - \sum_{j =1}^m c_j h(A_j X).
\end{align*}
This means that we may take the limit in inequality \eqref{eq: sMg} as $\epsilon, \delta \to 0$ to conclude 
\begin{align*}
 \sum_{i=1}^k d_i h(X_i) - \sum_{j =1}^m c_j h(A_j X) \leq M_g,
\end{align*}
and conclude the proof of Theorem \ref{thm: EPI+BL}.
\end{proof}

\section{Conditions for $M(\mathbf A, \mathbf c, \mathbf r, \mathbf d) < \infty$}\label{section: finite}

Theorem \ref{thm: EPI+BL} shows that it is enough to find necessary and sufficient conditions for $M_g(\mathbf A, \mathbf c, \mathbf r, \mathbf d)$ to be finite, since $M = M_g$. We prove Theorem \ref{thm: finite} by finding necessary conditions on the BL-EPI datum
for such finiteness
in Claim \ref{claim: necessary_finite}, and showing that the necessary conditions are also sufficient in Claim \ref{claim: sufficient_finite}. 

\begin{claim}\label{claim: necessary_finite}
If $M_g(\mathbf A, \mathbf c, \mathbf r, \mathbf d)$ is finite, then the conditions in equations \eqref{eq: finite1} and \eqref{eq: finite2} must be satisfied.
\end{claim}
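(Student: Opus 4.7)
The plan is to exploit the fact that $M_g$ has a closed form over Gaussians and then engineer two families of diagonal covariances whose behavior forces the two necessary conditions.

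First I would write out the Gaussian objective. For $X \in \cP_g(\mathbf r)$ with $X_i \sim \cN(0,\Sigma_i)$ and $\Sigma := \text{Diag}(\Sigma_1,\ldots,\Sigma_k) \succ 0$, the objective equals
\begin{equation*}
\tfrac{1}{2}\sum_{i=1}^k d_i \log |\Sigma_i| - \tfrac{1}{2}\sum_{j=1}^m c_j \log |A_j \Sigma A_j^T| + \tfrac{1}{2}\Bigl(\sum_{i=1}^k d_i r_i - \sum_{j=1}^m c_j n_j\Bigr)\log(2\pi e),
\end{equation*}
where $A_j \Sigma A_j^T \succ 0$ by surjectivity of $A_j$.

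Next I would prove \eqref{eq: finite2} by a pure scaling argument. Take $\Sigma_i = t I_{r_i}$ for $t > 0$, so $|\Sigma_i| = t^{r_i}$ and $|A_j \Sigma A_j^T| = t^{n_j}|A_j A_j^T|$. Substituting, the objective becomes
\begin{equation*}
    \tfrac{1}{2}\Bigl(\sum_{i=1}^k d_i r_i - \sum_{j=1}^m c_j n_j\Bigr)\log t + C,
\end{equation*}
for some constant $C$ independent of $t$. Letting $t \to \infty$ and $t \to 0$ shows that finiteness of $M_g$ forces the coefficient to vanish, which is exactly \eqref{eq: finite2}.

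Now for \eqref{eq: finite1}, I would use a degeneration along an arbitrary $\mathbf r$-product form subspace $V = V_1 \times \cdots \times V_k$. Define $\Sigma_i^\epsilon := \Pi_{V_i} + \epsilon(I_{r_i}-\Pi_{V_i})$ where $\Pi_{V_i}$ denotes orthogonal projection onto $V_i$. Then $|\Sigma_i^\epsilon| = \epsilon^{r_i - \dim V_i}$ and, with $\Sigma^\epsilon := \text{Diag}(\Sigma_1^\epsilon,\ldots,\Sigma_k^\epsilon) = \Pi_V + \epsilon(I_n - \Pi_V)$, we need the $\epsilon \to 0$ asymptotics of $|A_j \Sigma^\epsilon A_j^T|$. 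Set $M := A_j \Pi_V A_j^T$ and $N := A_j(I-\Pi_V)A_j^T$. In the orthogonal splitting $\real^{n_j} = W_j \oplus W_j^\perp$ with $W_j := A_j V$, the matrix $M$ is block supported on $W_j$, say $M = \bigl(\begin{smallmatrix}M_{11} & 0 \\ 0 & 0\end{smallmatrix}\bigr)$, while $N$ has blocks $N_{11},N_{12},N_{22}$. The key elementary fact, which I would verify via the equivalence $y \in W_j^\perp \iff A_j^T y \in V^\perp$ (a consequence of $W_j = A_j V$), is that $M_{11}$ is positive definite on $W_j$ and $N_{22}$ is positive definite on $W_j^\perp$. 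A Schur complement expansion then gives
\begin{equation*}
|A_j \Sigma^\epsilon A_j^T| = \epsilon^{n_j - \dim(A_j V)} |N_{22}| \cdot |M_{11}| \cdot (1 + o(1)) \quad \text{as } \epsilon \to 0.
\end{equation*}
Plugging these into the Gaussian objective, and using \eqref{eq: finite2} to cancel the $\sum d_i r_i - \sum c_j n_j$ term, the objective evaluates to
\begin{equation*}
\tfrac{1}{2}\Bigl(\sum_{j=1}^m c_j \dim(A_j V) - \sum_{i=1}^k d_i \dim V_i\Bigr)\log \epsilon + O(1).
\end{equation*}
Since $\log \epsilon \to -\infty$ as $\epsilon \to 0^+$, finiteness of $M_g$ forces the bracketed coefficient to be nonnegative, which is precisely \eqref{eq: finite1}.

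The main technical obstacle is establishing the sharp asymptotic $|A_j \Sigma^\epsilon A_j^T| \asymp \epsilon^{n_j - \dim(A_j V)}$; this reduces to the invertibility of $N_{22}$ on $W_j^\perp$, and the biorthogonality lemma $y \in W_j^\perp \iff A_j^T y \in V^\perp$ makes this transparent. The remaining steps are direct substitutions into the explicit Gaussian formula.
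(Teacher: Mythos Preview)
Your proof is correct and follows essentially the same strategy as the paper: for \eqref{eq: finite2} both you and the paper scale an isotropic Gaussian, and for \eqref{eq: finite1} both degenerate the covariance along an $\mathbf r$-product form subspace $V$ and read off the leading log term. The only cosmetic difference is the direction of degeneration---the paper inflates the $V$-direction ($\lambda\Pi_V + \Pi_{V^\perp}$ with $\lambda\to\infty$) while you deflate the $V^\perp$-direction ($\Pi_V + \epsilon\Pi_{V^\perp}$ with $\epsilon\to 0$); once \eqref{eq: finite2} is in hand these are equivalent up to a global scale, and your Schur-complement computation makes rigorous the step the paper summarizes as ``gathering the coefficients of $\log\lambda$.''
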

\begin{proof}
The necessity of the condition in equation \eqref{eq: finite2} is seen as follows. Choose $Z \sim \lambda \cN(0, I_{n\times n})$ for some $\lambda > 0$. It is easy to see that $\sum_{i=1}^k d_i h(Z_i) - \sum_{j=1}^m c_j h(A_jZ)$ scales as $\left(\sum_{i=1}^k d_i r_i - \sum_{j=1}^m c_j n_j\right) \log(\lambda)$ as a function of $\lambda$ as $\lambda \to \infty$. Since $\lambda$ is arbitrary, the above expression is finite only if the condition in equation \eqref{eq: finite2} is satisfied.

To show that the condition in equation \eqref{eq: finite1} is necessary, let $V$ be a subspace of $\real^n$ of $\mathbf r$-product form. Consider a Gaussian random variable $Z \defn (Z_V, Z_{V^\perp})$ such that $Z_V \ind Z_{V^\perp}$, and $Z_V$ is supported on $V$ and $Z_{V^\perp}$ is supported on $V^\perp$. Furthermore, assume $Z_V \sim \cN(0, \lambda I_{\dim(V) \times \dim(V)})$ and $Z_{V^\perp} \sim \cN(0, I_{\dim(V^{\perp} \times \dim(V^\perp)}))$. Taking the limit as $\lambda\to \infty$ and gathering the coefficients of $\log \lambda$, we see that $M_g(\mathbf A, \mathbf c, \mathbf r, \mathbf d)$ scales as
$$\left(\sum_{i=1}^k d_i \dim(V_i) - \sum_{j=1}^k c_j \dim(A_j V)\right) \log(\lambda),$$
as $\lambda \to \infty$. Thus, $M_g$ is finite only if the condition in equation \eqref{eq: finite1} is satisfied.
\end{proof}

The proof of sufficiency of the conditions in equations \eqref{eq: finite1} and \eqref{eq: finite2} relies on two lemmas which we prove below.
\begin{lemma}\label{lemma: split}
Let $(\mathbf A, \mathbf c, \mathbf r, \mathbf d)$ be a BL-EPI datum. Let $U := (U_1, \dots, U_k)$ be an arbitrary $\mathbf r$-product form subspace such that $\dim(U_i) = \tilde r_i \leq r_i$ for $i \in [k]$. Let $\tilde{\mathbf r} := (\tilde r_1, \dots, \tilde r_k)$ and $\tilde{\mathbf r}^c := \mathbf r - \tilde{\mathbf r}$. Define two BL-EPI data as follows:
\begin{enumerate}
\item[(a)] $(\tilde{\mathbf A}, \mathbf c, \tilde{\mathbf r}, \mathbf d)$ is a BL-EPI datum defined on $U$. For each $j \in [m]$, define the linear maps $\tilde A_j: U \to (A_jU)$  by $\tilde A_jx = A_j x$ for $x \in U$. 
\item[(b)] $(\tilde{\tilde{\mathbf A}}, \mathbf c, \tilde{\mathbf r}^c, \mathbf d)$ is a BL-EPI datum defined on $U^\perp$. For $j \in [m]$, the linear maps $\tilde{\tilde{A}}_j: U^\perp \to (A_jU)^\perp$ are defined by
\begin{align*}
\tilde{\tilde{A}}_jx = \Pi_{(A_jU)^\perp} A_j x.
\end{align*}
We also define the linear maps $\Gamma_j: U^\perp \to (A_jU)$ as
\begin{align*}
\Gamma_j x = \Pi_{(A_jU)} A_j x.
\end{align*}
Here $\Pi_V$ denotes the orthogonal projection on to a subspace $V$. Note that $A_j x = \tilde{\tilde{\mathbf A}}x + \Gamma_j x$ is an orthogonal decomposition.
\end{enumerate}
Then the following relation holds:
\begin{equation}
M(\mathbf A, \mathbf c, \mathbf r, \mathbf d) \leq M(\tilde{\mathbf A}, \mathbf c, \tilde{\mathbf r}, \mathbf d) + M(\tilde{\tilde{\mathbf A}}, \mathbf c, \tilde{\mathbf r}^c, \mathbf d).
\end{equation}
\end{lemma}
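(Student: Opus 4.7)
The plan is to split each input componentwise according to $\real^{r_i}=U_i\oplus U_i^\perp$. Given $X\in\cP(\mathbf r)$, write $X_i = X_i^U + X_i^{U^\perp}$ with $X_i^U\in U_i$ and $X_i^{U^\perp}\in U_i^\perp$, and set $X^U:=(X_1^U,\dots,X_k^U)$ and $X^{U^\perp}:=(X_1^{U^\perp},\dots,X_k^{U^\perp})$. The independence of the $X_i$'s propagates to both collections, giving $X^U\in\cP(\tilde{\mathbf r})$ and $X^{U^\perp}\in\cP(\tilde{\mathbf r}^c)$. Crucially, the conditional law of $X^U$ given $X^{U^\perp}=x^{U^\perp}$ still factorizes over $i$, so (modulo a translation, which entropy ignores) it again lies in $\cP(\tilde{\mathbf r})$ for a.e.\ $x^{U^\perp}$.

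The heart of the argument is a lower bound on each $h(A_jX)$. Using $\real^{n_j}=A_jU\oplus(A_jU)^\perp$, the orthogonal decomposition of $A_jX$ is $(\tilde A_jX^U+\Gamma_jX^{U^\perp})\oplus \tilde{\tilde A}_jX^{U^\perp}$. An entropy-preserving change of basis combined with the chain rule yields
\[
h(A_jX)\;=\;h(\tilde{\tilde A}_jX^{U^\perp})\;+\;h\!\left(\tilde A_jX^U+\Gamma_jX^{U^\perp}\,\big|\,\tilde{\tilde A}_jX^{U^\perp}\right).
\]
Because $\tilde{\tilde A}_jX^{U^\perp}$ is a deterministic function of $X^{U^\perp}$, conditioning instead on the richer $\sigma$-field $X^{U^\perp}$ can only decrease entropy, and given $X^{U^\perp}$ the term $\Gamma_jX^{U^\perp}$ is a constant that can be dropped. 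This gives the key inequality
\[
h(A_jX)\;\ge\;h(\tilde{\tilde A}_jX^{U^\perp})\;+\;h(\tilde A_jX^U\mid X^{U^\perp}).
\]

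To assemble, apply the chain rule to the direct terms: $h(X_i)=h(X_i^{U^\perp})+h(X_i^U\mid X_i^{U^\perp})=h(X_i^{U^\perp})+h(X_i^U\mid X^{U^\perp})$, where the second equality uses that $X_i^U$ is conditionally independent of $\{X_l^{U^\perp}\}_{l\ne i}$ given $X_i^{U^\perp}$. Since each $c_j\ge 0$, combining these identities gives
\[
f(X)\;\le\;\Big[\sum_i d_i h(X_i^{U^\perp})-\sum_j c_j h(\tilde{\tilde A}_jX^{U^\perp})\Big]+\E_{X^{U^\perp}}\!\Big[\sum_i d_i h(X_i^U\mid X^{U^\perp})-\sum_j c_j h(\tilde A_jX^U\mid X^{U^\perp})\Big].
\]
The first bracket is at most $M(\tilde{\tilde{\mathbf A}},\mathbf c,\tilde{\mathbf r}^c,\mathbf d)$ by the definition of $M$ applied to $X^{U^\perp}$, while the integrand inside the expectation is, for a.e.\ $x^{U^\perp}$, bounded by $M(\tilde{\mathbf A},\mathbf c,\tilde{\mathbf r},\mathbf d)$ because the conditional distribution of $X^U$ lies in $\cP(\tilde{\mathbf r})$. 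Taking the supremum over $X\in\cP(\mathbf r)$ gives the lemma.

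The main obstacle is the lower bound on $h(A_jX)$. A naïve attempt to split it as $h(\tilde A_jX^U)+h(\tilde{\tilde A}_jX^{U^\perp})$ fails since $X^U$ and $X^{U^\perp}$ are dependent in general (a simple $k=1$, $n=2$ example shows $h(X_1+X_2)\ge h(X_1)$ can fail without independence). Routing through the orthogonal decomposition in the range of $A_j$ and dropping conditioning via the $\sigma$-field inclusion $\tilde{\tilde A}_jX^{U^\perp}\subseteq X^{U^\perp}$ circumvents this. Minor technicalities—that the conditional law of $X^U$ given $X^{U^\perp}$ verifies \eqref{eq: CarEtal_condition} and has finite second moments for a.e.\ $x^{U^\perp}$—follow from Fubini combined with Remark~\ref{rem:inherit}, and can in any case be handled by a small Gaussian perturbation in the spirit of the proof of Theorem~\ref{thm: EPI+BL}.
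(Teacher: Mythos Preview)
Your argument is correct and follows essentially the same route as the paper's proof: orthogonally decompose $X$ into $\tilde X\in U$ and $\tilde{\tilde X}\in U^\perp$, use the orthogonal splitting $\real^{n_j}=A_jU\oplus(A_jU)^\perp$ together with the chain rule to write $h(A_jX)=h(\tilde{\tilde A}_j\tilde{\tilde X})+h(\tilde A_j\tilde X+\Gamma_j\tilde{\tilde X}\mid \tilde{\tilde A}_j\tilde{\tilde X})$, then pass from conditioning on $\tilde{\tilde A}_j\tilde{\tilde X}$ to conditioning on the full $\tilde{\tilde X}$ (dropping the now-constant shift $\Gamma_j\tilde{\tilde X}$), and finally use the independence across $i$ to replace $h(\tilde X_i\mid \tilde{\tilde X}_i)$ by $h(\tilde X_i\mid \tilde{\tilde X})$. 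The paper carries out exactly these steps, so your proof matches it in substance; your added remarks on why a naive split fails and on the regularity of the conditional law are correct elaborations not present in the paper's version.
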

\begin{remark}
Note that it may happen that $\dim(U_i) = 0$ for some $i \in [k]$. It may also happen that for some $j \in [m]$, we have $\dim((A_jU)^\perp) = 0$. We do not rule out such cases, and keep our notation the same by instead defining entropy on a 0-dimensional subspace as 0.
\end{remark}

\begin{proof}
[Proof of Lemma \ref{lemma: split}]
By definition, the linear transformations in $\tilde{\mathbf A}$ and $\tilde{\tilde{\mathbf A}}$ are surjective. Also, $\sum_{i} \tilde r_i = \dim(U)$ and $\sum_{i} \tilde r^c_i = \dim(U^\perp)$. This verifies that $(\tilde{\mathbf A}, \mathbf c, \tilde{\mathbf r}, \mathbf d)$ and $(\tilde{\tilde{\mathbf A}}, \mathbf c, \tilde{\mathbf r}^c, \mathbf d)$ are indeed valid BL-EPI data on $U$ and $U^\perp$, respectively. Every vector $x \in \real^n$ may be expressed as $x = \Pi_U x + \Pi_{U^\perp} x \defn \tilde x + \tilde {\tilde x}$. We use the notation $\tilde x = (\tilde x_1, \dots, \tilde x_k)$ where $\tilde x_i = \Pi_{U_i} x_i$, and similarly for $\tilde{\tilde x}_i$. We have the equality
\begin{align*}
A_j x &= A_j( \Pi_U x + \Pi_{U^\perp} x)\\
&= A_j(\Pi_U x) + A_j(\Pi_{U^\perp} x)\\
&= \tilde A_j \tilde x + \Pi_{(A_jU)} A_j \tilde{\tilde x} + \Pi_{(A_jU)^\perp} A_j \tilde{\tilde x}\\
&= \tilde A_j \tilde x + \Gamma_j \tilde{\tilde x} + {\tilde{\tilde A}}_j {\tilde{\tilde x}}.
\end{align*}
For any $X \in \cP(\mathbf r)$, 
\begin{align*}
\sum_{i=1}^k d_i h(X_i) - \sum_{j=1}^m c_j h(A_j X) &= \sum_{i=1}^k d_i h(X_i) - \sum_{j=1}^m c_j h( \tilde A_j \tilde X 
+ \Gamma_j \tilde{\tilde X} 
+ {\tilde{\tilde A}}_j \tilde{\tilde X})\\
&= \sum_{i=1}^k d_i h(\tilde X_i, \tilde{\tilde X}_i) - \sum_{j=1}^m c_j h( \tilde A_j \tilde X + \Gamma_j \tilde{\tilde X}, {\tilde{\tilde A}}_j {\tilde{\tilde X}})\\
&= \sum_{i=1}^k d_i h(\tilde{\tilde X}_i) - \sum_{j=1}^m c_j h({\tilde{\tilde A}}_j {\tilde{\tilde X}}) + \sum_{i=1}^k d_i h(\tilde X_i \mid \tilde{\tilde X}_i) - \sum_{j=1}^m c_j h( \tilde A_j \tilde X + \Gamma_j \tilde{\tilde X} \mid {\tilde{\tilde A}}_j {\tilde{\tilde X}})\\
&\leq \sum_{i=1}^k d_i h(\tilde{\tilde X}_i) - \sum_{j=1}^m c_j h({\tilde{\tilde A}}_j {\tilde{\tilde X}}) + \sum_{i=1}^k d_i h(\tilde X_i \mid \tilde{\tilde X}_i) - \sum_{j=1}^m c_j h( \tilde A_j \tilde X \mid {\tilde{\tilde X}})\\
&= \sum_{i=1}^k d_i h(\tilde{\tilde X}_i) - \sum_{j=1}^m c_j h({\tilde{\tilde A}}_j {\tilde{\tilde X}}) + \sum_{i=1}^k d_i h(\tilde X_i \mid \tilde{\tilde X}) - \sum_{j=1}^m c_j h( \tilde A_j \tilde X \mid {\tilde{\tilde X}})\\
&\leq M(\tilde{\mathbf A}, \mathbf c, \tilde{\mathbf r}, \mathbf d) + M(\tilde{\tilde{\mathbf A}}, \mathbf c, \tilde{\mathbf r}^c, \mathbf d).
\end{align*}
Taking the supremum over all $X \in \cP(\mathbf r)$ completes the proof.
\end{proof}

\begin{lemma}\label{lemma: split2}
Suppose that a BL-EPI datum $(\mathbf A, \mathbf c, \mathbf r, \mathbf d)$  satisfies the conditions in equations \eqref{eq: finite1} and \eqref{eq: finite2}, and suppose that $U$ is an $\mathbf r$-product form critical subspace. Then the BL-EPI data $(\tilde{\mathbf A}, \mathbf c, \tilde{\mathbf r}, \mathbf d)$ and $(\tilde{\tilde{\mathbf A}}, \mathbf c, \tilde{\mathbf r}^c, \mathbf d)$ defined as in Lemma \ref{lemma: split} also satisfy the conditions in equations \eqref{eq: finite1} and \eqref{eq: finite2}.
\end{lemma}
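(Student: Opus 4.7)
The plan is to verify both conditions \eqref{eq: finite1} and \eqref{eq: finite2} for each of the two restricted data, treating the condition on $U$ and the condition on $U^\perp$ separately. Three of the four verifications are essentially bookkeeping: the real content lies in transferring condition \eqref{eq: finite1} from $\real^n$ down to $U^\perp$, which requires lifting product-form subspaces of $U^\perp$ back to product-form subspaces of $\real^n$ via the critical subspace $U$.

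I would start with the easy ``dimension sum'' condition \eqref{eq: finite2}. For the datum on $U$, the surjectivity of $\tilde A_j \colon U \to A_j U$ means the required equality is $\sum_i d_i \tilde r_i = \sum_j c_j \dim(A_j U)$, which is exactly the criticality hypothesis on $U$. Subtracting this equation from the original \eqref{eq: finite2} and using $\dim((A_j U)^\perp) = n_j - \dim(A_j U)$, I would obtain the matching identity $\sum_i d_i (r_i - \tilde r_i) = \sum_j c_j \dim((A_j U)^\perp)$, which is \eqref{eq: finite2} for the datum on $U^\perp$. For \eqref{eq: finite1} on the $U$-datum, I would observe that any $\tilde{\mathbf r}$-product form subspace $W = W_1 \times \cdots \times W_k$ of $U$ (with $W_i \subseteq U_i$) is automatically an $\mathbf r$-product form subspace of $\real^n$, and since $\tilde A_j$ agrees with $A_j$ on $U$, the inequality is inherited directly from \eqref{eq: finite1} applied to $(\mathbf A, \mathbf c, \mathbf r, \mathbf d)$.

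The main obstacle, and the step I would spend the most care on, is \eqref{eq: finite1} for the datum on $U^\perp$. Given a $\tilde{\mathbf r}^c$-product form subspace $W \subseteq U^\perp$ with $W_i \subseteq U_i^\perp$, my plan is to consider the lifted subspace $U + W = (U_1 + W_1) \times \cdots \times (U_k + W_k)$, which is $\mathbf r$-product form in $\real^n$ because each component sum lives in $\real^{r_i}$ and is direct (since $W_i \subseteq U_i^\perp$). Using the decomposition $A_j y = \Gamma_j y + \tilde{\tilde A}_j y$ for $y \in U^\perp$, I would compute
\begin{equation*}
A_j(U + W) = A_j U + \tilde{\tilde A}_j W,
\end{equation*}
where this sum is direct and in fact orthogonal because $\tilde{\tilde A}_j W \subseteq (A_j U)^\perp$. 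Hence $\dim(A_j(U+W)) = \dim(A_j U) + \dim(\tilde{\tilde A}_j W)$ and $\dim(U_i + W_i) = \tilde r_i + \dim(W_i)$. Applying \eqref{eq: finite1} for $(\mathbf A, \mathbf c, \mathbf r, \mathbf d)$ to $U + W$ and then subtracting the critical identity $\sum_i d_i \tilde r_i = \sum_j c_j \dim(A_j U)$ would cancel the $U$ contribution on both sides and leave precisely
\begin{equation*}
\sum_{i=1}^k d_i \dim(W_i) \leq \sum_{j=1}^m c_j \dim(\tilde{\tilde A}_j W),
\end{equation*}
which is \eqref{eq: finite1} for $(\tilde{\tilde{\mathbf A}}, \mathbf c, \tilde{\mathbf r}^c, \mathbf d)$. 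The only delicate point to be checked carefully is the orthogonal-direct-sum dimension identity for $A_j(U + W)$, which rests on the definition of $\tilde{\tilde A}_j$ as a projection onto $(A_j U)^\perp$.
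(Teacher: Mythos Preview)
Your proposal is correct and follows essentially the same approach as the paper's proof: both verify \eqref{eq: finite2} for the two restricted data via criticality and subtraction, inherit \eqref{eq: finite1} on $U$ directly, and for \eqref{eq: finite1} on $U^\perp$ lift a product-form $W \subseteq U^\perp$ to $U+W$, apply the original \eqref{eq: finite1}, and cancel using criticality. If anything, your justification of the dimension identity $\dim(A_j(U+W)) = \dim(A_jU) + \dim(\tilde{\tilde A}_j W)$ via the orthogonal direct-sum decomposition is slightly more explicit than the paper's.
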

\begin{proof}
Verifying the conditions for $(\tilde{\mathbf A}, \mathbf c, \tilde{\mathbf r}, \mathbf d)$ is immediate: the condition in equation \eqref{eq: finite1} restricted to $\tilde{\mathbf r}$ product form subspaces of $U$ yields the first condition, and the criticality of $U$ yields the second condition.

For $j \in [m]$, it is not hard to verify that $\dim(\tilde{\tilde A}_j U^\perp)$ is $n_j - \dim(\tilde A_jU)$. We may now check the second condition for  $(\tilde{\tilde{\mathbf A}}, \mathbf c, \tilde{\mathbf r}^c, \mathbf d)$ by observing the equality
\begin{align*}
\sum_{i=1}^k d_i(r_i - \dim(U_i)) = \sum_{j=1}^m c_j(n_j - \dim({\tilde A}_j U)),
\end{align*}
using the criticality of $U$ and the fact that $\sum_{i=1}^k d_i r_i = \sum_{j=1}^m c_j n_j$. Let $V$ be an arbitrary $\tilde{\mathbf r}^c$-product form subspace of $U^\perp$. Consider the new subspace $V_+ = V \oplus U \subset \real^n$, which is the direct sum of the subspace $V$ with the subspace $U$.
Note that $V_+$ is an $\mathbf r$-product form subspace of
$\real^n$. Using the condition in equation \eqref{eq: finite1} for $V_+$, we have
\begin{align*}
\sum_{i=1}^k d_i \dim(V_{+i}) \le \sum_{j=1}^m c_j \dim(A_j V_+).
\end{align*}
Note that $\dim(V_{+i}) =\dim(V_i) + \dim(U_i)$, for all 
$1 \le i \le k$. Moreover, $\dim(A_j V_+) = \dim(A_j U)+ \dim(\tilde{\tilde A}_j V_i)$. Substituting these equalities in the above inequality, we arrive at
\begin{align*}
\sum_{i=1}^k d_i (\dim(V_i) + \dim(U_i)) \le \sum_{j=1}^m c_j (\dim(A_j U)+ \dim(\tilde{\tilde A}_j V_i)).
\end{align*}
The criticality of $U$ then implies
\begin{align*}
\sum_{i=1}^k d_i \dim(V_i)  \le \sum_{j=1}^m c_j  \dim(\tilde{\tilde A}_j V_i),
\end{align*}
and this completes the proof.
\end{proof}
We are now in a position to prove the following sufficiency result:
\begin{claim}\label{claim: sufficient_finite}
If the conditions in equations \eqref{eq: finite1} and \eqref{eq: finite2} are satisfied, then $M(\mathbf A, \mathbf c, \mathbf r, \mathbf d)$ is finite.	
\end{claim}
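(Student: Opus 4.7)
The plan is by strong induction on the ambient dimension $n$, with the base case $n = 0$ trivial. By Theorem~\ref{thm: EPI+BL} it suffices to show $M_g < \infty$, so I work throughout with Gaussian inputs $X_i \sim \cN(0,\Sigma_i)$ and block-diagonal $\Sigma = \text{Diag}(\Sigma_1, \dots, \Sigma_k)$. The objective takes the closed form
\begin{equation*}
F(\Sigma) \;=\; \tfrac{1}{2}\sum_{i=1}^k d_i \log\det \Sigma_i \;-\; \tfrac{1}{2}\sum_{j=1}^m c_j \log\det(A_j \Sigma A_j^T) + C_0,
\end{equation*}
where $C_0$ collects the $\log(2\pi e)$ contributions. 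Condition \eqref{eq: finite2} cancels these constants and makes $F$ invariant under the global scaling $\Sigma \mapsto t\Sigma$ for $t > 0$, so I may normalize by fixing the largest eigenvalue of $\Sigma$ to equal $1$.

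The inductive step splits into two cases. If there is a non-trivial $\mathbf r$-product critical subspace $U$ with $0 < \dim U < n$, then Lemma~\ref{lemma: split} bounds $M(\mathbf A, \mathbf c, \mathbf r, \mathbf d)$ by the sum of two $M$'s for BL-EPI data on $U$ and $U^\perp$, living in strictly smaller ambient dimensions; Lemma~\ref{lemma: split2} guarantees that both smaller data still satisfy \eqref{eq: finite1} and \eqref{eq: finite2}, so the induction hypothesis gives finiteness of each summand, hence of $M(\mathbf A, \mathbf c, \mathbf r, \mathbf d)$.

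The main obstacle is the remaining non-degenerate case in which every non-trivial proper $\mathbf r$-product subspace satisfies \eqref{eq: finite1} strictly. I would argue by contradiction: suppose $F(\Sigma^{(\ell)}) \to +\infty$ along some sequence on the normalized slice. Passing to subsequences I extract joint limits of the eigenspaces of each $\Sigma_i^{(\ell)}$ grouped by dominant asymptotic scales $1 = \lambda_1^{(\ell)} > \lambda_2^{(\ell)} > \cdots > \lambda_K^{(\ell)}$, with limiting $\mathbf r$-product eigenspaces $V^{(s)} = V_1^{(s)} \times \cdots \times V_k^{(s)}$ and $V^{(\le s)} := \bigoplus_{t \le s} V^{(t)}$. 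A block-determinant expansion for both $\log\det \Sigma_i^{(\ell)}$ and $\log\det(A_j \Sigma^{(\ell)} A_j^T)$, followed by summation by parts whose boundary terms vanish because $A_0 = 0$ trivially and $A_K := \sum_i d_i r_i - \sum_j c_j n_j = 0$ by \eqref{eq: finite2}, yields
\begin{equation*}
F(\Sigma^{(\ell)}) \;=\; \tfrac{1}{2}\sum_{s=1}^{K-1} \bigl(\log \lambda_s^{(\ell)} - \log \lambda_{s+1}^{(\ell)}\bigr)\Bigl(\sum_{i=1}^k d_i \dim V_i^{(\le s)} - \sum_{j=1}^m c_j \dim A_j V^{(\le s)}\Bigr) + O(1).
\end{equation*}
Each $V^{(\le s)}$ for $1 \le s < K$ is a proper $\mathbf r$-product subspace, so \eqref{eq: finite1} makes its bracket non-positive, while $\log \lambda_s^{(\ell)} - \log \lambda_{s+1}^{(\ell)} > 0$; hence every summand is non-positive and $F(\Sigma^{(\ell)}) \leq O(1)$, contradicting $F \to +\infty$.

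The principal technical difficulty is uniform control of the $O(1)$ remainder along the extracted subsequence. This is achieved by observing that the within-scale eigenvalue ratios remain bounded by construction of the scale grouping, the orthonormal parameterizations of the eigenspaces live in compact manifolds, and within each scale block the residual structure is a positive-definite perturbation of a uniformly bounded multiple of the identity; further subsequential extraction, should any within-scale sub-grouping require refinement, terminates after at most $n$ steps. This bookkeeping closes the non-degenerate case and completes the induction.
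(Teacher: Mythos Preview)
Your reduction in the critical-subspace case via Lemmas~\ref{lemma: split} and~\ref{lemma: split2} matches the paper exactly, but your treatment of the non-degenerate case is genuinely different. The paper never attempts a direct compactness argument there. Instead it exploits that, for fixed $(\mathbf A,\mathbf r,\mathbf d)$, the map $\mathbf c\mapsto M(\mathbf A,\mathbf c,\mathbf r,\mathbf d)$ is convex (being a pointwise supremum of affine functions), and that the feasible region $\cK\subseteq\real_+^m$ cut out by \eqref{eq: finite1}--\eqref{eq: finite2} is a compact polytope. Hence it suffices to prove finiteness on the relative boundary of $\cK$; but every relative-boundary point has either some $c_{j_0}=0$ (handled by an auxiliary induction on $m$) or a proper critical subspace (handled by the induction on $n$ you already use). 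This convexity-in-$\mathbf c$ trick completely sidesteps the scale-extraction analysis.

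Your direct route is natural, but the asymptotic identity you write down does not hold with a uniform $O(1)$, and the failure is not the one you anticipate. The danger is not unbounded within-scale eigenvalue ratios but rather eigenspaces of $\Sigma^{(\ell)}$ rotating toward positions degenerate for some $A_j$: if $V^{(\le s)}_\ell\to V^{(\le s)}$ with $\dim A_jV^{(\le s)}_\ell>\dim A_jV^{(\le s)}$, then some singular values of $A_j$ restricted to $V^{(\le s)}_\ell$ tend to $0$, and $\log\det(A_j\Sigma^{(\ell)}A_j^T)$ acquires a ``hidden scale'' (the rate at which that singular value vanishes) that is captured neither by the eigenvalue scales $\lambda_s^{(\ell)}$ of $\Sigma^{(\ell)}$ nor by either the limiting or the $\ell$-dependent dimensions in your formula. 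Concretely, take $k=1$, $n=2$, three rank-one maps $A_1=[1,0]$, $A_2=[0,1]$, $A_3=[1,1]$ with $d_1=1$, $c_1=c_2=c_3=2/3$ (a strictly non-degenerate datum), and let $\Sigma^{(\ell)}$ have unit eigenvector $(\sin(1/\ell),\cos(1/\ell))$ with eigenvalue $1$ and the orthogonal eigenvector with eigenvalue $e^{-\ell}$. Then $\log\det(A_1\Sigma^{(\ell)}A_1^T)\sim -2\log\ell$, whereas your formula with limiting subspaces gives $-\ell$ and with $\ell$-dependent subspaces gives $0$; neither matches to $O(1)$. Your proposed remedy of ``further subsequential extraction should any within-scale sub-grouping require refinement'' does not touch this: there is nothing to refine in the eigenvalue ratios of $\Sigma^{(\ell)}$, and any refinement driven by the hidden $A_j$-scales would produce subspaces that are not of $\mathbf r$-product form, so \eqref{eq: finite1} would no longer apply to them. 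The approach can probably be repaired (one would need to argue an \emph{inequality} $\log\det(A_j\Sigma A_j^T)\ge[\text{formula}]-O(1)$ rather than an equality, and justify the $O(1)$ via a careful Cauchy--Binet selection argument), but this is substantially more than the bookkeeping you sketch.
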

\begin{proof}
The proof proceeds via a double induction on the dimension $n$ and the number of linear maps $m$. We first prove the result for $n=1$ and arbitrary $m$, and for $m=1$ and arbitrary $n$. For $n=1$, it must be that $\mathbf r = \{1\}$ and $\mathbf d = \{d_1\}$. The conditions in equations \eqref{eq: finite1} and \eqref{eq: finite2} imply that $d_1 = \sum_{j=1, n_j > 0}^m c_j$, because $n_j > 0 \Longrightarrow n_j =1$. Thus, $M(\mathbf A, \mathbf c, \mathbf r, \mathbf d)$ equals 
\begin{align*}
    \sup_{X \in \cP(\mathbf r)} d_1h(X) - \sum_{j=1}^m c_j h(A_j X) 
    &= \sup_{X \in \cP(\mathbf r)} d_1h(X) - \sum_{j=1, n_j > 0}^m c_j h(A_j X)\\
    &= \sup_{X \in \cP(\mathbf r)} -\sum_{j=1, n_j > 0}^m c_j \log|A_j|\\
    &= -\sum_{j=1, n_j > 0}^m c_j \log|A_j| < \infty,
\end{align*}
since $h(A_jX) = h(X) + \log |A_j|$ for all $j \in [m]$ such that $n_j > 0$, and $A_j$ is a nonzero scalar for each such $j$.

Now fix $m=1$ and let $n_1 > 0$, $k>0$, $\mathbf r$, $\mathbf d$, $c_1$, and $n = \sum_{i=1}^k r_i$ be arbitrary,
subject to satisfying the conditions in equations \eqref{eq: finite1} and \eqref{eq: finite2}.
%It must be that $\mathbf c = \{c_1\}$ and $\mathbf A = \{A_1\}$ such that $A_1$ is  
%Consider the case $m=1$. Let $k \ge 1$
%be arbitrary and let the other notation be as in the document. 
%We can then write
We write
\[
A_1 = \left[ A_{11} \ldots A_{1k} \right]
\]
where $A_{1i}$ is an $n_1 \times r_i$ matrix for 
$1 \le i \le k$
(and $A$ is an $n_1 \times n$ matrix).
%Here $r_i \ge 1$ for $1 \le i \le k$, and $\sum_{i=1}^k r_i = n$.
%The conditions 
%we have 
%in equations \eqref{eq: finite1} and \eqref{eq: finite2}
%then reduce to
%\[
%\sum_{i=1}^k d_i r_i = c_1 n_1
%\]
%and 
%\[
%\sum_{i=1}^k d_i \mdim(V_i) \le c_1 \mdim(AV)
%\]
%for every $\mbfr$-product form subspace 
%\[
%V := V_1 \times \ldots \times V_k.
%\]
Recall that, by assumption, $d_i > 0$ for $1 \le i \le k$
and $c_1 > 0$.

Let $\mcN(A_1)$ denote the null space of $A_1$. 
For every $\mbfr$-product form subspace 
$V := V_1 \times \ldots \times V_k$
we must have $\mcN(A_1) \cap V_i = \{ 0\}$ for all 
$1 \le i \le k$. This is because if we have 
$0 \neq v_i \in \mcN(A_1) \cap V_i$ for some
$1 \le i \le k$, then letting $V_i := \mbox{span}(\{v_i\})$
and $V_j = \{0\}$ for $1\le j \neq i \le k$, the 
corresponding $\mbfr$-product form subspace 
$V := V_1 \times \ldots \times V_k$ will 
violate the condition
$\sum_{i=1}^k d_i s_i \le \mdim(A_1 V)$, where 
$s_i = 1 = \mdim(V_i)$ and 
$s_j = 0 = \mdim(V_j)$ for $1\le j \neq i \le k$.

We can therefore assume that $\mbox{rk}(A_{1i}) = r_i$
for $1 \le i \le k$. Under this assumption, we will now show that 
$M_g < \infty$, where $M_g$ denotes the supremum of 
\[
\sum_{i=1}^k d_i h(X_i) - c_1 h(A_1 X)
\]
over independent $X_i \sim \mcN(0, \Sigma_i)$ taking values in 
$\mbbR^{r_i}$ with $\Sigma_i$ positive definite
for each  for $1 \le i \le k$, and where 
\[
X := \left[ X_1 \ldots X_k \right]^T.
\]

We have 
\[
h(X_i) = \frac{1}{2} \log \left( (2 \pi e)^{r_i}~ \mbox{det}(\Sigma_i) \right),
\]
for $1 \le i \le k$, and 
\[
h(X) = \frac{1}{2} \log \left( (2 \pi e)^{n_1}~ \mbox{det}(\sum_{i=1}^k A_{1i} \Sigma_i A_{1i}^T) \right).
\]
%Hence we have
It is therefore equivalent to show that the supremum of 
%\[
%\sum_{i=1}^k d_i h(X_i) - c_1 h(A X) = L + \sum_{i=1}^k d_i r_i \log \left( \left( \mbox{det}(\Sigma_i) \right)^{\frac{1}{r_i}} \right)
%- c_1 n_1 \log \left( \left( \mbox{det}(\sum_{i=1}^k A_{1i} \Sigma_i A_{1i}^T) \right)^{\frac{1}{n_1}} \right),
%\]
\[
\sum_{i=1}^k d_i \log \left(  \mbox{det}(\Sigma_i)  \right)
- c_1 \log \left( \mbox{det}(\sum_{i=1}^k A_{1i} \Sigma_i A_{1i}^T) \right),
\]
over $\Sigma_i \in \mbbR^{r_i \times r_i}$ positive definite
for each  for $1 \le i \le k$ is finite.

Let $A_{1i} = W_i \Lambda_i U_i^T$ be a singular value decomposition of $A_{1i}$ for $1 \le i \le k$.
Since $\mbox{rk}(A_{1i}) = r_i$ by assumption, here $\Lambda_i$ is a diagonal $r_i \times r_i$ matrix
with strictly positive diagonal entries, $U_i$ is an $r_i \times r_i$ orthogonal matrix and
$W_i$ is an $n_1 \times r_i$ matrix with orthonormal columns. Note that span of the columns of 
$W_i$ equals the range space of $A_{1i}$.

With $\tilSigma_i$ denoting $U_i \Sigma_i U_i^T$ for $1 \le i \le k$, it is equivalent to show that 
the supremum of 
\[
\sum_{i=1}^k d_i \log \left(  \mbox{det}(\tilSigma_i)  \right)
- c_1 \log \left( \mbox{det}(\sum_{i=1}^k W_i \Lambda_i \tilSigma_i \Lambda_i W_i^T) \right),
\]
over $\tilSigma_i \in \mbbR^{r_i \times r_i}$ positive definite
for each  for $1 \le i \le k$ is finite.

Note that the entries of $\Lambda_i$ depend only on $A_{1i}$, which is fixed, and note that the 
$d_i$ are fixed. Therefore, with 
$\hatSigma_i$ denoting $\Lambda_i \tilSigma_i \Lambda_i$, it is equivalent to show that 
the supremum of 
\[
\sum_{i=1}^k d_i \log \left(  \mbox{det}(\hatSigma_i)  \right)
- c_1 \log \left( \mbox{det}(\sum_{i=1}^k W_i \hatSigma_i W_i^T) \right),
\]
over $\hatSigma_i \in \mbbR^{r_i \times r_i}$ positive definite
for each  for $1 \leq i \leq k$ is finite. Let $\hatSigma_i = \hat{Q}_i \Pi_i \hat{Q}_i^T$ be the spectral-decomposition of $\hatSigma_i$ and 
let $\sigma_{i1}, \ldots, \sigma_{1 r_i}$ denote the eigenvalues of $\hatSigma_i$ in any order.
By assumption these are all strictly positive. Let
\[
\sigma_1 > \sigma_2 > \ldots > \sigma_{n'} > 0 =: \sigma_{n'+1},
\]
denote the ordered list of all the distinct values among these eigenvalues (note that $n = \sum_{i=1}^k r_i$,
so here $1 \le n' \le n$).

Starting with $\sigma_{n'}$ and working towards the larger eigenvalues step by by step we can build up each
$\hatSigma_i$, for $1 \le i \le k$, in layer-cake fashion as
\[
\hatSigma_i = \hatSigma_{i1} + \hatSigma_{i2} + \ldots + \hatSigma_{in'}
\]
where each $\hatSigma_{il}$  for $1 \le l \le n'$ is a positive semidefinite matrix, with a spectral decomposition given by $\hat{Q}_i \Pi_{il} \hat{Q}_i^T$, and each of whose eigenvalues is either
$0$ or $\sigma_j - \sigma_{j+1}$ (recalling the convention that $\sigma_{n'+1} = 0$). 
Thus each $\hatSigma_{il}$ corresponds to a subspace of $\mbbR^{r_i}$, whose dimension we denote as 
$s_{il}$. Note that $s_{in'} = r_i$ and $s_{il}$ is nonincreasing as $l$ decreases,
but it can become $0$ for $l < n'$; however we have $s_{i1} > 0$ for at least one choice of $1 \le i \le k$.
We also have
\[
\sum_{i=1}^k d_i \log \left(  \mbox{det}(\hatSigma_i)  \right) = 
\sum_{l=1}^{n'-1} \left( \sum_{i=1}^k d_i s_{il} \log \frac{\sigma_l}{\sigma_{l+1}}  \right)
+ \sum_{i=1}^k d_i r_i \log \sigma_{n'}.
\]
Observe that $\log \frac{\sigma_l}{\sigma_{l+1}}$ is strictly positive for $1 \le l \le n' -1$.

Let $\hatV_{il}$ denote the subspace of $\mbbR^{r_i}$ corresponding to $\hatSigma_{il}$, i.e. the subspace spanned by the eigenvectors of  $\hatSigma_{il}$. Then 
$\tilV_{il} := \Lambda_i^{-1} \hatV_{il}$ is the subspace corresponding to $\tilSigma_{il}$ in the 
same sense, where $\tilSigma_{il} := \Lambda_i^{-1} \hatSigma_{il} \Lambda_i^{-1}$, and 
$V_{il} := U_i^T \tilV_{il}$ is the subspace corresponding to $\Sigma_{il}$ in the 
same sense, where $\Sigma_{il} := U_i^T \tilSigma_{il} U_i$. Note that
\[
\mdim(V_{il}) = \mdim(\tilV_{il}) = \mdim(\hatV_{il}) =: s_{il}.
\]
By assumption, for each $1 \le l \le n'$ we therefore have
\[
\sum_{i=1}^k d_i s_{il} \le c_1 \mdim(A_1 V_l),
\]
where $V_l := V_{1l} \times \ldots \times V_{kl}$ is an $\mbfr$-product subspace of $\mbbR^n$.

For each $1 \le l \le n'$, since $\sum_{i=1}^k A_{1i} \Sigma_{il} A_{1i}^T = \sum_{i=1}^k W_i \hatSigma_{il} W_i^T$,
we see that the subspace corresponding to $\sum_{i=1}^k W_i \hatSigma_{il} W_i^T$ is $A_1 V_l$. 
In particular, the subspace corresponding to $\sum_{i=1}^k W_i \hatSigma_{in'} W_i^T$ is $\mbbR^{n_1} = A_1 V_{n'} = A_1 \mbbR^n$.

We also note that for each $1 \le i \le k$ we have
\[
\hatV_{i1} \subseteq \hatV_{i2} \subseteq \ldots \subseteq \hatV_{in'} = \mbbR^{r_i}.
\]
Since $\hatSigma_i = \hat{Q}_i \Pi_i \hat{Q}_i^T = \sum_{m=1}^{r_i} \sigma_{im}\hat{q}_{im}\hat{q}_{im}^T$, let us relabel the eigenvectors into $b_{im}$ (according to decreasing values of the eigenvalues) such that we have
\[
\hatSigma_{il} = (\sigma_l - \sigma_{l+1}) \sum_{u_i =1}^{s_{il}} b_{i u_i} b_{i u_i}^T,
\]
where we recall that $\sigma_{n'+1} = 0$ by definition.
We can also write
\[
W_i \hatSigma_{il} W_i^T = (\sigma_l - \sigma_{l+1}) \sum_{u_i =1}^{s_{il}} W_i b_{i u_i} b_{i u_i}^T W_i^T
= (\sigma_l - \sigma_{l+1}) \sum_{u_i =1}^{s_{il}} \tilb_{i u_i} \tilb_{i u_i}^T,
\]
where $\tilb_{i u_i} := W_i b_{i u_i}$ for $1 \le i \le k$ and $1 \le u_i \le r_i$. Note that 
$\tilb_{i u_i} \in \mbbR^{n_1}$. 

Now we have
\begin{eqnarray*}
\sum_{i=1}^k W_i \hatSigma_i W_i^T &=& \sum_{i=1}^k \sum_{l=1}^{n'} W_i \hatSigma_{il} W_i^T \\
&=& \sum_{l=1}^{n'} (\sigma_l - \sigma_{l+1}) \sum_{i=1}^k \sum_{u_i =1}^{s_{il}} \tilb_{i u_i} \tilb_{i u_i}^T\\
&=& \sum_{l=1}^{n'} (\sigma_l - \sigma_{l+1}) M_l,
\end{eqnarray*}
where $M_l := \sum_{i=1}^k \sum_{u_i =1}^{s_{il}} \tilb_{i u_i} \tilb_{i u_i}^T$. 
Note that the subspace corresponding to $M_l$ is $A_1 V_l$. Since the range space of $M_l$ is non-decreasing,  there exists an orthonormal basis $\tilde{q}_1,...,\tilde{q}_{n_1}$ for $\mathbb{R}^{n_1}$ such that the range space of $M_l$ matches the span of $\{q_i\}_{i \in S_l}$ for some appropriate $S_l \subseteq [1:n_1]$. Thus $\textrm{dim}(A_1V_l) = |S_l|$.

By construction we have $S_1 \subseteq S_2 \subseteq \cdots \subseteq S_{n'}=[1:n_1]$. Let $C_l = \sum_{i\in S_l} \tilde{q}_i \tilde{q}_i^T = \tilde{Q} \Theta_l \tilde{Q}^T$ where $\tilde{Q}$ is the orthonormal matrix formed by $\tilde{q}$'s and $\Theta_l$ is a diagonal matrix with diagonal entries being $0$ or $1$, where $1$ occurs at the indices corresponding to the membership in $S_l$.

We now claim that there is positive constant $\delta^2 > 0$ depending only on 
$W_1, \ldots, W_k$ (and in particular not depending on the 
$(\hatSigma_i, 1 \le i \le k)$ or the choices of the bases
$\{b_{i1}, b_{i2}, \ldots, b_{i r_i} \}$ for $1 \le i \le k$)
such that, for all $1 \le l \le n'$, we have
\[
M_l \succeq \delta^2 C_l.
\]
%(a formal proof of this should probably be written).
This is a consequence of Lemma~\ref{lemma:ball}
%proved the end of this document,
%as established 
and is established in 
%in the corollary to that lemma.
Corollary~\ref{corollary:ball}.

We therefore have
\[
\sum_{i=1}^k W_i \hatSigma_i W_i^T 
%= \sum_{i=1}^k \left( \sum_{l=1}^{n'} W_i \hatSigma_{il} W_i^T \right)
%= \sum_{l=1}^{n'} \left( \sum_{i=1}^k W_i \hatSigma_{il} W_i^T \right) 
\succeq \delta^2 \sum_{l=1}^{n'} (\sigma_l - \sigma_{l+1}) C_l =  \delta^2 \sum_{l=1}^{n'} (\sigma_l - \sigma_{l+1}) Q \Theta_l Q^T \succeq 0.
\]
From this it follows that 
\begin{eqnarray*}
c_1 \log \left( \mbox{det}(\sum_{i=1}^k W_i \hatSigma_i W_i^T) \right) 
&\ge&  c_1  \log \left( \mbox{det}(\sum_{l=1}^{n'}(\sigma_l - \sigma_{l+1})  \Theta_l) \right) + \kappa,\\
&\stackrel{(a)}{=}& c_1 \sum_{l=1}^{n'-1} \mdim(A V_l) \log \frac{\sigma_l}{\sigma_{l+1}}
+ c_1 n_1 \log \sigma_{n'} + \kappa,
\end{eqnarray*}
for a fixed constant $\kappa$. Here, to justify step (a),  due to the nested nature of $S_l$, $\sum_{l=1}^{n'}(\sigma_l - \sigma_{l+1})  \Theta_l $ is a diagonal matrix with $\mdim(A V_l) - \mdim(A V_{l-1}) $ entries equal to $\sigma_l$. We take $\mdim(A V_0)=0$.

Since $\sum_{i=1}^k d_i s_{il} \le c_1 \mdim(A V_l)$ and 
$\log \frac{\sigma_l}{\sigma_{l+1}}$ is strictly positive for $1 \le l \le n' -1$,
and since $\sum_{i=1}^k d_i r_i = c_1 n_1$, we can conclude that
\[
\sum_{i=1}^k d_i \log \left(  \mbox{det}(\hatSigma_i)  \right)
- c_1 \log \left( \mbox{det}(\sum_{i=1}^k W_i \hatSigma_i W_i^T) \right) \le - \kappa
\]
for all choices of $\hatSigma_i \in \mbbR^{r_i \times r_i}$ positive definite
for each  for $1 \le i \le k$. This establishes what was desired, when $m=1$.

\color{black}

We have shown that the claim is true for $n=1$ and all $m$. Assume that claim is true for all $n < n_0$ and all $m$. Our goal is to establish the claim for $n = n_0$ and all $m > 0$. To do so, we induct on $m$. The case of $n=n_0$ and $m=1$ follows from our calculations above. Now we assume that the claim is true for $n=n_0$ and all $m < m_0$, and show that it also holds for $n = n_0$ and $m = m_0$.

Let $(\mathbf A, \mathbf c, \mathbf r, \mathbf d)$ be a BL-EPI datum in $\real^{n_0}$ with $m = m_0$. We may assume that $n_j > 0$ for all $j \in [m]$, since otherwise we could have treated the scenario as a BL-EPI datum in $\real^{n_0}$ with $m < m_0$, which is already covered by the inductive hypothesis. For fixed $\mathbf A$, $\mathbf r$, and $\mathbf d$, consider the function defined on $\mathbf c \in \real_+^{m_0}$ as
\begin{equation}
M(\mathbf c) = \sup_{X \in \cP(\mathbf r)} \sum_{i=1}^k d_i h(X_i) - \sum_{j=1}^m c_j h(A_j X).
\end{equation}
Since $M$ is a pointwise supremum of linear functions, $M$ is convex. Let $\cK$ be the region of all $\mathbf c \in \real_+^{m_0}$ such that $(\mathbf A, \mathbf c, \mathbf r, \mathbf d)$ satisfy the conditions in equations \eqref{eq: finite1} and \eqref{eq: finite2}. Note that $\cK$ is a compact, convex set. By Claim \ref{claim: necessary_finite}, we have that $M$ takes $+\infty$ values outside $\cK$. We wish to show that $M$ takes finite values everywhere 
on $\cK$. Since $M$ is convex and $\cK$ is closed, it is enough to show finiteness of $M$ at all points on the boundary of $\cK$. Since $n_j > 0$ for all $j \in [m]$, a point $\mathbf c$ is a boundary point of $\cK$ if and only if at least one of the following two conditions is satisfied: (1) $c_{j_0} = 0$ for some $j_0 \in [m]$; or (2) there exists a proper $\mathbf r$-product form subspace of $\real^{n_0}$ that is critical. If a boundary point satisfies (1), then our induction assumption (on $m$) ensures the finiteness of $M$ evaluated at that BL-EPI datum,
since we could have treated the scenario as a BL-EPI datum in $\real^{n_0}$ with $m < m_0$. 

Now consider a boundary point that satisfies (2),
assuming that $c_j  \neq 0$ for all $j \in [m]$.
Let $V = (V_1, \dots, V_k)$ be an $\mathbf r$-product form critical subspace  of $\real^{n_0}$; i.e., a subspace that satisfies the equality
\begin{equation}
\sum_{i=1}^k d_i \dim(V_i) = \sum_{j=1}^m c_j \dim(A_j V),
\end{equation}
with $\dim(V) < n_0.$ Lemma \ref{lemma: split} shows that given any $\mathbf r$-product form subspace $V$, it is possible to define BL-EPI data on $V$ and $V^\perp$ in terms of the original BL-EPI datum $(\mathbf A, \mathbf c, \mathbf r, \mathbf d)$ that satisfy a certain subadditivity property. In particular, if the datum on $V$ is denoted by $(\tilde{\mathbf A}, \mathbf c, \tilde{\mathbf r}, \mathbf d)$ and that on $V^\perp$ is denoted by $(\tilde{\tilde{\mathbf A}}, \mathbf c, \tilde{\mathbf r}^c, \mathbf d)$, then Lemma \ref{lemma: split} states that
\begin{align*}
M(\mathbf A, \mathbf c, \mathbf r, \mathbf d) \le M(\tilde{\mathbf A}, \mathbf c, \tilde{\mathbf r}, \mathbf d) + M(\tilde{\tilde{\mathbf A}}, \mathbf c, \tilde{\mathbf r}^c, \mathbf d).
\end{align*}
Thus, to show that $M(\mathbf A, \mathbf c, \mathbf r, \mathbf d)$ is finite, is enough to show that $M(\tilde{\mathbf A}, \mathbf c, \tilde{\mathbf r}, \mathbf d)$ and $M(\tilde{\tilde{\mathbf A}}, \mathbf c, \tilde{\mathbf r}^c, \mathbf d)$ are finite. Lemma \ref{lemma: split2} asserts that since $V$ is a critical $\mathbf r$-product form subspace, the BL-EPI data $(\tilde{\mathbf A}, \mathbf c, \tilde{\mathbf r}, \mathbf d)$ and $(\tilde{\tilde{\mathbf A}}, \mathbf c, \tilde{\mathbf r}^c, \mathbf d)$ satisfy both the conditions in equations \eqref{eq: finite1} and \eqref{eq: finite2}. Since $\dim(V), \dim(V^\perp) < n_0$, we may use the induction assumption (on the dimension) to assert $M(\tilde{\mathbf A}, \mathbf c, \tilde{\mathbf r}, \mathbf d) < \infty$ and $M(\tilde{\tilde{\mathbf A}}, \mathbf c, \tilde{\mathbf r}^c, \mathbf d) < \infty$, and conclude the proof.
\end{proof}

\section{A special case}\label{section: special}

We examine a special case here to see what kinds of new inequalities may result from Theorem \ref{thm: EPI+BL}. Let $X_1, X_2,$ and $Y$ be real valued random variables such that $(X_1, X_2) \ind Y$. We would like to lower bound the entropy $h(X_1+Y, X_2+Y)$. Note that the regular EPI applied with the independent random vectors $(X_1, X_2)$ and $(Y, Y)$ yields the trivial lower bound
\begin{align*}
e^{h(X_1+Y, X_2+Y)} \geq e^{h(X_1, X_2)} + e^{h(Y,Y)} = e^{h(X_1, X_2)}.
\end{align*}
Note also that 
$$
\begin{pmatrix}
X_1 + Y\\
X_2 + Y
\end{pmatrix}
=
\begin{pmatrix}
1 &0 &1\\
0 &1 &1\\
\end{pmatrix}
\begin{pmatrix}
X_1\\
X_2\\
Y
\end{pmatrix}.
$$
However, it is not possible to use Zamir and Feder's EPI to provide lower bounds on $h(X_1+Y, X_2+Y)$ because of the dependency between $X_1$ and $X_2$. We show that Theorem \ref{thm: EPI+BL} may be used to obtain a family of nontrivial lower bounds that account for this dependency.

\begin{lemma}\label{lemma: abcd}
Let $\alpha, \beta, \delta_1, \delta_2 \geq 0$. Consider the inequality
\begin{equation}
\alpha h(X_1, X_2) + \beta h(Y) \leq h(X_1+Y, X_2+Y) + \delta_1 h(X_1) + \delta_2 h(X_2) + C(\alpha, \beta, \delta_1, \delta_2),
\end{equation}
where $C(\alpha, \beta, \delta_1, \delta_2)$ is some constant that depends only on $\alpha, \beta,$ $\delta_1, \delta_2$. The above inequality holds for all $(X_1, X_2) \ind Y$ if and only if $\alpha, \beta, \delta_1, \delta_2$ satisfy the following inequalities:
\begin{enumerate}
\item
$2\alpha + \beta = 2 +  \delta_1 + \delta_2$;
\item
$\beta \leq 1$;
\item
$\alpha \leq 1+\delta_1$, and $\alpha \leq 1+\delta_2$;
\item
$\alpha + \beta \leq 1+\delta_1+\delta_2$, which, combined with condition (1), is equivalent to $\alpha \geq 1$.
\end{enumerate}
\end{lemma}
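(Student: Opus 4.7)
The plan is to recognize the inequality as an instance of Theorem \ref{thm: EPI+BL} and then invoke Theorem \ref{thm: finite} to convert finiteness conditions on the corresponding $M$ into conditions (1)--(4). Package the three scalar variables as $Z := (X_1, X_2, Y) \in \real^3$ with $k = 2$ and partition $\mathbf r = (2, 1)$, so that the independent blocks are $Z_1 = (X_1, X_2)$ and $Z_2 = Y$, matching the hypothesis $(X_1, X_2) \ind Y$. Set $d_1 = \alpha$, $d_2 = \beta$, and introduce $m = 3$ surjective linear maps: $A_1 : \real^3 \to \real^2$ given by $A_1 Z = (X_1 + Y, X_2 + Y)$ with $c_1 = 1$, $n_1 = 2$; $A_2 Z = X_1$ with $c_2 = \delta_1$, $n_2 = 1$; $A_3 Z = X_2$ with $c_3 = \delta_2$, $n_3 = 1$. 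Under this encoding, the smallest constant $C$ for which the displayed inequality holds universally is exactly $M(\mathbf A, \mathbf c, \mathbf r, \mathbf d)$, which by Theorem \ref{thm: EPI+BL} equals $M_g$.

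By Theorem \ref{thm: finite}, this $M$ is finite if and only if the equality \eqref{eq: finite2} and the inequalities \eqref{eq: finite1} hold. Condition \eqref{eq: finite2} reads $2\alpha + \beta = 2 + \delta_1 + \delta_2$, which is exactly condition (1) of the lemma. It therefore remains to enumerate all $\mathbf r$-product form subspaces $V = V_1 \times V_2$ with $V_1 \subseteq \real^2$ and $V_2 \subseteq \real$ and translate \eqref{eq: finite1} into (2), (3), (4).

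The cases that generate non-redundant constraints are:
\begin{itemize}
\item $V_1 = \{0\}$, $V_2 = \real$: here $\dim A_1 V = 1$ (the diagonal of $\real^2$) and $\dim A_2 V = \dim A_3 V = 0$, yielding $\beta \le 1$, i.e.\ condition (2).
\item $V_1 = \mathrm{span}(e_i)$, $V_2 = \{0\}$ for $i = 1, 2$: here $\dim A_1 V = 1$, $\dim A_i V = 1$, and $\dim A_{3-i} V = 0$, yielding $\alpha \le 1 + \delta_i$, i.e.\ condition (3).
\item $V_1 = \mathrm{span}((1,1))$, $V_2 = \real$: the key diagonal case, where $A_1 V$ collapses to the $1$-dimensional diagonal of $\real^2$ while $A_2 V = A_3 V = \real$, yielding $\alpha + \beta \le 1 + \delta_1 + \delta_2$, i.e.\ condition (4).
\end{itemize}
Every other product-form subspace gives a constraint implied by (1)--(4): e.g.\ $V_1 = \mathrm{span}((a,b))$ with $a \ne b$ nonzero and $V_2 = \real$ gives $\alpha + \beta \le 2 + \delta_1 + \delta_2$, weaker than (2)+(3); and $V = \real^3$ gives $2\alpha + \beta \le 2 + \delta_1 + \delta_2$, which is automatic from (1). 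Conversely, each of (2)--(4) arose as \eqref{eq: finite1} for a specific product subspace, so they are necessary.

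The main point requiring care is the enumeration, and in particular identifying the diagonal subspace $\mathrm{span}((1,1)) \times \real$ as the one that forces condition (4): $A_1$ has a nontrivial kernel only along the direction where $x_1 = x_2$ and $y$ is adjustable, and this dimensional collapse is what produces the binding constraint $\alpha \ge 1$ (after substitution of (1)). I would conclude by noting that, with all four conditions in force, Theorem \ref{thm: finite} guarantees $M_g < \infty$, so $C(\alpha, \beta, \delta_1, \delta_2) := M_g$ is a valid (and optimal) choice of constant, completing both directions.
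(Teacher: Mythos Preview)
Your proposal is correct and essentially identical to the paper's proof: both encode the inequality as a BL-EPI datum with $\mathbf r=(2,1)$, $d_1=\alpha$, $d_2=\beta$, $c=(1,\delta_1,\delta_2)$, invoke Theorem~\ref{thm: finite}, and extract conditions (1)--(4) from the same product-form subspaces $\{0\}\times\real$, $\mathrm{span}(e_i)\times\{0\}$, and $\mathrm{span}((1,1))\times\real$. Your write-up is in fact a bit more explicit than the paper's in verifying that the remaining product-form subspaces yield only redundant constraints, which the paper asserts via ``an exhaustive search \ldots\ is not hard to do.''
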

\begin{proof}
We shall use Theorem \ref{thm: finite} to show this result. The above inequality is easily seen to be of the form in Theorem \ref{thm: EPI+BL}, where $A_1 = [1, 0, 1; 0, 1, 1]$, $A_2 = [1, 0, 0]$, $A_3 = [0, 1, 0]$, ${\mathbf r} = (2,1)$, $d_1 = \alpha$, and $d_2 = \beta$.
An exhaustive search of all possible subspaces $V$ that are in $\mathbf r$-product form where $\mathbf r = (2,1)$ is not hard to do. For simplicity, we refer to the axes in $\real^3$ as $X_1, X_2, Y$. Thus, the subspace $X_1$ is simply the subspace spanned by $(1, 0, 0)$.
\begin{enumerate}
\item
Equality (1) follows directly from equation
\eqref{eq: finite2} of Theorem \ref{thm: finite};
\item 
Inequality (2) follows from equation
\eqref{eq: finite1} of Theorem \ref{thm: finite}, by choosing $V = \phi \times Y$;
\item
Inequality (3) follows from equation
\eqref{eq: finite2} of Theorem \ref{thm: finite}, by choosing $V = X_1 \times \phi$ and $V = X_2 \times \phi$;
\item
Inequality (4) is obtained from equation
\eqref{eq: finite2} of Theorem \ref{thm: finite}, by a careful choice of $V = (X_1 + X_2) \times Y$, i.e. the subspace spanned by $(1, 1, 0)$ and $(0, 0, 1)$. 
\end{enumerate}
\end{proof}

\begin{claim}
For $\alpha, \beta < 1, \delta_1 = \delta_2 = \delta$ satisfying the conditions in Lemma \ref{lemma: abcd}, the following inequality holds:
\begin{align*}
&h(X_1+Y, X_2+Y) \geq (\alpha - \delta) h(X_1, X_2) + \beta h(Y) - \delta I(X_1; X_2) - D,
\end{align*}
where 
$$D = \frac{1}{2} \log \left( \frac{\beta^\beta(1-\beta)^{1-\beta}}{2^\beta}\left(1+\frac{\beta}{2\delta}\right)^{\alpha+\beta-1}\left(1-\frac{\beta}{2\delta}\right)^{\alpha-1} \right).$$
\end{claim}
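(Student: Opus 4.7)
The plan is to identify the constant $D$ with the Gaussian extremal value $M_g$ from Theorem \ref{thm: EPI+BL} for the BL-EPI datum underlying Lemma \ref{lemma: abcd}, compute $M_g$ explicitly over centered Gaussians, and then convert to the stated form via $h(X_1)+h(X_2) = h(X_1,X_2) + I(X_1;X_2)$. Concretely, specializing Lemma \ref{lemma: abcd} to $\delta_1 = \delta_2 = \delta$ gives $\alpha h(X_1,X_2) + \beta h(Y) \le h(X_1+Y, X_2+Y) + \delta h(X_1) + \delta h(X_2) + C$, with the smallest valid $C$ equal to $M_g$ by Theorem \ref{thm: EPI+BL}; grouping $-\delta h(X_1) - \delta h(X_2) = -\delta h(X_1,X_2) - \delta I(X_1;X_2)$ then yields the claimed inequality with $D = M_g$. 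So the task reduces entirely to evaluating $M_g$ in closed form.

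For the Gaussian optimization, parameterize $(X_1, X_2) \sim \cN(0,\Sigma)$ by variances $a,b$ and correlation $\rho$, and $Y \sim \cN(0,y)$, so that $\det\Sigma = ab(1-\rho^2)$ and $\det(\Sigma + yJ) = ab(1-\rho^2) + y(a+b-2\rho\sqrt{ab})$, where $J$ is the $2{\times}2$ all-ones matrix. The additive $\tfrac{1}{2}\log(2\pi e)$ constants from the five entropies assemble with net weight $\alpha + \tfrac{\beta}{2} - 1 - \delta$, which vanishes by condition (1) of Lemma \ref{lemma: abcd}, and so drop out. I would then maximize sequentially in $(y,(a,b),\rho)$: (i) one-variable calculus in $y$ gives the critical point $y^{\ast}(a+b-2\rho\sqrt{ab}) = \tfrac{1-\beta}{\beta}\,ab(1-\rho^2)$, contributing a term $\beta\log\beta + (1-\beta)\log(1-\beta)$ together with a residual dependence on $(a,b,\rho)$; (ii) for fixed $\rho$, the AM-GM bound $a+b-2\rho\sqrt{ab} \ge 2\sqrt{ab}(1-\rho)$ (with equality iff $a=b$) makes $a=b$ the optimum, and the identities $\alpha-1 = \delta-\beta/2$ and $\alpha+\beta-1 = \delta+\beta/2$ (again from the equality constraint) cancel the residual $\log a, \log b$ dependence, leaving $-\beta\log 2 + (\alpha-1)\log(1-\rho) + (\alpha+\beta-1)\log(1+\rho)$; (iii) first-order conditions in $\rho$ produce the interior maximizer $\rho^{\ast} = \beta/(2\delta)$, which lies in $(0,1)$ because the hypotheses force $\delta > \beta/2$.

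Collecting the three contributions and multiplying by the overall factor of $\tfrac{1}{2}$ yields $M_g = \tfrac{1}{2}\log\!\bigl[\beta^\beta(1-\beta)^{1-\beta}\cdot 2^{-\beta}\cdot (1-\beta/(2\delta))^{\alpha-1}(1+\beta/(2\delta))^{\alpha+\beta-1}\bigr]$, which is exactly $D$, completing the proof. The main obstacle here is purely bookkeeping: verifying that the $2\pi e$ constants cancel, that the $\log a,\log b$ contributions disappear after imposing $a=b$, and that the surviving exponents assemble into $\alpha-1$ and $\alpha+\beta-1$. Each of these reductions is a direct consequence of the equality $2\alpha + \beta = 2 + 2\delta$ built into the hypotheses of Lemma \ref{lemma: abcd}.
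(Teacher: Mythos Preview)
Your approach is correct and essentially the same as the paper's: both compute $M_g$ by parameterizing the Gaussians, symmetrizing to equal variances via AM--GM, and then optimizing the remaining scalar parameters, the only difference being the order (the paper sets $K_1=K_2$ first and then optimizes the ratio $x=K_3/K$, whereas you optimize in $y$ first). One small slip to fix: the critical point in your step (i) should read $y^{\ast}(a+b-2\rho\sqrt{ab}) = \tfrac{\beta}{1-\beta}\,ab(1-\rho^2)$ (the fraction is inverted in your write-up), though this does not propagate and your final expression for $D$ is correct.
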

\begin{proof}
For $\alpha, \beta, \delta_1, \delta_2$, the optimal constant $C$ is given by
\begin{align*}
e^{2C} &= \sup_{K_1, K_2, K_3, \rho} \frac{\left(\det \begin{pmatrix} &K_1 & \rho\sqrt{K_1K_2}\\ &\rho\sqrt{K_1K_2} &K_2 \end{pmatrix}\right)^\alpha \cdot K_3^\beta}{\det \begin{pmatrix} &K_1+K_3 & \rho\sqrt{K_1K_2}+K_3\\ &\rho\sqrt{K_1K_2}+K_3 &K_2+K_3 \end{pmatrix} K_1^{\delta_1} K_2^{\delta_2}}\\
&= \sup_{K_1, K_2, K_3, \rho} \frac{K_1^{\alpha-\delta_1}K_2^{\alpha-\delta_2}(1-\rho^2)^\alpha \cdot K_3^\beta}{K_1K_2(1-\rho^2) + K_3(K_1+K_2-2\rho \sqrt{K_1K_2})}.
\end{align*}
Calculating the above supremum for arbitrary $\alpha, \beta, \delta_1, \delta_2$ is cumbersome so we assume $\delta_1 = \delta_2 = \delta.$ The supremum simplifies to
\begin{align*}
e^{2C} &= \sup_{K_1, K_2, K_3, \rho} \frac{(K_1K_2)^{\alpha-\delta}(1-\rho^2)^\alpha \cdot K_3^\beta}{K_1K_2(1-\rho^2) + K_3(K_1+K_2-2\rho \sqrt{K_1K_2})}.
\end{align*}
For a fixed $K_1K_2$ and fixed $K_3$, it is clear that the optimal choice of $K_1 = K_2 = \sqrt {K_1K_2}$ maximizes the above expression. Thus, we assume that $K_1 = K_2 = K$ and obtain 
\begin{align*}
e^{2C} &= \sup_{K, K_3, \rho} \frac{(K)^{2\alpha-2\delta-1}(1-\rho^2)^\alpha \cdot K_3^\beta}{K(1-\rho^2) + 2K_3(1-\rho)}.
\end{align*}
Let $x \defn K_3/K$, and noting that $2\alpha-2\delta-1 = 1-\beta$, we obtain
\begin{align*}
e^{2C} &= \sup_{x \geq 0, \rho} \frac{x^\beta(1-\rho^2)^\alpha}{(1-\rho^2) + 2x(1-\rho)}\\
&= \sup_{x \geq 0, \rho} \frac{x^\beta(1-\rho)^\alpha(1+\rho)^\alpha}{(1-\rho)(1+\rho) + 2x(1-\rho)}\\
&=  \sup_{x \geq 0, \rho} \frac{x^\beta(1-\rho)^{\alpha-1}(1+\rho)^\alpha}{(1+\rho) + 2x}.
\end{align*}
For a fixed $\rho$, the maximum of the above expression is attained when 
\begin{align*}
x &= \frac{\beta(1+\rho)}{2(1-\beta)}.
\end{align*}
\ifvacomments 
Substituting this value of $x$, 
\begin{align*}
e^{2C} &= \sup_{\rho} \frac{(1+\rho)^\alpha(1-\rho)^{\alpha-1} \left(\frac{\beta(1+\rho)}{2(1-\beta)}\right)^\beta}{\frac{\beta(1+\rho)}{(1-\beta)}+ (1+\rho)}\\
&= \sup_{\rho} \frac{(1+\rho)^\alpha(1-\rho)^{\alpha-1} \left(\frac{\beta(1+\rho)}{2(1-\beta)}\right)^\beta}{\frac{1+\rho}{1-\beta}}\\
&=  \frac{\beta^\beta(1-\beta)^{1-\beta}}{2^\beta}\sup_{\rho}  (1+\rho)^{\alpha+\beta-1}(1-\rho)^{\alpha-1}.
\end{align*}
Differentiating with respect to $\rho$, the supremum is seen to be attained when $\rho = \frac{\beta}{2\alpha+\beta-2} = \frac{\beta}{2\delta}.$ 
Substituting this, we get
\begin{align*}
e^{2C} &=  \frac{\beta^\beta(1-\beta)^{1-\beta}}{2^\beta}\left(1+\frac{\beta}{2\delta}\right)^{\alpha+\beta-1}\left(1-\frac{\beta}{2\delta}\right)^{\alpha-1}.
\end{align*}
This leads to the entropy inequality
\begin{align*}
&h(X_1+Y, X_2+Y) \\
&\geq \alpha h(X_1, X_2) + \beta h(Y) - \delta h(X_1) - \delta h(X_2) - \frac{1}{2} \log \left( \frac{\beta^\beta(1-\beta)^{1-\beta}}{2^\beta}\left(1+\frac{\beta}{2\delta}\right)^{\alpha+\beta-1}\left(1-\frac{\beta}{2\delta}\right)^{\alpha-1} \right)\\
&= (\alpha - \delta) h(X_1, X_2) + \beta h(Y) - \delta I(X_1; X_2) - \frac{1}{2} \log \left( \frac{\beta^\beta(1-\beta)^{1-\beta}}{2^\beta}\left(1+\frac{\beta}{2\delta}\right)^{\alpha+\beta-1}\left(1-\frac{\beta}{2\delta}\right)^{\alpha-1} \right).
\end{align*}
Notice that the mutual information term $I(X_1; X_2)$ accounts for the dependency between $X_1$ and $X_2$.\end{proof}

\section{Conclusion} \label{section: end}

In this paper, we established a new inequality that unifies the BLI and the EPI by establishing subadditivity of certain entropic functionals. There are several interesting research directions that are worth pursuing. We did not address the questions of extremizability and uniqueness of extremizers in this work. One reason for this is that Theorem~\ref{thm: EPI+BL} is established by taking the limit as $\epsilon$ and $\delta$ go to 0. When $\epsilon$ and $\delta$ are strictly bounded away from 0, the extremizer of $s_{\epsilon, \delta}(\cdot)$ under a covariance constraint exists and is a unique Gaussian distribution. However, these existence and uniqueness properties need not hold in the limit as $\epsilon, \delta \to 0$. In general, such a proof strategy is a powerful tool for proving inequalities, but may not always succeed in identifying necessary and sufficient conditions for equality. For this reason, alternate proof strategies that rely on heat flow based arguments \cite{BarEtal06, BenEtal08, CarEtal09} or optimal transport methods \cite{Bar98, Rio17} are worth exploring as well. After a preprint of this work appeared online, an optimal transport-based proof of Theorem~\ref{thm: EPI+BL} was discovered in Courtade~\cite{Cou19}. Shortly thereafter, Courtade and Liu~\cite{CouLiu20} proved Theorem~\ref{thm: EPI+BL} as a limiting case of the forward-reverse Brascamp-Lieb inequality~\cite{LiuEtal18} and gave an alternate proof of Theorem~\ref{thm: finite}.  

Finally, although our results generalize the BLI and the EPI to vector random variables with more general independence properties, these independence properties are still quite restrictive. For instance, the inequalities we derived do not encompass the monotonicity of entropy power family of results \cite{abbn04,mab07,mag18}. It would be interesting to generalize our inequalities to include the above family as well. Another (related) direction to pursue would be to establish similar entropy inequalities under weaker independence conditions. 

\section*{Acknowledgements}

The research of VA was
supported by the NSF grants
CNS-1527846, CCF-1618145, CCF-1901004, CIF-2007965, the NSF Science \& Technology
Center grant CCF-0939370 (Science of Information), and the
William and Flora Hewlett Foundation supported Center for
Long Term Cybersecurity at Berkeley. VJ acknowledges support from NSF grants CCF-1841190 and CCF-1907786, and is grateful to the Department of Information Engineering at CUHK for hosting him in July 2018, when a part of this work was done. The research of CN was supported by GRF grants 14303714, 14231916, 14206518 and a discretionary fund of the Vice Chancellor of CUHK.
 
\bibliographystyle{unsrt}
\bibliography{refs.bib}

\begin{thebibliography}{10}

\bibitem{Sha48}
C.~E. Shannon.
\newblock A mathematical theory of communication, {I} and {II}.
\newblock {\em Bell System Technical Journal}, 27:379--423, 1948.

\bibitem{Bla65}
N.~Blachman.
\newblock The convolution inequality for entropy powers.
\newblock {\em IEEE Transactions on Information Theory}, 11(2):267--271, 1965.

\bibitem{Cos85}
M.~Costa.
\newblock A new entropy power inequality.
\newblock {\em IEEE Transactions on Information Theory}, 31(6):751--760, 1985.

\bibitem{ZamFed93}
R.~Zamir and M.~Feder.
\newblock A generalization of the entropy power inequality with applications.
\newblock {\em IEEE Transactions on Information Theory}, 39(5):1723--1728,
  1993.

\bibitem{Cou18}
T.~A. Courtade.
\newblock A strong entropy power inequality.
\newblock {\em IEEE Transactions on Information Theory}, 64(4):2173--2192,
  2018.

\bibitem{PolWu17}
Y.~Polyanskiy and Y.~Wu.
\newblock Strong data-processing inequalities for channels and {B}ayesian
  networks.
\newblock In {\em Convexity and Concentration}, pages 211--249. Springer, 2017.

\bibitem{PolWu16}
Y.~Polyanskiy and Y.~Wu.
\newblock Wasserstein continuity of entropy and outer bounds for interference
  channels.
\newblock {\em IEEE Transactions on Information Theory}, 62(7):3992--4002,
  2016.

\bibitem{Sta59}
A.~J. Stam.
\newblock Some inequalities satisfied by the quantities of information of
  {F}isher and {S}hannon.
\newblock {\em Information and Control}, 2(2):101--112, 1959.

\bibitem{Rio11}
O.~Rioul.
\newblock Information theoretic proofs of entropy power inequalities.
\newblock {\em IEEE Transactions on Information Theory}, 57(1):33--55, 2011.

\bibitem{Lie02b}
E.~H. Lieb.
\newblock Proof of an entropy conjecture of {W}ehrl.
\newblock In {\em Inequalities}, pages 359--365. Springer, 2002.

\bibitem{ZamFed93b}
R.~Zamir and M.~Feder.
\newblock A generalization of information theoretic inequalities to linear
  transformations of independent vector.
\newblock In {\em Proceedings of the 6-th Joint Swedish-Russian International
  Workshop on Information Theory}, pages 254--258, 1993.

\bibitem{BraLie76}
H.~J. Brascamp and E.~H. Lieb.
\newblock Best constants in {Y}oung's inequality, its converse, and its
  generalization to more than three functions.
\newblock {\em Advances in Mathematics}, 20(2):151--173, 1976.

\bibitem{BenEtal08}
J.~Bennett, A.~Carbery, M.~Christ, and T.~Tao.
\newblock The {B}rascamp-{L}ieb inequalities: {F}initeness, structure and
  extremals.
\newblock {\em Geometric and Functional Analysis}, 17(5):1343--1415, 2008.

\bibitem{Gar02}
R.~Gardner.
\newblock The {B}runn-{M}inkowski inequality.
\newblock {\em Bulletin of the American Mathematical Society}, 39(3):355--405,
  2002.

\bibitem{CarEtal04}
E.~A. Carlen, E.~H. Lieb, and M.~Loss.
\newblock A sharp analog of {Y}oung's inequality on {$S^N$} and related entropy
  inequalities.
\newblock {\em The Journal of Geometric Analysis}, 14(3):487--520, 2004.

\bibitem{CarEtal09}
E.~A. Carlen and D.~Cordero-Erausquin.
\newblock Subadditivity of the entropy and its relation to {B}rascamp-{L}ieb
  type inequalities.
\newblock {\em Geometric and Functional Analysis}, 19(2):373--405, 2009.

\bibitem{BarEtal06}
F.~Barthe, D.~Cordero-Erausquin, and B.~Maurey.
\newblock Entropy of spherical marginals and related inequalities.
\newblock {\em Journal de Math{\'e}matiques Pures et Appliqu{\'e}es},
  86(2):89--99, 2006.

\bibitem{CorLed10}
D.~Cordero-Erausquin and M.~Ledoux.
\newblock The geometry of {E}uclidean convolution inequalities and entropy.
\newblock {\em Proceedings of the American Mathematical Society},
  138(8):2755--2769, 2010.

\bibitem{Nai14}
C.~Nair.
\newblock {\em Equivalent formulations of hypercontractivity using information
  measures}.
\newblock International Zurich Seminar on Communications (IZS), 2014.
\newblock Available at
  \url{http://chandra.ie.cuhk.edu.hk/pub/papers/manuscripts/IZS14.pdf}.

\bibitem{LiuEtal18}
J.~Liu, T.~Courtade, P.~Cuff, and S.~Verd{\'u}.
\newblock A forward-reverse {B}rascamp-{L}ieb inequality: {E}ntropic duality
  and {G}aussian optimality.
\newblock {\em Entropy}, 20(6):418, 2018.

\bibitem{Bar98}
F.~Barthe.
\newblock On a reverse form of the {B}rascamp-{L}ieb inequality.
\newblock {\em Inventiones Mathematicae}, 134(2):335--361, 1998.

\bibitem{GengNair14}
Y.~Geng and C.~Nair.
\newblock The capacity region of the two-receiver {G}aussian vector broadcast
  channel with private and common messages.
\newblock {\em IEEE Transactions on Information Theory}, 60(4):2087--2104,
  2014.

\bibitem{Bar98b}
F.~Barthe.
\newblock Optimal {Y}oung's inequality and its converse: a simple proof.
\newblock {\em Geometric \& Functional Analysis GAFA}, 8(2):234--242, 1998.

\bibitem{Lie02}
E.~H. Lieb.
\newblock {G}aussian kernels have only {G}aussian maximizers.
\newblock In {\em Inequalities}, pages 595--624. Springer, 2002.

\bibitem{Car91}
E.~A. Carlen.
\newblock Superadditivity of {F}isher's information and logarithmic {S}obolev
  inequalities.
\newblock {\em Journal of Functional Analysis}, 101(1):194--211, 1991.

\bibitem{CouJia14}
T.~Courtade and J.~Jiao.
\newblock An extremal inequality for long {M}arkov chains.
\newblock In {\em 2014 52nd Annual Allerton Conference on Communication,
  Control, and Computing (Allerton)}, pages 763--770. IEEE, 2014.

\bibitem{KimEtal16}
H.~Kim, B.~Nachman, and A.~El~Gamal.
\newblock Superposition coding is almost always optimal for the {P}oisson
  broadcast channel.
\newblock {\em IEEE Transactions on Information Theory}, 62(4):1782--1794,
  2016.

\bibitem{Gol16}
Z.~Goldfeld.
\newblock {MIMO} {G}aussian broadcast channels with common, private and
  confidential messages.
\newblock In {\em 2016 IEEE Information Theory Workshop (ITW)}, pages 41--45.
  IEEE, 2016.

\bibitem{YangEtal17}
T.~Yang, N.~Liu, W.~Kang, and S.~S. Shitz.
\newblock An upper bound on the sum capacity of the downlink multicell
  processing with finite backhaul capacity.
\newblock In {\em 2017 IEEE International Symposium on Information Theory
  (ISIT)}, pages 2053--2057. IEEE, 2017.

\bibitem{ZhaEtal18}
X.~Zhang, V.~Anantharam, and Y.~Geng.
\newblock {G}aussian optimality for derivatives of differential entropy using
  linear matrix inequalities.
\newblock {\em Entropy}, 20(3):182, 2018.

\bibitem{AhlGac76}
R.~Ahlswede and P.~G\'{a}cs.
\newblock Spreading of sets in product spaces and hypercontraction of the
  {M}arkov operator.
\newblock {\em The Annals of Probability}, pages 925--939, 1976.

\bibitem{BeiNai16}
S.~Beigi and C.~Nair.
\newblock Equivalent characterization of reverse {B}rascamp-{L}ieb-type
  inequalities using information measures.
\newblock In {\em 2016 IEEE International Symposium on Information Theory
  (ISIT)}, pages 1038--1042. IEEE, 2016.

\bibitem{Nai14b}
C.~Nair.
\newblock {\em An extremal inequality related to hypercontractivity of
  {G}aussian random variables}.
\newblock Information Theory and Applications Workshop (2014), 2014.
\newblock Available at
  \url{http://chandra.ie.cuhk.edu.hk/pub/papers/manuscripts/ITA14.pdf}.

\bibitem{GhuOlk62}
S.~G. Ghurye and I.~Olkin.
\newblock A characterization of the multivariate normal distribution.
\newblock {\em The Annals of Mathematical Statistics}, 33(2):533--541, 1962.

\bibitem{HUT}
J.-B. Hiriart-Urruty and C.~Lemar\'{e}chal.
\newblock {\em Fundamentals of Convex Analysis}.
\newblock Springer, 2004.

\bibitem{Rio17}
O.~Rioul.
\newblock Yet another proof of the entropy power inequality.
\newblock {\em IEEE Transactions on Information Theory}, 63(6):3595--3599,
  2017.

\bibitem{Cou19}
T.~A. Courtade.
\newblock Transportation proof of an inequality by {A}nantharam, {J}og and
  {N}air.
\newblock {\em arXiv preprint arXiv:1901.10893}, 2019.

\bibitem{CouLiu20}
T.~A. Courtade and J.~Liu.
\newblock {E}uclidean forward-reverse {B}rascamp-{L}ieb inequalities:
  Finiteness, structure and extremals.
\newblock {\em The Journal of Geometric Analysis}, 2020.

\bibitem{abbn04}
S.~Artstein, K.~Ball, F.~Barthe, and A.~Naor.
\newblock Solution of shannon's problem on the monotonicity of entropy.
\newblock {\em Journal of the American Mathematical Society}, 17:975--982,
  2004.

\bibitem{mab07}
Mokshay Madiman and Andrew Barron.
\newblock Generalized entropy power inequalities and monotonicity properties of
  information.
\newblock {\em IEEE Transactions on Information Theory}, 53(7):2317--2329,
  2007.

\bibitem{mag18}
Mokshay Madiman and Farhad Ghassemi.
\newblock Combinatorial entropy power inequalities: A preliminary study of the
  stam region.
\newblock {\em IEEE Transactions on Information Theory}, 65(3):1375--1386,
  2019.

\bibitem{Pos75}
E.~Posner.
\newblock Random coding strategies for minimum entropy.
\newblock {\em IEEE Transactions on Information Theory}, 21(4):388--391, July
  1975.

\end{thebibliography}

\begin{appendix}
\section{Supporting results for Theorem \ref{thm: EPI+BL}}
\begin{lemma}\label{lemma: markov_ind}
Let $X, Y,$ and  $Z$ be random variables taking values in $\real^{n_X}, \real^{n_Y},$ and $\real^{n_Z}$ respectively, such that the following hold: (a) $(X, Y, Z)$ has a strictly positive density on $\real^{n_X+n_Y+n_Z}$; (b) $X \rt Y \rt Z$; and (c) $X \rt Z \rt Y$. Then $X \ind (Y,Z)$.
\end{lemma}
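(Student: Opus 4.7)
The plan is to write both Markov conditions as equalities of densities and chase the algebra. From $X \to Y \to Z$ together with strict positivity, we obtain the factorization $p(x,y,z) = p(x,y)\,p(y,z)/p(y)$, and from $X \to Z \to Y$ we similarly obtain $p(x,y,z) = p(x,z)\,p(y,z)/p(z)$. Setting these two expressions equal and canceling the common factor $p(y,z)$, which is strictly positive by hypothesis, yields the identity $p(x\mid y) = p(x\mid z)$ for every triple $(x,y,z)$.

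Next I would exploit the separation of variables in this identity. For each fixed $x$, the left-hand side depends only on $y$ and the right-hand side only on $z$; since they agree for all $y,z$, each side must be a function of $x$ alone. In other words, $p(x\mid y)$ is independent of $y$, which is exactly $X \ind Y$, and symmetrically $X \ind Z$. This step is the only nontrivial one, but it is essentially immediate once the equality of conditionals has been obtained; strict positivity is what guarantees the conditional densities are well-defined everywhere and that the cancellation above is valid.

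Finally I would feed $X \ind Y$ back into the first factorization:
\[
p(x,y,z) \;=\; \frac{p(x,y)\,p(y,z)}{p(y)} \;=\; \frac{p(x)\,p(y)\,p(y,z)}{p(y)} \;=\; p(x)\,p(y,z),
\]
which is precisely the statement $X \ind (Y,Z)$. The main obstacle, if any, is simply book-keeping around the positivity assumption so that all conditional densities are legitimately defined and divisions are justified; the underlying combinatorics of the two Markov chains do the rest of the work.
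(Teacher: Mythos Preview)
Your proof is correct and follows essentially the same approach as the paper. The paper writes the two Markov conditions directly as $p_{X\mid YZ}(x\mid y,z)=p_{X\mid Y}(x\mid y)=p_{X\mid Z}(x\mid z)$ and then integrates out $z$ against $p_Z$ to conclude $p_{X\mid Y}(x\mid y_0)=p_X(x)$, whereas you reach the same identity $p(x\mid y)=p(x\mid z)$ via the joint-density factorizations and then use a separation-of-variables argument; the final step (feeding $X\ind Y$ back into the factorization) is the same in both.
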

\begin{proof}
For any $x \in \real^{n_X}$, $y \in \real^{n_Y}$, and $z \in \real^{n_Z}$, we have that
\begin{align}\label{eq: entire}
p_{X|YZ}(x|y,z) = p_{X|Y}(x|y) = p_{X|Z}(x|z),
\end{align}
where we used the assumed strict positivity of the density of $(X,Y,Z)$ to write the above equations. Fix $y_0 \in \real^{n_Y}$. For any $z \in \real^{n_Z}$, we have 
\begin{align*}
p_{X|Z}(x|z) = p_{X|Y}(x | y_0).
\end{align*}
Integrating both sides of the above equality with respect to $p_Z(z)$, we obtain
\begin{align*}
p_X(x) = p_{X|Y}(x|y_0).
\end{align*}
Since $y_0$ was chosen arbitrarily, we conclude that $X \ind Y$. A similar argument shows that $X \ind Z$. Using equation \eqref{eq: entire}, we conclude that $X \ind (Y,Z)$.
\end{proof}

\begin{lemma}\label{lemma: xz_ind}
Let $X_1$ and $X_2$ be $\real^n$-valued random variables and let $(Z_1, Z_2) \ind (X_1, X_2)$ be such that $(Z_1, Z_2) \sim \cN(0, I_{2n \times 2n})$. If $(X_1 +Z_1) \ind (X_2+Z_2)$, then $X_1 \ind X_2$.
\end{lemma}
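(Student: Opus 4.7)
My plan is to work entirely with characteristic functions, exploiting the fact that Gaussian characteristic functions are strictly positive and so can be cancelled freely. Let $\phi_{(X_1,X_2)}$ denote the joint characteristic function of $(X_1,X_2)$ and, for $i=1,2$, let $\phi_{X_i}$ denote the marginal characteristic functions.

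First, I would express the joint characteristic function of $(X_1+Z_1, X_2+Z_2)$ at $(t_1,t_2) \in \real^n \times \real^n$. Since $(Z_1,Z_2)\ind(X_1,X_2)$ and $(Z_1,Z_2)\sim \cN(0,I_{2n\times 2n})$ (so in particular $Z_1\ind Z_2$), the joint characteristic function factors as
\begin{equation*}
\phi_{(X_1+Z_1,X_2+Z_2)}(t_1,t_2) = \phi_{(X_1,X_2)}(t_1,t_2)\cdot e^{-\|t_1\|^2/2}\cdot e^{-\|t_2\|^2/2}.
\end{equation*}
Applying the same reasoning to each marginal, one gets $\phi_{X_i+Z_i}(t_i) = \phi_{X_i}(t_i)\, e^{-\|t_i\|^2/2}$ for $i=1,2$.

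Next, I would invoke the hypothesis that $(X_1+Z_1)\ind (X_2+Z_2)$, which is equivalent (by uniqueness of characteristic functions) to the product formula
\begin{equation*}
\phi_{(X_1+Z_1,X_2+Z_2)}(t_1,t_2) = \phi_{X_1+Z_1}(t_1)\cdot \phi_{X_2+Z_2}(t_2) = \phi_{X_1}(t_1)\phi_{X_2}(t_2)\, e^{-\|t_1\|^2/2}e^{-\|t_2\|^2/2}.
\end{equation*}
Comparing with the previous display and cancelling the nowhere-vanishing factor $e^{-\|t_1\|^2/2}e^{-\|t_2\|^2/2}$ (this is the only nontrivial observation in the argument, and it is the reason Gaussian noise is used here rather than, say, uniform noise) yields
\begin{equation*}
\phi_{(X_1,X_2)}(t_1,t_2) = \phi_{X_1}(t_1)\phi_{X_2}(t_2)
\end{equation*}
for all $(t_1,t_2) \in \real^n\times\real^n$. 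By uniqueness of characteristic functions, this is equivalent to $X_1\ind X_2$, which completes the proof. There is no real obstacle here: the whole argument is a one-line cancellation in Fourier space, and the strict positivity of the Gaussian characteristic function is what makes the cancellation legal.
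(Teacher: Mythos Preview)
Your proof is correct and follows essentially the same approach as the paper: compute the joint characteristic function of $(X_1+Z_1,X_2+Z_2)$ in two ways and cancel the nowhere-vanishing Gaussian factor. The only cosmetic difference is that you write the Gaussian characteristic function explicitly as $e^{-\|t_i\|^2/2}$, whereas the paper leaves it as $\phi_{Z_1,Z_2}(t_1,t_2)$.
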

\begin{proof}
Using the independence of $(X_1+Z_1)$ and $(X_2+Z_2)$, we have that for any $t_1, t_2 \in \real^n$, 
\begin{align}
\phi_{X_1+Z_1, X_2+Z_2}(t_1, t_2) &:=  \E e^{i\langle t_1, X_1+Z_1\rangle  + i\langle t_2, X_2+Z_2 \rangle}\\
&= 
%\left( 
\E e^{i\langle t_1, X_1+Z_1\rangle} %\right) 
%\times 
%\left( 
\E e^{i\langle t_2, X_2+Z_2 \rangle } %\right)
\\
&= \E e^{i\langle t_1, X_1\rangle}
%\times 
 \E e^{i\langle t_2, X_2 \rangle }
%\times 
\E e^{i\langle t_1, Z_1\rangle}
%\times 
\E e^{i\langle t_2, Z_2 \rangle } \\
&= \phi_{X_1}(t_1) 
%\times 
\phi_{X_2}(t_2) 
%\times 
\phi_{Z_1, Z_2}(t_1, t_2).
\end{align}
However, using the independence $(X_1, X_2) \ind (Z_1, Z_2)$, we also have
\begin{align}
\phi_{X_1+Z_1, X_2+Z_2}(t_1, t_2) &=  \E e^{i\langle t_1, X_1+Z_1\rangle  + i\langle t_2, X_2+Z_2 \rangle}\\
&= \E e^{i\langle t_1, X_1\rangle + i \langle t_2, X_2\rangle} %\times 
 \E e^{i\langle t_1, Z_1\rangle + i \langle t_2, Z_2\rangle}\\
&= \phi_{X_1, X_2}(t_1, t_2) 
%\times 
\phi_{Z_1, Z_2}(t_1, t_2).
\end{align}
Since $\phi_{Z_1, Z_2}(\cdot, \cdot)$ has no zeros ($Z_i$'s being independent standard Gaussian random variables), we conclude that
\begin{equation}
\phi_{X_1, X_2}(t_1, t_2)  =  \phi_{X_1}(t_1) 
%\times 
\phi_{X_2}(t_2),
\end{equation}
that is, $X_1 \ind X_2$.
\end{proof}

\begin{lemma}\label{lemma: heat}
Let $X$ be an $\real^n$-valued random variable with density $p_X(x)$ and $Z \sim \cN(0, I_{n \times n})$ be independent of $X$. Suppose that   $\E[\Psi(X)] < \infty$ for some nonnegative continuous function $\Psi: \real^n \mapsto \real$, satisfying $\int_{\real^n} e^{-\Psi(x)} dx < \infty$ and $\lim_{\delta \to 0} \E[\Psi(X + \sqrt{\delta} Z)] = \E[\Psi(X)]$.  (Note that, for instance, $\Psi(X) = \|X\|_p, p \geq 1$ satisfies the conditions.) 
Then the following equality holds:
\begin{equation}
    \lim_{\delta \to 0} h(X + \sqrt \delta Z) = h(X).
\end{equation}
\end{lemma}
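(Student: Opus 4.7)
The plan is to prove convergence by establishing matching lower and upper bounds on $\liminf_{\delta \to 0} h(X + \sqrt \delta Z)$ and $\limsup_{\delta \to 0} h(X + \sqrt \delta Z)$, both equal to $h(X)$. Throughout, write $p_\delta$ for the density of $X_\delta := X + \sqrt \delta Z$ and $p_X$ for the density of $X$. Observe first that the conditions imply $h(X)$ is well-defined and finite from above: using $q(y) := e^{-\Psi(y)}/Z_\Psi$ with $Z_\Psi := \int e^{-\Psi(y)}\, dy < \infty$ as a reference probability density, Gibbs' inequality gives $h(X) \le \mathbb{E}[\Psi(X)] + \log Z_\Psi < \infty$, so $h(X) \in [-\infty, \infty)$.

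For the lower bound, the standard conditioning argument yields
\[
h(X_\delta) \;\ge\; h(X_\delta \mid Z) \;=\; h(X \mid Z) \;=\; h(X),
\]
valid for each $\delta > 0$ since $X \perp Z$ and $X_\delta$ admits a density (being the convolution of $p_X$ with a Gaussian). Hence $\liminf_{\delta \to 0} h(X_\delta) \ge h(X)$ immediately.

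For the upper bound, the idea is to compare against the Gibbs reference density $q$. For any density $p$ on $\mathbb{R}^n$ with $\int p\, \Psi < \infty$, one has the identity
\[
h(p) \;=\; \mathbb{E}_p[\Psi] + \log Z_\Psi - D(p \,\|\, q),
\]
so that
\[
h(X_\delta) - h(X) \;=\; \bigl(\mathbb{E}[\Psi(X_\delta)] - \mathbb{E}[\Psi(X)]\bigr) - \bigl(D(p_\delta \,\|\, q) - D(p_X \,\|\, q)\bigr).
\]
The first bracket tends to $0$ by the hypothesis $\lim_{\delta \to 0} \mathbb{E}[\Psi(X+\sqrt\delta Z)] = \mathbb{E}[\Psi(X)]$. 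It therefore remains to control the second bracket, which will yield $\limsup h(X_\delta) \le h(X)$ provided $\liminf_{\delta \to 0} D(p_\delta \,\|\, q) \ge D(p_X \,\|\, q)$.

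The main obstacle, and the only non-routine step, is establishing this lower-semicontinuity of $D(\cdot \,\|\, q)$ along the smoothing family. The plan is to invoke the Donsker--Varadhan variational representation
\[
D(p \,\|\, q) \;=\; \sup_{f \in C_b(\mathbb{R}^n)} \Bigl\{ \int f\, p\, dy - \log \int e^f\, q\, dy \Bigr\},
\]
together with the weak convergence $X_\delta \Rightarrow X$ as $\delta \to 0$ (which holds because $\sqrt\delta Z \to 0$ in probability). For each bounded continuous test function $f$, we get $\int f\, p_\delta \to \int f\, p_X$ and the second term is independent of $\delta$, so each inner functional is continuous in $\delta$; taking the supremum over $f$ preserves the liminf and gives the desired lower-semicontinuity. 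If $h(X) = -\infty$, then $D(p_X \,\|\, q) = +\infty$, and the same liminf argument forces $D(p_\delta \,\|\, q) \to +\infty$, hence $h(X_\delta) \to -\infty = h(X)$; if $h(X) > -\infty$, the two bounds sandwich $h(X_\delta)$ to $h(X)$. This completes the proof.
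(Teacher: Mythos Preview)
Your proof is correct and follows essentially the same strategy as the paper's: both establish the lower bound via $h(X_\delta)\ge h(X_\delta\mid Z)=h(X)$ (the paper phrases this as $I(Z;X_\delta)\ge 0$), and both obtain the upper bound from the identity $h(p)=\E_p[\Psi]+\log Z_\Psi-D(p\|q)$ together with lower semicontinuity of $D(\cdot\|q)$ under the weak convergence $X_\delta\Rightarrow X$. The only difference is cosmetic: the paper cites Posner's lower-semicontinuity theorem for relative entropy, whereas you derive it from the Donsker--Varadhan variational formula; both are standard and yield the same conclusion.
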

\begin{proof}
Our proof relies on the following (lower semi-continuity) result from Posner \cite[Theorem 1]{Pos75}: If $P_m, Q_m$ are Borel probability distributions on a Polish space with $P_m \stackrel{w}{\Rightarrow} P$ and $Q_m \stackrel{w}{\Rightarrow} Q$, then
$$  D(P\|Q) \leq \liminf_m D(P_m\|Q_m), $$
where $D(P\|Q)$ denotes the relative entropy of the distribution $P$ with respect to the distribution $Q$. Picking an arbitrary sequence $\{\delta_m\}_{m \geq 1}$ that converges to $0$, let $X_m = X + \sqrt \delta_m Z.$ Using characteristic functions (or otherwise), it is easy to check that $X_m$ converges to $X$ in distribution.
Let $P_m$ denote the distribution  of $X_m$ and $P$ denote the distribution of $X$. Let $Q_m=Q$ be the distribution corresponding to the density function $Ce^{-\Psi(x)}$. Note that
$$D(P_m\|Q) = \E[\Psi(X+ \sqrt{\delta_m}Z)] - h(X + \sqrt{\delta_m}Z) - \log C. $$
Therefore, we have
\begin{align*}
 & \E[\Psi(X)] - h(X) - \log C \\
 & = D(P\|Q) \stackrel{(a)}{\leq} \liminf_m D(P_m\|Q) \\
 & \quad \leq \liminf_m \Big\{\E[\Psi(X + \sqrt{\delta_m}Z)] - h(X + \sqrt{\delta_m}Z) - \log C\Big\}.\\
 & \quad \stackrel{(b)}{=} \E[\Psi(X)]  - \limsup_m h(X + \sqrt{\delta_m}Z) - \log C.
\end{align*}
Here $(a)$ follows from the Posner's result and $(b)$ follows from  assumption (2). Hence
\begin{equation}\label{eq: limsup_XZ}
    \limsup_{m \to \infty} h(X + \sqrt{\delta_m}Z) \leq h(X).
\end{equation}
On the other hand, non-negativity of mutual information, $I(Z;X + \sqrt{\delta_m} Z) \geq 0$, yields $ h(X+\sqrt \delta_m Z) \geq h(X).$ Taking the $\liminf$ on both sides of this equality, we conclude
\begin{equation}\label{eq: liminf_XZ}
    \liminf_{m \to \infty} h(X + \sqrt \delta_m Z) \geq h(X).
\end{equation}
Inequalities \eqref{eq: limsup_XZ} and \eqref{eq: liminf_XZ} yield the equality 
\begin{equation}
    \lim_{m \to \infty} h(X + \sqrt \delta_m Z) = h(X),
\end{equation}
and concludes the proof.
\end{proof}

\section{Supporting results for Claim \ref{claim: sufficient_finite}}

\begin{lemma}\label{lemma:ball}
%{\em Lemma}:\\
Given subspaces $K_i \subseteq \mbbR^{r_i}$ for $1 \le i \le k$, with $s_i := \mdim(K_i)$,
let $K := K_1 \times \ldots \times K_k$ denote the corresponding $\mbfr$-product subspace of 
$\mbbR^n$, where $n := \sum_{i=1}^k r_i$. Let $A_1 = \left[ A_{11} \ldots A_{1k} \right]$,
with $A_{1i}$ an $n_1 \times r_i$ matrix of rank $r_i$ for $1 \le i \le k$ as above. 
Then there is some $\eta > 0$ such that for all choices of $(K_i, 1 \le i \le k)$
where at least one $s_i$ is strictly positive, for all unit vectors $x \in A_1 K$
(i.e. $x^T x = 1$), there exists some unit vector $v_i \in A_{1i} K_i$ for some 
$1 \le i \le k$ such that $|x^T v_i| \ge \eta$.
\end{lemma}
\begin{proof}
%{\em Proof of the lemma}:\\
Suppose to the contrary that we can find a sequence
$( (x(t), (K_1(t) \ldots, K_k(t))), t \ge 1)$ of unit vectors and subspaces that violates the 
condition, i.e. such that
\[
\lim_{t \to \infty} \sup_{1 \le i \le k} \sup_{v_i \in A_i K_i(t) : v_i^T v_i = 1} |x(t)^T v_i| = 0.
\]
By going to a subsequence if necessary we can assume that there exist some choices of $1 \le s_i \le r_i$ for
$1 \le i \le k$ with at least one of the $s_i$ being strictly positive, such that we have
$\mdim(K_i(t)) = s_i$ for all $t \ge 1$. Since the space of all $s_i$-dimensional subspaces of 
$\mbbR^{r_i}$ is compact in the usual topology (i.e. as the corresponding Grassmanian), by going
to a further subsequence if necessary we can assume that each $K_i(t)$ converges to a limit $K_i$ as $t \to \infty$,
where $\mdim(K_i) = s_i$. Since the set of unit vectors in $\mbbR^{n_1}$ is compact, by going to a further
subsequence if necessary we can assume that $x(t)$ converges to a unit vector $x \in \mbbR^{n_1}$ as 
$t \to \infty$. Since we have $x(t) \in A_1 K(t)$ for all $t \ge 1$ (where 
$K(t) := K_1(t) \times \ldots \times K_k(t)$), we must have $x \in A_1 K$ 
(where $K := K_1 \times \ldots \times K_k$). We thus have 
$x^T v_i = 0$ for all unit vectors $v_i \in A_1 K_i$ for all $1 \le i \le k$. But this is a contradiction,
because $x$ is itself in the linear span of such vectors.  
\end{proof}
%\hfill $\Box$
~\\

%{\em Corollary}:\\
\begin{corollary}       \label{corollary:ball}
There is positive constant $\delta^2 > 0$ depending only on 
$W_1, \ldots, W_k$ (and in particular not depending on the 
$(\hatSigma_i, 1 \le i \le k)$ or the choices of the bases
$\{b_{i1}, b_{i2}, \ldots, b_{i r_i} \}$ for $1 \le i \le k$)
such that, for all $1 \le l \le n'$, we have
\[
M_l \succeq \delta^2 C_l,
\]
where $C_l$ is a positive semidefinite matrix all of whose eigenvalues are either $0$ or $1$
and where the subspace corresponding to $C_l$ is $A_1 V_l$.
\end{corollary}
%{\em Proof of the corollary}:\\
\begin{proof}
Let $\eta > 0$ be as in the Lemma. For each unit vector $x \in A_1 V_l$ there exists some $1 \le i \le k$
and a unit vector $v_i \in A_1 V_{il}$ such that $|x^T v_i| \ge \eta$. 
Since $\{\tilb_{i1}, \ldots, \tilb_{i s_{il}}\}$ is an orthonormal basis for 
$A_1 V_{il}$, This means means that there is some $1 \le u_i \le s_{il}$ such that
$|x^T \tilb_{i u_i}| \ge \delta$, where we define $\delta := \frac{1}{n} \eta$ 
and we have used $s_{il} \le r_i \le n$. 
Recalling that $M_l := \sum_{i=1}^k \sum_{u_i =1}^{s_{il}} \tilb_{i u_i} \tilb_{i u_i}^T$,
it follows that
\[
x^T M_l x_l \ge \delta^2.
\]
Since this holds for all unit vectors $x \in A_1 V_l$, this proves the corollary. 
\end{proof}
%\hfill $\Box$
\color{black}
\end{appendix}

\end{document}